\documentclass[a4paper, 11pt]{article}
\usepackage[T1]{fontenc}
\usepackage[latin1]{inputenc}
\usepackage{appendix}
\usepackage{amsmath}
\usepackage{amsthm}
\usepackage{amsfonts}
\usepackage{amssymb}
\usepackage{authblk}
\usepackage{bm}
\usepackage{dsfont}

\newtheorem{lemma}{Lemma}
\newtheorem{proposition}{Proposition}
\newtheorem{corollary}{Corollary}
\newtheorem{definition}{Definition}
\newtheorem{theorem}{Theorem}

\newtheorem{example}{Example}

\newtheorem*{remark}{Remark}

\usepackage{natbib}
\usepackage[top=3cm,bottom=3cm,left=3cm,right=3cm]{geometry}
\usepackage{setspace}
\usepackage{xcolor}
\usepackage{graphicx}
\usepackage{hyperref}
\usepackage{comment}
\usepackage{subfig}
\usepackage{tikz}
\usetikzlibrary{graphs}
\usetikzlibrary{arrows.meta}
\usepackage[normalem]{ulem}
\usepackage{enumitem}

\begin{document}

\title{Causal Persuasion\footnote{We are grateful to Murali Agastya, Chiara Aina, Kadir Atalay, Jean-Michel Benkert, Andreas Bjerre-Nielsen, Juan Carlos Carbajal, Martin Dufwenberg, Andrew Ellis, Christian Gillitzer, Benjamin Golub, Alessandro Ispano, Anton Kolotilin, Klaus Kultti, Jay Lu, Joel Sobel, as well as seminar and conference audiences at Copenhagen, Sydney, UNSW, NET, EWMES 2025, AETW 2026 for the many valuable comments. } 
}
\author[1]{Anastasia Burkovskaya}
\author[2]{Egor Starkov\footnote{Burkovskaya: anastasia.burkovskaya@sydney.edu.au; Starkov: egor.starkov@econ.ku.dk.}}
\affil[1]{University of Sydney, School of Economics}
\affil[2]{University of Copenhagen, Department of Economics}

\maketitle

\vspace{-20pt}
\begin{abstract}
	Correlation is not causation... until the right variables are on the table. We propose a model of causal persuasion, in which a sender selectively discloses variables alongside a subjective causal model linking them. Persuasion requires that the model rules out rival explanations, on top of being consistent with the underlying data distribution. To establish a causal link, the sender often needs to disclose at most two well-chosen variables, whereas dispelling a perceived link, every common cause must be disclosed. This highlights a fundamental asymmetry: Establishing causality is often much easier than ruling it out.
	\\
	
	\noindent {\textbf{JEL-codes}}: D83\\
	
	\noindent {\textbf{Keywords}}: causal persuasion, causal models, directed acyclic graphs, strategic communication
\end{abstract}

\section{Introduction}

Media offers \emph{stories}. Politicians tell \emph{stories}. Advertising pushes \emph{stories}. Economists debate \emph{stories}.
``\emph{Why} Investors Are Worried About Japan's Bond Market.''\footnote{Bloomberg, Jul 23, 2025. URL: \url{https://www.bloomberg.com/news/articles/2025-07-23/why-japan-s-bond-market-has-investors-worried}, retrieved Aug 08, 2025.}
``\emph{Because} of Tariffs, our Economy is BOOMING!''\footnote{Donald Trump, Truth Social, Jun 03, 2025. URL: \url{https://truthsocial.com/@realDonaldTrump/posts/114617666432673584}, retrieved Aug 08, 2025.}
``You're Not You \emph{When} You're Hungry''\footnote{\url{https://www.youtube.com/watch?v=OTPJYZLD6L8}, retrieved Aug 08, 2025.}
``Do Financial Concerns \emph{Make} Workers Less Productive?''\footnote{\citet{kaur_financial_2025}}
All of these examples suggest causal relationships that rationalize the observed data. Some stories try their best to offer a good-faith explanation of the data. Some of those fail despite their best efforts.\footnote{``Leaders: Stop Confusing Correlation with Causation.'' Harvard Business Review, Nov 05, 2021. URL: \url{https://hbr.org/2021/11/leaders-stop-confusing-correlation-with-causation}, retrieved Aug 08, 2025.} 
Others come up with a spin that is favorable for them. What makes a causal story persuasive? How can a persuader instill a causal story with a listener? What does it take to debunk a story? These are the questions that we tackle in this paper. 

We develop a model of \emph{causal} persuasion in which a sender tries to persuade a receiver about a particular causal relationship. The receiver may or may not initially have knowledge of some real-world variables and a subjective causal model that links them. The sender can disclose some variables to either persuade the receiver of some causal model or to falsify the receiver's model and replace it with a new one. We provide conditions under which persuasion is possible and provide blueprints for effective persuasion mechanisms. We show that some properties of the true causal model can make it ``easy'' to persuade the receiver of a certain causal connection and/or to debunk the receiver's faulty model. Without these properties, the respective tasks can become ``difficult'' or even impossible. Finally, we show that the sender may be able to instill an incorrect model even if they are not able to lie about the factual data.

Consider the classic question of whether education raises wages. The correlation between the two is not disputed, but various rival stories are compatible with it: that education genuinely improves earnings, that higher-earning careers induce more schooling, or that underlying ability drives both. Persuasion in this context is not about disclosing more favorable correlation numbers, but about establishing the causal relation. Disclosing the right \emph{variables} can achieve this if their observed relationships rule out the rival stories. 

The same logic applies to debunking narratives. For example, the belief that immigration raises crime is popular and may fit the raw correlation, but does not seem to be grounded in reality.\footnote{Causal studies by \citet*{bianchi2012,marie_2024} find no effect of immigration on overall crime.}
Debunking this narrative may then have less to do with minimizing the perceived correlation and more with identifying the right confounding variables---conditioning on them can break the pattern. These are exactly the issues explored in our paper: when can revealing extra variables instill or debunk narratives; when do one or two variables suffice and which ones; and which outcomes can no disclosure ever achieve?

Specifically, we assume that there is a sender who observes the true causal model related to some variables of interest $x$ and $y$, captured by a directed acyclic graph (DAG). The sender can selectively disclose to a receiver some of the variables from this graph and propose a reduced model explaining them, which is accepted only if it is consistent with the data.
We show that a na{\"i}ve receiver, who treats such consistency as sufficient to adopt a model, can always be persuaded that $x$ affects $y$ without disclosing any extra variables, as long as $x$ and $y$ are correlated. In contrast, persuading a sophisticated receiver---who only adopts a narrative regarding $x$ and $y$ if all other narratives are ruled out by the data---is a nontrivial task. 

When the sender's narrative is consistent with the data, revealing only one or two carefully chosen variables in addition to $x$ and $y$ may be enough to persuade a sophisticated receiver. These variables allow the receiver to rule out the other possibilities and conclude that $x$ indeed causes $y$, as the sender suggests. However, if no such appropriate variables exist, then the sender's model may need to be impractically large to be persuasive. Persuasion may be difficult (in terms of required model size) or even impossible in this case. In particular, persuading the receiver of a causal direction opposite to the truth is altogether impossible (Theorem~\ref{thm:no_persuade_defective}). Despite this, the sender \emph{can} sometimes mislead the receiver about correlation arising from confounding variables and present it as causation, since it does not directly contradict the true causal graph. Further, under some conditions, whenever persuasion is possible, it is optimal for the sender to propose a model that directly links $x$ and $y$, as opposed to revealing all the proxy variables that (according to the sender) mediate the causal effect (Theorem~\ref{thm:direct}).

We then consider the case when the receiver has some pre-existing subjective model. The sender needs to debunk this model before they can persuade the receiver of anything else. We show that to be falsifiable, the receiver's model \emph{must} be causally incorrect. Merely omitting variables is not enough. Rather, this omission must lead to the receiver either ``flipping'' some causal connections relative to the true model, or creating new causal links that are not present in the true model.

If the receiver's model \emph{can} be debunked, doing so is qualitatively similar to persuading a sophisticated receiver with a blank mind. In particular, it can be done by revealing at most two appropriate variables if they exist (Theorem~\ref{thm:debunk_simple}) and can be arbitrarily difficult otherwise.
To illustrate the latter case, debunking some causal link (e.g., that immigration causes crime) and persuading the receiver that no such link exists requires identifying and revealing \emph{all} common causes of the two variables. Because there may be arbitrarily many such confounders, this can be very burdensome for both sender and receiver. In practice, disproving a causal link can often be harder than persuading the receiver that the link runs in the opposite direction---meaning deception can be easier than establishing the truth. This stands in contrast to the classical Humean asymmetry for universal claims (e.g., ``all swans are white''), which are difficult to verify but easy to falsify \citep{hume_1748,alnajjar_2014}. The difference arises because, unlike universal claims, ruling out a causal relationship requires eliminating all competing explanations.

\medskip 

We next review relevant literature in Section \ref{sec:literature} and present an illustrative example in Section \ref{sec:example}. The baseline model is set up in Section \ref{sec:model}, although the fully specified game is introduced only in Section \ref{sec:microfound}. The main results are contained in Sections \ref{sec:persuade_blank} and \ref{sec:preexist}. Section \ref{sec:applications} discusses additional implications of the model in the context of the illustrative example and another application. Section \ref{sec:discussion} discusses some of the assumptions behind the model and outlines directions for future research.
Most formal proofs are relegated to Appendix \ref{sec:proofs}. Supplementary Appendix \ref{sec:examples} offers additional examples to illustrate selected concepts from the main text.

\section{Related Literature} \label{sec:literature}

Our paper connects the literature on misspecified causal models with the literature on persuasion, especially model persuasion. For an introduction to causal models and Bayesian networks, see \citet{pearl_causality_2009} for a textbook treatment and \citet*{guo_survey_2021} or \citet*{zanga_survey_2022} for recent surveys of that literature. For a more comprehensive overview of the fast-growing literature on mental models in economics, see an excellent survey by \citet*{ambuehl_mental_2026}.

\paragraph{Misspecified causal models.}
\citet{spiegler_bayesian_2016} pioneered the Bayesian-network approach to causal misperceptions in economics. In this model, an agent fits a subjective DAG to the objective data, which results in a biased belief. This paper and the program that followed (surveyed in \citealp{spiegler2020survey}; see also \citealp{spiegler_can_2020,spiegler2026bci}; and \citealp{ellis_limited_2026}) trace the behavioral consequences of holding a misspecified model. 
We invert this approach by assuming the receiver starts off with no model, and instead ask how causal misperceptions could arise as a result of a strategic interaction with a sender who wants to instill a specific model.
As one example illustrating the importance of this agenda, \citet{spiegler2022reverse} studies the consequences of an exogenously assumed causal reversal. Our Theorem~\ref{thm:no_persuade_defective} provides the conceptual counterpart, showing that no such reversal can be \emph{induced} in a sophisticated receiver.

\paragraph{Model persuasion.}
\citet{schwartzstein_using_2021}, \citet{aina_tailored_2025}, and \citet{ispano_perils_2025} explore models in which the sender offers models capturing the joint distributions of different variables, and the receiver adopts them according to their fit with the observed data sample. \citet{bauch2025} construct a similar model in which the receiver initially faces ambiguity regarding the true model. All of these papers vary the interpretation of some fixed evidence. We instead allow the sender to vary the evidence by choosing which variables the receiver observes.
 
Other recent papers give the sender control over both information and its interpretation. These include \citet{eliaz_news_2024} on news media and \citet*{ba2026worldviews} on jointly designed signals and narratives, and \citet{ichihashi_meng} who study the joint design problem generally. Our sender also controls both instruments, but instead of disclosing variable realizations the sender can only disclose whole variables.
 
The acceptance criterion we use is different from all of the aforementioned papers: instead of evaluating the fit to a finite data sample, our receiver demands consistency with the whole data distribution. This restricts belief movement among models that explain the evidence and limits the set of plausible models the sender can propose. However, we show that deception is still possible with our approach. \citet*{eliaz_cheating_2021} bridge the two approaches by deriving a bound on the correlation distortion that a proposed DAG can generate given the data.

\paragraph{Procedural receivers and evidence.}
Beyond fit-based and consistency-based model evaluation rules, other criteria exist in the literature.
\citet{eliaz_wasonian_2025} study persuasion of a receiver following a non-Bayesian sample-based procedure, in the lineage of \citet{glazer2006,glazer2012,glazer2014}. Their sender proposes a causal DAG to the receiver, who then uses a single random data point to evaluate the proposed model. They characterize the distribution of DAGs that prevails in society as a result of the sender's strategic concerns.\footnote{\citet*{montielolea2022,eliaz_model_2020,eliaz_false_2025}; and \citet{ba_robust_2026} provide alternative characterizations of models that would be predominant in society in an equilibrium, but do so in the absence of a strategic sender and with different model selection/survival criteria.}
\citet*{eliaz2021interpretations} propose another model persuasion setting with boundedly-rational receivers: they let the sender supply a model that interprets the data in a setting where the receiver does not make strategic inferences from the supplied model.

Experimentally, \citet{ambuehl2025} document both fit-based model selection and worst-case reasoning when subjects cannot identify the correct model. \citet{aina_weighting_2026} show that subjects are not Bayesian regarding model fit, and many rely only on the best-fitting model. More generally, \citet{kendall_complexity_2024} show model formation is cognitively costly. In turn, \citet{kendall2022narratives} and \citet{barron2025} show that supplied causal narratives move beliefs, validating the whole model persuasion approach.

\paragraph{Strategic disclosure.}
More broadly, our work relates to the literature on strategic communication that has long explored the question ``what makes communication persuasive?'' See \citet{little_bayesian_2023} for a brief overview of the literature. Specifically, our model is connected to the literature on disclosure of verifiable information, in which the sender can withhold data but cannot produce fake data (\citealp{grossman_disclosure_1980,milgrom_good_1981}; see \citealp{dranove_quality_2010} for an overview and \citealp{di_tillio_strategic_2021} for a recent alternative model). The literature on Bayesian Persuasion considers a problem that is similar to disclosure models, but allows the sender to design information in a rich way (\citealp{kamenica_bayesian_2011}; see \citealp{bergemann_information_2019} and \citealp{kamenica_bayesian_2019} for recent overviews).
Unlike both strands, our model allows the sender to disclose variables rather than their realizations, which has implications for which causal models remain plausible. In particular, the sender in our model can offer an incorrect model to go with factually correct data. 


\section{Illustrative example} \label{sec:example}

Consider the following illustrative example. It is well established that an MBA degree is correlated with higher lifetime earnings.\footnote{``The MBA Premium: What MBAs Earn Over A Lifetime Will Shock You'' Poets and Quants, May 05, 2021. URL: \url{https://poetsandquants.com/2021/05/05/lifetime-earnings-of-mbas/}, retrieved Nov 21, 2025.} 
However, it may not be completely clear which way the causal link goes. Business schools naturally want everyone to believe that their degrees offer substantial value to prospective students. However, an employee weighing whether to leave their job for an MBA may also consider the possibility that the degree has no real value, and that the causal link runs in the opposite direction---high earners would succeed regardless and merely self-select into earning an MBA. An employer is interested in sending the employee to obtain an MBA only if it is valuable.

Suppose, for the sake of argument, that the true causal graph is as presented in Figure~\ref{fig:pic3}(a): education $e$ (MBA degree) neither increases earnings $w$ nor is caused by them. Instead, the two are correlated due to two confounding variables, a person's ability $a$ and social skills $s$, that both affect education $e$ and earnings $w$.\footnote{For example, \citet{tamborini_education_2015} show using US panel data that while the effect of education is far from zero, ``accounting for key covariates reduces the estimated lifetime earnings return to college education by approximately 30\%'' (pp.1396--1397) and the return to graduate degrees by approximately 20\% (Fig. 2).} 
In addition, earnings $w$ are also affected by job experience/tenure $t$. Can the business school persuade the employee that education affects earnings? If so, can the employer then debunk this view and persuade the employee that no actual link exists between $e$ and $w$?

\begin{figure}
	\begin{center}
		
	\begin{minipage}[b]{0.45\textwidth}
		\centering 
		\includegraphics[scale=0.3]{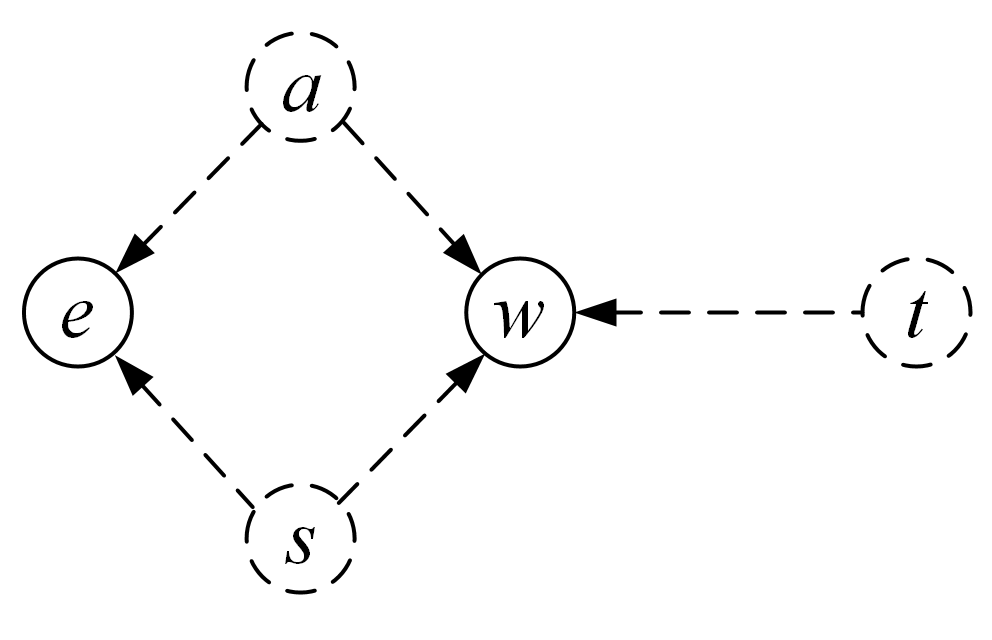}\\
		(a) true model\\
	\end{minipage}
	\begin{minipage}[b]{0.45\textwidth}
		\centering 
		\includegraphics[scale=0.3]{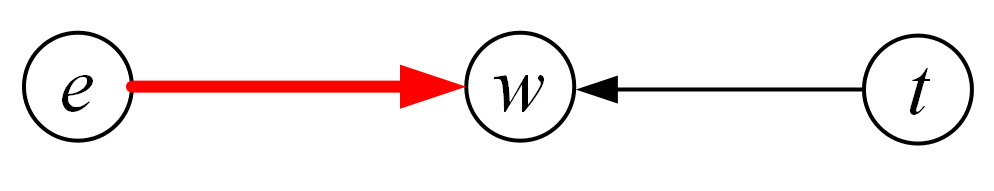}\\
		(b) subjective model\\
	\end{minipage}
	
	\caption{The true DAG and the employee's model after the promo campaign. \label{fig:pic3}}
	\end{center}
	\footnotesize
	\textsc{Notes:} Throughout the paper, the figures use dashed variable outlines and links to represent the items that are not (made) known to the receiver. Bold red arrows represent defective links.
\end{figure}

Suppose the business school launches a promotional campaign around variables $e$, $w$, and $t$ under the slogan ``Experience $+$ MBA $=$ More Pay!'' This campaign makes the employee aware of variable $t$ and how this variable is connected to $e$ and $w$ in the real world. Specifically, they see that $e$ is independent of $t$ unconditionally, but the two are correlated conditional on $w$ (conditional on high earnings $w$, a person is more likely to be experienced---high $t$---and educated---high $e$). The employee infers that this is only possible if $e$ and $t$ jointly determine $w$, as in Figure~\ref{fig:pic3}(b). Indeed, the conditional correlation $(e \not\perp t \mid w)$ suggests that there must be \emph{some} connection between $e$ and $t$. However, if $e \to w \to t$ or $e \leftarrow w \leftarrow t$, or $e \leftarrow w \to t$ (or $e \to t$, or $e \leftarrow t$), then $e$ and $t$ would also have been \emph{unconditionally} correlated, which is not the case in the data. The model in Figure~\ref{fig:pic3}(b) is the only causal model over $e,w,t$ that is consistent with the observed data on these variables. Therefore, after hearing the ad campaign, the employee has no choice but to conclude that the correlation is generated not by selection $e\leftarrow w$, but by a causal link $e\to w$. The promotional campaign successfully persuades the employee by disclosing an additional variable on top of two variables of interest. Notably, persuasion is effective despite the suggested narrative being untruthful.

Suppose the employer, in turn, knows that the degree does not actually increase the employee's value (and, therefore, earnings $w$). When the employee asks the employer to fund their education, the employer seeks to make this lack of causal connection clear---to debunk the model that the business school instilled. Suppose the employer first tells the employee that ability $a$ is a common factor that determines both educational choices $e$ and earnings $w$. The employee can see that $a$ is indeed correlated with both $e$ and $w$, and sees no contradiction in the data to the suggestion that $a$ is a common cause of $e$ and $w$. At the same time, the employee sees no contradiction to education $e$ increasing earnings $w$, since it is clear from the data that $e$ and $w$ are correlated even conditional on $a$. The employee's subjective model after this conversation is presented in Figure~\ref{fig:pic4}(b). Disclosing one common cause is insufficient for the employer to debunk the causal link $e \to w$ that the employee believes in.

\begin{figure}
	\centering
		
	\begin{minipage}[b]{0.45\textwidth}
		\centering 
		\includegraphics[scale=0.3]{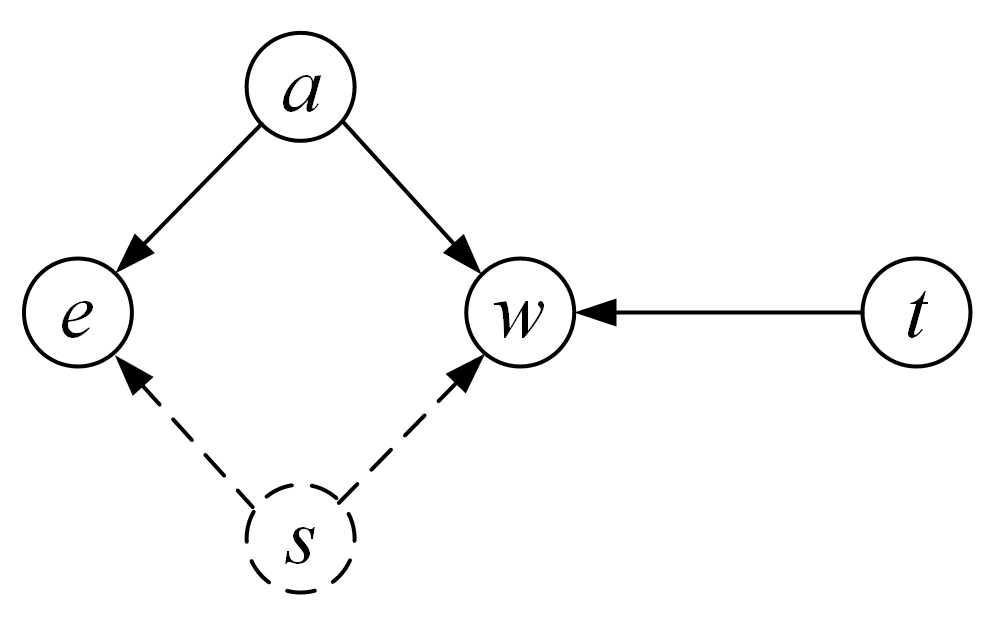}\\
		(a) true model\\
	\end{minipage}
	\begin{minipage}[b]{0.45\textwidth}
		\centering 
		\includegraphics[scale=0.3]{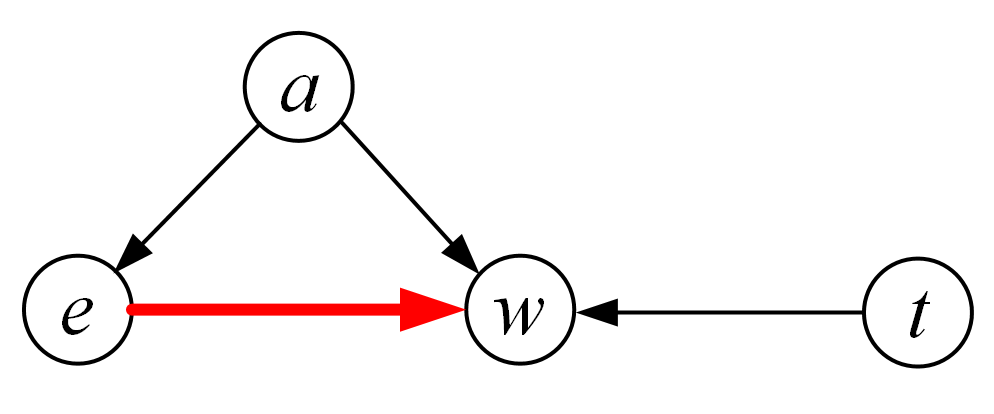}\\
		(b) subjective model\\
	\end{minipage}
	
	\caption{Employee's model after the first chat with the employer. \label{fig:pic4}}
\end{figure}

Indeed, to convince the employee that $e$ does not directly affect $w$, the employer would have to disclose all common factors influencing both $e$ and $w$, which in this example include $a$ and $s$, but in general there could be many more. So long as there is even a single confounding variable not known to the receiver, they see that $e$ and $w$ are correlated conditional on everything else they know, suggesting that $e$ and $w$ must be causally linked. While it was easy for the business school to persuade the employee that the link runs in the opposite direction---requiring disclosure of only one variable---it is far harder for the employer to convince the employee that no direct link exists, as this demands revealing \emph{all} relevant variables.

\section{Model} \label{sec:model}

\subsection{Setup}

\paragraph{The world.}
The world is described by a causal directed acyclic graph (DAG) $M_t=(\Omega_t,C_t)$ called \emph{the true model}, where $\Omega_t$ is a set of \emph{variables}, and $C_t:\Omega_t \to 2^{\Omega_t}$ describes directed links capturing the causal relations between them. For any variable $a \in \Omega_t$, relation $b \in C_t(a)$ is denoted equivalently as $b \to_t a$ and described as ``$b$ \emph{has a causal effect} on $a$'' and ``$b$ is a parent of $a$''. 

Each node $a \in \Omega_t$ represents a random variable distributed according to some p.d.f. $a \sim f_a \left( \cdotp | C_t(a) \right)$. We let $P$ denote the joint density of all variables, so for any vector of realizations of $\Omega_t$: $P(\Omega_t) \equiv \prod_{a \in \Omega_t} f_a \left( a | C_t(a) \right)$. 
For any subset of variables $\Omega \subset \Omega_t$, let $P|\Omega$ denote the marginal density of $\Omega$. In what follows, we refer to the full distribution $P$ and all marginals $P|\Omega$ as the \emph{data}. We also assume that if a player is aware of set $\Omega$ of variables, then they observe the data $P|\Omega$, meaning that they observe the whole true joint distribution of these variables.
This allows us to abstract from statistical issues and focus purely on causal discovery.
Given some $S \subset \Omega_t$ and $a,b \in \Omega_t \setminus S$, we use the notation $(a \perp b \mid S)$ if $a$ and $b$ are statistically independent conditional on $S$ in $P$, and $(a \not\perp b \mid S)$ if the converse is true. 

\paragraph{Subjective models and consistency.}
We define a \emph{(subjective) model} of the world as a DAG $M=(\Omega,C)$ with $\Omega \subseteq \Omega_t$. Given a model $M=(\Omega,C)$, $b \in C(a)$ for some $a,b \in \Omega$ denotes the same parent relation as in the true model. Further, let $\bar{C}(a)$ denote the set of predecessors of any $a \in \Omega$: $b \in \bar{C}(a)$---equivalently denoted as $b \Rightarrow a$ and labeled as $a$ being a \emph{descendant} of $b$---is true if and only if there exists a path $b \to ... \to a$ in $C$. We say that variables $a,b \in \Omega$ are \emph{adjacent} if $a \to b$ or $a \leftarrow b$. We say that $a,b$ are \emph{correlated} if either $a \Rightarrow b$ or $a \Leftarrow b$, or there exists a common ancestor $c \in \bar{C}(a) \cap \bar{C}(b)$.
We call a triplet of variables $a-b-c$ (where $b$ is adjacent to $a$ and $c$, but $a$ and $c$ are not adjacent) a \emph{collider} if $a \to b \leftarrow c$, a \emph{chain} if $a \to b \to c$ or $a \leftarrow b \leftarrow c$, and a \emph{fork} if $a \leftarrow b \to c$. Thus, throughout the main text, ``collider'' refers to what the graphical-models literature calls an unshielded collider.

We now introduce two important definitions that bridge the topology of a model and the data it produces. They specify which graphical properties must be reflected in the data. The causal discovery literature focuses on independencies: two variables must be statistically correlated if and only if they are correlated in the model in the sense defined above. The following definition expands this notion to also apply to \emph{conditionally} (in)dependent variables.

\begin{definition}[d-separation]\label{def:dsep}
	Given a model $M=(\Omega,C)$, let $a,b\in\Omega$ and $S\subseteq\Omega\setminus\{a,b\}$. The set $S$ \emph{d-separates} $a$ and $b$ in $M$ if every path between $a$ and $b$ contains a non-endpoint variable $d$ satisfying at least one of the following:
	\begin{enumerate}[nosep]
		\item the two links of the path adjacent to $d$ both point toward $d$, and neither $d$ nor any descendant of $d$ belongs to $S$ ($d \notin S$ and $\forall e \in S: d \not\Rightarrow e$); or
		\item the two links of the path adjacent to $d$ do not both point toward $d$, and $d\in S$.
	\end{enumerate}
	When the model is not specified, d-separation refers to the true model $M_t$.
\end{definition}

Intuitively, d-separating  $a$ and $b$ means that set $S$ controls for all sources of correlation between them. This means blocking (i.e., having an element on) every directed path $a \Rightarrow b$, as well as all paths that pass through confounders (common correlation sources), $a \Leftarrow c \Rightarrow b$. The collider part of the definition is trickier: the converging-arrow path $a \to c \leftarrow b$ does not produce any correlation between $a$ and $b$ by itself, but $a,b$ will be correlated conditional on $c$. Hence, $S$ must \emph{not} include $c$ or any of its descendants.\footnote{Definition~\ref{def:dsep2} in Appendix \ref{sec:proofs} presents an alternative equivalent characterization of d-separation, which may be slightly easier to parse.}
We now relate d-separation in the model to statistical independence in the data.

\begin{definition}[Model consistency]
	We say that model $M=(\Omega,C)$ is \emph{consistent with data} $P|\Omega$ if the following two conditions hold for any $a,b \in \Omega$ and $S \subseteq \Omega \setminus \{a,b\}$:
	\begin{description}
		\item[(Markov property)] if $S$ d-separates $a,b$ in $M$, then $\left( a \perp b \mid S \right)$;\footnote{Alternatively, the Markov property can be written as a distribution factorization statement, requiring that there exists a collection of probability density functions $\left\{ \hat{f}_a \left( \cdotp | C(a) \right) \right\}_{a \in \Omega}$ such that $P(\Omega) = \prod_{a \in \Omega} \hat{f}_a \left( a | C(a) \right)$.}
	
		\item[(Faithfulness)] if $\left( a \perp b \mid S \right)$, then $S$ d-separates $a$ and $b$ in $M$.
	\end{description}
\end{definition}

The two conditions require that the model and the data agree on which variables are conditionally independent. The Markov property requires that the independencies displayed by the model's topology must hold in the data \citep[Chapter 3]{pearl_probabilistic_1988}. Faithfulness requires the converse: every conditional independence in the data must be reflected in the model's topology, so that no data independence arises ``by accident''---e.g., from two causal channels of opposing signs that cancel out exactly. 

We let $\mathcal{M}(\Omega_s)$ denote the set of models consistent with $P|\Omega_s$.
We assume the true model is consistent with $P$: $M_t \in \mathcal{M}(\Omega_t)$. While the Markov property holds by definition of $P$, faithfulness is nontrivial: it requires that the data contain no conditional independencies beyond those implied by the graph.


\paragraph{Players.}
We consider a setting with two players: a sender and a receiver. To focus on the substantive elements, we present a reduced-form interaction here. A more standard game-theoretic model that serves as a microfoundation for this setup is presented in Section \ref{sec:microfound}.

The receiver initially has no model in mind and is not aware of any variables.
The sender knows the true model $M_t$.
The sender can \emph{propose} a model $M_s=(\Omega_s,C_s)$ with $\Omega_s \subseteq \Omega_t$ to the receiver, which the receiver can then \emph{accept or reject}. In words, the sender can reveal an arbitrary subset of variables and a subjective model that describes the causal connections between them.
The sender can also remain silent, denoted by $M_s = \varnothing$.

We assume that there are two variables of interest for the sender, $x,y \in \Omega_t$, and the sender wants to persuade the receiver that $x \Rightarrow_s y$.\footnote{Throughout the paper, we use subscripts to indicate which model the arrow notation refers to: e.g., ``$a \to_t b$'' means $a \in C_t(b)$ in $M_t$, and ``$a \Rightarrow_s b$'' means $a \in \bar{C}_s(b)$ in $M_s$.} 
We consider two scenarios under which the receiver accepts the proposed model:
\begin{itemize}[noitemsep]
	\item a \emph{na{\"i}ve} receiver accepts model $M_s$ if and only if it is consistent with $P|\Omega_s$, and
	\item a \emph{sophisticated} receiver accepts model $M_s$ with $x \Rightarrow_s y$ if and only if $\mathcal{M}(\Omega_s)$ is nonempty and $x \Rightarrow_M y$ in every $M \in \mathcal{M}(\Omega_s)$.\footnote{For sake of completeness, we can also say that a sophisticated receiver accepts model $M_s$ with $x \Leftarrow_s y$ if and only if $x \Leftarrow_M y$ in every $M \in \mathcal{M}(\Omega_s) \neq \emptyset$, and accepts $M_s$ with $x \not\Rightarrow_s y$ and $x \not\Leftarrow_s y$ if and only if $x \not\Rightarrow_s y$ and $x \not\Leftarrow_s y$ in every $M \in \mathcal{M}(\Omega_s) \neq \emptyset$.}
\end{itemize}
A na{\"i}ve receiver is someone who is relatively uncritical: whenever they are presented with a correlation and a story that does not contradict it, they accept such a story---though they can still identify and reject inconsistent models. A sophisticated receiver, on the other hand, is wary of reverse causality; they accept the sender's causal narrative regarding $x$ and $y$ only if the data explicitly rule out all other possibilities.\footnote{Experimental evidence supports our modeling choices. \citet{kendall_complexity_2024} show that subjects struggle to form and/or articulate their mental model of the world. Hence even if people can see the data, they may not automatically have a subjective model that explains it. At the same time, \citet{ambuehl2025} show that subjects are remarkably good at identifying and discarding models that are \emph{not} consistent with the data, so it is natural to assume that even na{\"i}ve receivers are capable of this.} 
We assume that the receiver's type is commonly known.
Regardless of whether the receiver accepts the model, the game ends.

In addition to the goal of persuading the receiver that $x \Rightarrow_s y$, we assume that the sender's secondary objective is to do this using the smallest number of disclosed variables, i.e., to minimize $|\Omega_s|$. Specifically, one can think that the sender's payoff function is 
\begin{align} \label{eq:sender_util}
	U_S(M_s) = v(M_s) - \kappa\left( |\Omega_s| \right),
\end{align}
where $v(M_s)$ is the benefit of persuading the receiver and is such that $v(M_s) = 1$ if the receiver accepts a model with $x \Rightarrow_s y$ and $v(M_s)=0$ otherwise. The second term, $\kappa\left( |\Omega_s| \right)$, is a strictly increasing function representing the costs of communicating larger models. We normalize $\kappa(0)=0$, so that silence yields $U_S(\emptyset)=0$. This term may capture a direct cost of discovery and procuring evidence, a reduced-form cost of communicating complex models through a lower success rate among receivers---who may find such models confusing and unconvincing---or an instrumental cost to the sender of revealing evidence that they would otherwise prefer to keep hidden.\footnote{The cost $\kappa$ does no substantive work in the analysis beyond selecting among persuasive proposals and silence.} 
For the remainder of the paper, we assume that the benefit of persuasion is always greater than the cost when persuasion is possible.

\medskip 

This concludes the setup of the baseline model; the remainder of Section \ref{sec:model} introduces various supplementary definitions and helpful results.
Before we proceed, however, one point of discussion is in order. The setup above does not describe a ``game'' in the strict game-theoretic sense and is best seen as the sender's decision problem. This is because the receiver is not a strategic payoff-maximizing player, but rather follows a specific choice procedure. The sender is also nonstandard in that they benefit directly from changing the receiver's mind, as opposed to deriving payoffs from the receiver's action. This choice of setup is intentional: given the novelty of the causal discovery machinery to most economists, we strip away all the details that are not strictly necessary for the persuasion aspect, to focus on what drives the receiver to accept or reject a model. For interested readers, Section \ref{sec:microfound} completes the setup and describes a fully specified game, in which the receiver makes some decision and the sender attempts to influence this decision by communicating additional information. The results there establish the equivalence between the two settings.

\subsection{Consistency} \label{sec:consistency}

In this section, we discuss in more detail what consistency means and how to check whether a given model is consistent or not with a given distribution. There are two observations that drive the analysis. First, the Markov property allows one to easily identify adjacent variables: if $\left( a \not\perp b \mid S \right)$ for all $S\subseteq\Omega\setminus\{a,b\}$, then $a$ and $b$ are adjacent. Second, given a triplet of variables $a-b-c$, a collider $a \to_t b \leftarrow_t c$ looks different from chains $a \to_t b \to_t c$ and $a \leftarrow_t b \leftarrow_t c$ or a fork $a \leftarrow_t b \to_t c$.
A collider in the true model generates data such that $a \perp c$ and $(a \not\perp c \mid b)$, while the other three options generate the opposite pattern: $a \not \perp c$ and $(a \perp c \mid b)$. 
We refer to collider patterns in the data as V-structures, defined below.

\begin{definition}[V-structure]
	Variables $a,b,c \in \Omega$ constitute a \emph{V-structure} in the data $P|\Omega$ if there exists $S\subseteq\Omega\setminus\{a,b,c\}$ such that $\left( a \perp c \mid S \right)$ and $\left( a \not\perp c \mid S \cup \{b\} \right)$. We further label $b$ as the \emph{V-center}, $a$ and $c$ as the \emph{V-parents}, and variables in $S$ as \emph{controls}.
\end{definition}
\begin{definition}[Direct V-structure]
	Variables $a,b,c \in \Omega$ constitute a \emph{direct V-structure} in the data $P|\Omega$ if they constitute a V-structure and there is no $S\subseteq\Omega\setminus\{a,b,c\}$ such that $\left( a \perp b \mid S \right)$ or $\left( b \perp c \mid S \right)$.
\end{definition}

\citet{verma_equivalence_1990} show that any two models are observationally equivalent (are consistent with the same data) if they have the same sets of adjacencies and colliders. Conversely, a model that has different adjacencies or different colliders relative to the true model would contradict the data.
They show that as a consequence, the set of models $\mathcal{M}(\Omega)$ consistent with data $P|\Omega$ (if it is nonempty) can be obtained by following the Inductive Causation ($IC$) algorithm described below, which relies on first identifying adjacencies---the skeleton of the model---then identifying colliders $a \rightarrow b \leftarrow c$ from direct V-structures, and finally directing further links using V-structures and acyclicity. We follow the presentation of this algorithm in \citet{pearl_causality_2009}.

\begin{definition}[IC algorithm]
	Consider a set $\Omega$ of variables and data $P|\Omega$. Construct a partially-directed graph $K(\Omega)$ as follows.
	\begin{enumerate}[nosep]
		\item For any $a,b \in \Omega$, link $a$ and $b$ if there is no set $S\subseteq\Omega\setminus\{a,b\}$ such that $\left( a \perp b \mid S \right)$. 
		
		\item For any $a,b,c \in \Omega$ that constitute a direct V-structure, direct the links toward $b$. 
		
		\item For any $a,b \in \Omega$ connected by an undirected link, direct the link from $a$ to $b$ if:
		\begin{enumerate}[nosep]
			\item link $b \to a$ creates a V-structure that would have been identified in Step 2, or
			\item link $b \to a$ creates a cycle.
		\end{enumerate}
	\end{enumerate}
\end{definition}

The IC algorithm outputs a completed partially-directed acyclic graph (CPDAG) $K(\Omega)$ that represents an equivalence class of observationally-equivalent models, i.e., the set $\mathcal{M}(\Omega)$. If $\mathcal{M}(\Omega)$ is nonempty, then any directed acyclic selection $C' \subseteq K(\Omega)$ that contains the same adjacencies and no additional V-structures relative to $K(\Omega)$ produces a consistent model, i.e., any such $(\Omega,C')$ is consistent with $P|\Omega$: $(\Omega,C') \in \mathcal{M}(\Omega)$. Conversely, no model that contradicts $K(\Omega)$ can be consistent. 
We say that any link $a \to b$ that is oriented in $K(\Omega)$ is \emph{compelled} by the data $P|\Omega$, meaning no model $M \in \mathcal{M}(\Omega)$ consistent with the data can have $a \leftarrow b$.

\citet{meek_causal_1995} shows that Step 3 of the IC algorithm is equivalent to the iterative application of four simple orientation rules depicted in Figure~\ref{fig:Meek} in Supplementary Appendix \ref{sec:examples}.
An example application of the IC algorithm is demonstrated in Example~\ref{exp:IC} in Supplementary Appendix \ref{sec:examples}.
If the set of consistent models $\mathcal{M}(\Omega)$ is empty for some $\Omega \subset \Omega_t$, then the IC algorithm will produce an inconsistent model, as shown in Example~\ref{exp:IC_incons} in Supplementary Appendix \ref{sec:examples}.

\subsection{Defective links and models}

As stated above, the IC algorithm does not always \emph{uniquely} identify the true model, even when all variables are observed. While the skeleton of the model can be pinned down, not all links can be oriented from the data alone. Hence, the receiver may be persuaded to accept a model that is consistent with the data yet incorrect. 
We call such models defective and offer examples below.

\begin{definition}[Defective links and models]
	Given a model $M_s$, we call $a \to_s b$ for some $a,b \in \Omega_s$ a \emph{defective link} if $a \not\Rightarrow_t b$. The model $M_s$ is \emph{defective} if it has a defective link.
\end{definition}

\begin{example}[Defective consistent link]
	Suppose the true model is $a \to b$. Running the IC algorithm on $\Omega = \{a,b\}$ yields in Step 1 that $a - b$, but since there are no V-structures in the data, the algorithm stops after Step 1 and leaves the link undirected. Hence, $a \leftarrow b$ is consistent with the data, but such a link is defective.
\end{example}
\begin{example}[Defective consistent models]
	Suppose the true model is $a \to_t b \to_t c$. The IC algorithm identifies in Step 1 that $a - b - c$, but since there are no V-structures in the data, the algorithm stops after Step 1 and leaves the links undirected. Then models $a \leftarrow b \leftarrow c$ and $a \leftarrow b \to c$ are consistent yet defective. In turn, model $a\to b \leftarrow c$ is inconsistent with the data, since it contains a collider but the data have no corresponding V-structure.
\end{example}

\subsection{Simple and rich worlds}

For some of the results, we impose additional restrictions on the true model to ensure it is sufficiently ``nice''. These conditions are stated below. 

\begin{definition}[Simple world]
	The true model $M_t$ is \emph{simple} if for any $a,b,c \in \Omega_t$ that form a collider $a \to_t b \leftarrow_t c$, variables $a$ and $c$ are not correlated.
\end{definition}

\begin{definition}[Rich world]
	The true model $M_t$ is \emph{rich} if $\mathcal{M}(\Omega_t) = \{M_t\}$.
\end{definition}

Simplicity requires that all direct V-structures are ``obvious'' in the data, in the sense that they can be identified without any controls: for any $a,b,c \in \Omega_t$ such that $\left(a\perp c \mid S\right)$ and $\left(a \not\perp c \mid S, b \right)$, these two observations must hold for $S=\emptyset$ in particular. 
We assume that even if the true model is simple, the receiver is unaware of this fact, so we allow the sender's proposed model $M_s$ to be non-simple.

\begin{figure}
	\centering
	
	\begin{minipage}[b]{0.32\textwidth}
		\centering 
		\includegraphics[scale=0.3]{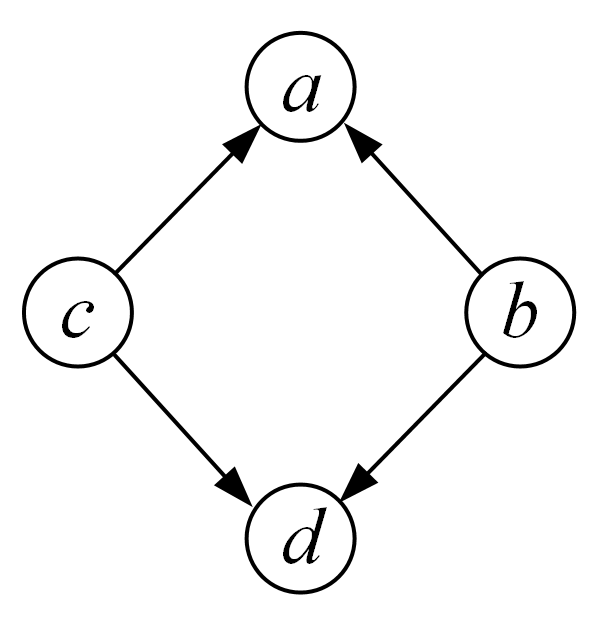}\\
		(a) \\
	\end{minipage}
	\begin{minipage}[b]{0.32\textwidth}
		\centering 
		\includegraphics[scale=0.3]{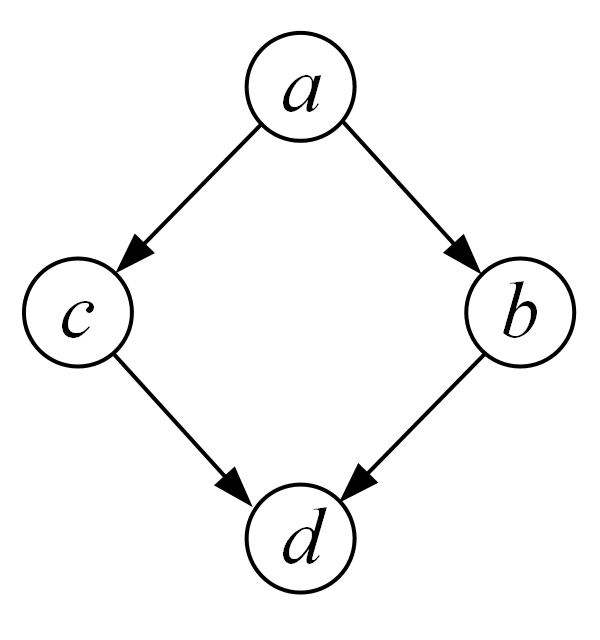}\\
		(b) \\
	\end{minipage}
	\begin{minipage}[b]{0.32\textwidth}
		\centering 
		\includegraphics[scale=0.3]{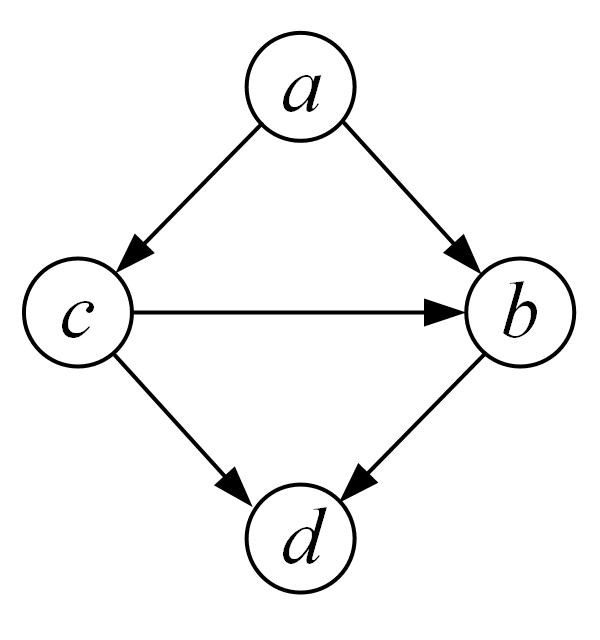}\\
		(c) \\
	\end{minipage}
	
	\caption{Simple (a and c) and non-simple (b) models. \label{fig:rhombi}}
\end{figure}
\begin{example}[Simple models]
	The models in Figures~\ref{fig:rhombi}(a) and (c) are simple, but the model in (b) is not. In panel (a), $c \to a \leftarrow b$ and $c \to d \leftarrow b$ create V-structures, but because $b \perp c$, they can be identified in the data without any controls. In (b), $c \to d \leftarrow b$ creates a V-structure, but identifying it requires controlling for $a$, since $c \not \perp b$ and $(c\perp b \mid a)$. In (c), $c \to d \leftarrow b$ does not create a V-structure because of the link $c \to b$, which leads to $c \not \perp b$ and $(c \not\perp b \mid a)$. Similarly, the true model from the illustrative example in Figure~\ref{fig:pic3}(a) is simple.
\end{example}

In turn, richness requires that every link is compelled by the full distribution $P$, so the true model can be uniquely identified using the IC algorithm. This means that every directed link must either belong to a V-structure, or be orientable by tracing back to a V-structure. The models in Figures~\ref{fig:pic3}(a) and \ref{fig:rhombi}(a) are rich, whereas the models in Figures~\ref{fig:rhombi}(b) and (c) are not.

\section{Causal persuasion} \label{sec:persuade_blank}

\subsection{The power of omitted variables}

The sender's persuasive power in our model comes from two factors: the ability to omit variables and the ability to propose a particular model that ties the revealed variables.
We start by exploring the former. Does variable omission create the potential for causal misperceptions? That is, which causal connections \emph{can} the receiver get wrong due to not observing all of $\Omega_t$ (omitted variable bias)? These questions are partially answered by the following lemma, which follows from the results of \citet{verma_equivalence_1990} and imposes a lower bound on the alignment between the true model and any consistent subjective model.

\begin{lemma} \label{lem:consistent_models}
	If $M_s$ is consistent with $P|\Omega_s$, then the following hold for any $a,b \in \Omega_s$:
	\begin{enumerate}[nosep]
		\item $a$ and $b$ are correlated in $M_s$ if and only if they are correlated in $M_t$;
		
		\item if $a$ and $b$ are adjacent in $M_t$, then they are adjacent in $M_s$.
	\end{enumerate}
\end{lemma}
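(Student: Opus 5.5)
Both parts follow from one observation: the receiver's data $P|\Omega_s$ is the marginal of $P$, and $P$ is faithful to $M_t$. So every conditional (in)dependence statement among variables of $\Omega_s$ reads the same whether evaluated in $M_s$ or in $M_t$, as long as the conditioning set lies in $\Omega_s$. Concretely, for $a,b\in\Omega_s$ and $S\subseteq\Omega_s\setminus\{a,b\}$, consistency of $M_s$ (Markov plus faithfulness) and consistency of $M_t$ with $P$ together give
\[
S \text{ d-separates } a,b \text{ in } M_s \iff (a\perp b\mid S) \iff S \text{ d-separates } a,b \text{ in } M_t .
\]
I would establish this chain first and then specialise it to two choices of $S$.

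\textbf{Part 1.} I would take $S=\emptyset$. I would then show that, in any DAG, $a$ and $b$ are correlated in the paper's sense (one is an ancestor of the other, or they share a common ancestor) exactly when $\emptyset$ fails to d-separate them.
\begin{itemize}[nosep]
	\item If $a\Rightarrow b$, the directed path between them contains no collider, so with $S=\emptyset$ it is open.
	\item If $c$ is a common ancestor, I take the two directed paths $c\Rightarrow a$ and $c\Rightarrow b$ and shorten them to the last common vertex. This yields a trek $a\Leftarrow c'\Rightarrow b$ with no collider, which is again open.
	\item Conversely, if some path is open given $\emptyset$, it has no collider. A collider-free path is either directed or has a single source vertex, and that source is a common ancestor.
\end{itemize}
Combining this with the displayed equivalence gives Part 1 directly.

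\textbf{Part 2.} Suppose $a$ and $b$ are adjacent in $M_t$ but not in $M_s$. Since $M_s$ is a DAG, one of the two is not a descendant of the other in $M_s$; say $b$ is not a descendant of $a$. Then the parent set $S=C_s(a)\subseteq\Omega_s\setminus\{a,b\}$ d-separates $a$ and $b$ in $M_s$, by the local Markov property. The displayed equivalence then gives $(a\perp b\mid S)$ in $P$. Faithfulness of $M_t$ would require $S$ to d-separate $a$ and $b$ in $M_t$, but the direct edge $a-b$ is a path with no interior vertex and so can never be blocked. This contradiction shows $a$ and $b$ must be adjacent in $M_s$.

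The main obstacle is a detail in Part 2: checking that the parent set of $a$ in $M_s$ d-separates $a$ from a non-descendant under Definition~\ref{def:dsep}. I would argue it by cases on the first edge of a path leaving $a$.
\begin{itemize}[nosep]
	\item If the first edge enters $a$, the path starts at a parent $p\in S$ and leaves $p$ through a non-collider, so it is blocked at $p$.
	\item If the first edge leaves $a$, the path begins directed. Since $b$ is not a descendant of $a$, the path must reach a first collider, which is a descendant of $a$. That collider, and all its descendants, are descendants of $a$, and none of them can lie in $C_s(a)$ by acyclicity, so the path is blocked there.
\end{itemize}
The rest of the argument is routine bookkeeping.
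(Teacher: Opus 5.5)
Your proof is correct and follows the paper's route. In Part 1 you use the equivalence, for both consistent models, between graphical correlation and $a\not\perp b$ in the data; in Part 2 you take a separating set inside $\Omega_s$ and use faithfulness of $M_t$ to rule out the $a$--$b$ edge. The only difference is that the paper obtains the separating set by appealing to Step~1 of the IC algorithm, whereas you construct it directly as $C_s(a)$ via the local Markov property, which makes the argument self-contained and shows that Part 2 needs only the Markov property of $M_s$.
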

\begin{proof}
	The first statement follows from the observation that $a,b$ are correlated in a consistent model if and only if $a \not\perp b$ in the data, and that the true model is consistent.
	
	The second statement: since $M_s$ is consistent, it must align with the output of the IC algorithm. Suppose, toward a contradiction, that $a,b$ are not adjacent in $M_s$. From Step 1 of the IC algorithm, there then exists $S \subseteq \Omega_s\setminus\{a,b\}$ such that $(a \perp b \mid S)$. Faithfulness then implies that $a$ and $b$ cannot be adjacent in the true model.
\end{proof}

In words, Lemma~\ref{lem:consistent_models} above says that if two variables are correlated in the true model (i.e., they are connected by a path of chains and forks but no colliders), then they must be correlated as well in any subjective model that only includes a subset of variables and is consistent with the data on these variables. The converse also holds: if $x$ and $y$ are \emph{not} correlated in the true model---implying that $x \perp y$ in the data---then they cannot be correlated in any subjective model consistent with the data. 
Therefore, the sender cannot persuade the receiver that $x \to_s y$ in this case, regardless of whether the receiver is sophisticated or na{\"i}ve.
It also follows immediately from the lemma that revealing random noise or other variables unrelated to $x$ and $y$ neither helps nor harms the sender in their quest to persuade, since such variables become isolated nodes in any consistent model $M$.

\begin{example}[Identifying adjacencies]
	Suppose the true model is $a \to_t b \leftarrow_t c$. If the sender discloses all three variables, they constitute a V-structure, hence the true model is uniquely identified by the IC algorithm. If the sender only discloses $\Omega_s = \{a, c\}$, then the only model consistent with the data are such that $a$ and $c$ are not adjacent, since $a \perp c$ in the data.
\end{example}

One possible misperception not ruled out by the lemma is that a consistent subjective model can draw a link that is nonexistent in the true model, due to omitting other relevant variables. This means that if $x$ and $y$ are correlated in the true model, then Lemma~\ref{lem:consistent_models} does not preclude the existence of a consistent subjective model $M_s$ that includes $x \to_s y$ even if $x \not\Rightarrow_t y$ (making link $x \to_s y$ defective). This is explored in more detail in later sections.

\begin{example}[Non-defective model drawing a nonexistent link]
	Suppose the true model is $a \to_t b \to_t c$, and the sender only discloses $\Omega_s = \{a,c\}$. Then both $a \to_s c$ and $a \leftarrow_s c$ are consistent with the data, since $a \not\perp c$. The latter model, $a \leftarrow_s c$, is defective. The former model, $a \to_s c$, draws a nonexistent link from $a$ to $c$, but is not defective according to our definition, since $a \Rightarrow_t c$, so the implication that $a$ affects $c$ is correct.
\end{example}

\subsection{Persuading a na{\"i}ve receiver}

We proceed by characterizing the benchmark case of persuading a na{\"i}ve receiver. Lemma~\ref{lem:consistent_models} above implies that if $x$ and $y$ are not correlated in the true model (or the data), then no consistent model can include the link $x \to_s y$. Persuasion is therefore impossible in this case. If, however, $x$ and $y$ \emph{are} correlated in the true model, then persuading a na{\"i}ve receiver is easy. This is captured by the following proposition, which is trivial and left without proof.

\begin{proposition} \label{prop:pers_naive}
	If $x,y \in \Omega_t$ are correlated in the true model, then a na{\"i}ve receiver is persuaded by $M_s$ such that $\Omega_s = \{x,y\}$, $x \to_s y$.
	Otherwise a na{\"i}ve receiver cannot be persuaded that $x \to_s y$.
\end{proposition}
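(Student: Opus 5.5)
The proposition has two parts, and I would prove each by checking the consistency conditions directly for tiny models.

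\emph{First part.} Suppose $x$ and $y$ are correlated in $M_t$, and take $M_s=(\{x,y\},\{x\to y\})$. The only subset available for d-separation is $S=\emptyset$.

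\begin{enumerate}[nosep]
	\item Because $x\to_s y$ is a path with no non-endpoint variable, $\emptyset$ does not d-separate $x$ and $y$ in $M_s$. So the Markov property imposes no independence and holds vacuously.
	\item For faithfulness, I need $(x\not\perp y)$ in the data. Since $x,y$ are correlated in $M_t$, there is a directed path or a common-ancestor path between them. Such a path contains only chains and forks, so $\emptyset$ does not d-separate $x$ and $y$ in $M_t$.
	\item The paper assumes $M_t$ is consistent with $P$, and in particular satisfies faithfulness. Therefore $(x\perp y)$ would force $\emptyset$ to d-separate $x,y$ in $M_t$, which I have just ruled out. Hence $(x\not\perp y)$, and faithfulness of $M_s$ holds trivially.
\end{enumerate}

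So $M_s$ is consistent with $P|\{x,y\}$. The na\"ive receiver accepts it, and it has $x\Rightarrow_s y$.

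\emph{Second part.} Suppose $x,y$ are not correlated in $M_t$, and suppose some consistent $M_s$ had $x\Rightarrow_s y$. Then $x$ and $y$ would be correlated in $M_s$. Part 1 of Lemma~\ref{lem:consistent_models} then says they are correlated in $M_t$, a contradiction. So no model the na\"ive receiver accepts can contain $x\to_s y$, or indeed any path $x\Rightarrow_s y$.

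\emph{Main obstacle.} The only nontrivial step is showing that correlation in $M_t$ forces statistical dependence in the data. This needs faithfulness of $M_t$ together with the fact that a directed or common-ancestor path is unblocked by the empty set. It is a routine check against Definition~\ref{def:dsep}.
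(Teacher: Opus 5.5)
Your proof is correct and matches the reasoning the paper gives in place of a formal proof. The impossibility direction is exactly Part 1 of Lemma~\ref{lem:consistent_models}, and the possibility direction is the observation from that lemma's proof: correlation in $M_t$ forces $x \not\perp y$ by faithfulness of $M_t$, so $x \to_s y$ is consistent with $P|\{x,y\}$. The one detail you gloss over, that two directed paths from a common ancestor may overlap and so form a walk rather than a path, is harmless: either restart from the last shared vertex, or invoke the walk-based Definition~\ref{def:dsep2}.
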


A na{\"i}ve receiver is someone who cannot tell correlation from causation. Hence, presenting a pair of correlated variables and claiming that one of them affects the other is sufficient for the sender to persuade such a receiver. No additional variables are required, as is illustrated by the example below and Example~\ref{exp:vaccines} in Supplementary Appendix \ref{sec:examples}. As we see shortly, this is not the case for a sophisticated receiver.

\begin{example}[Labor unions] \label{exp:union1}
	Throughout history, labor unions have attempted to recruit workers, promising higher wages and better working conditions \citep[e.g., ][]{white_collar_unionism_1970,seiu_unions_2018}.
	These promises have occasionally been supported by data suggesting that union membership is correlated with higher wage levels \citep[e.g., ][]{lewis_chapter_1986}. Such a narrative together with the data would suffice to persuade na{\"ive} workers.
\end{example}

\subsection{Persuading a sophisticated receiver}\label{sec:persuade_soph}

We next ask a more interesting question: when can the sender persuade a \emph{sophisticated} receiver that $x \to_s y$, and how? Again, Lemma~\ref{lem:consistent_models} implies that a necessary condition for persuasion to be possible is that $x$ and $y$ are correlated in the true model. However, this is no longer sufficient. In Example~\ref{exp:union1} above, a sophisticated receiver would be left unconvinced that union membership would increase their wage, since this causal link is not \emph{compelled} by the data. They would recognize that correlation does not imply causation and would therefore consider the possibility of self-selection---that high earners may be more likely to join unions.

Our first result for sophisticated receivers is negative in that it identifies a further case (in addition to the one implied by Lemma~\ref{lem:consistent_models}) in which persuasion is impossible.

\begin{theorem}[] \label{thm:no_persuade_defective}
	If $x \Leftarrow_t y$ then a sophisticated receiver cannot be persuaded that $x \Rightarrow_s y$.
\end{theorem}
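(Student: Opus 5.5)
The plan is to exhibit, for any disclosed set $\Omega_s \ni x,y$ with $\mathcal{M}(\Omega_s)\neq\emptyset$, a model $M\in\mathcal{M}(\Omega_s)$ in which $x\not\Rightarrow_M y$. Then the sophisticated acceptance criterion fails for every proposal, whatever $M_s$ the sender attaches. By the results of \citet{verma_equivalence_1990} and \citet{meek_causal_1995} recalled in Section~\ref{sec:consistency}, $\mathcal{M}(\Omega_s)$ is the set of DAGs obtained from the CPDAG $K(\Omega_s)$ by orienting its undirected links acyclically without creating new colliders. So the task reduces to showing that these undirected links can be oriented so as to avoid a directed path from $x$ to $y$.

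\textbf{Step 1 (soundness of compelled arrowheads).} I would prove that whenever the IC algorithm on $P|\Omega_s$ orients $a\to b$, then $b\not\Rightarrow_t a$. The proof goes by induction over the order in which the algorithm orients links.

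For Step 2 of the algorithm, take a direct V-structure $a-b-c$ with $(a\perp c\mid S)$ and $(a\not\perp c\mid S\cup\{b\})$. Suppose $b\Rightarrow_t a$. Faithfulness of $M_t$ turns the independencies into d-separation statements in $M_t$. Consider a path between $a$ and $c$ that is d-connected given $S\cup\{b\}$. If it does not pass through $b$, then conditioning on $b$ only opens it through colliders that are ancestors of $b$, hence ancestors of $a$. I would then splice such a path with the directed path $b\Rightarrow_t a$ to produce a path d-connecting $a$ and $c$ given $S$ alone, a contradiction. This is the standard ``arrowhead soundness'' argument behind FCI.

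For Step 3, each of Meek's four rules (Figure~\ref{fig:Meek}) would be checked to preserve the property, given that it holds for the arrows it uses as input. For example, the acyclicity rule needs transitivity-type reasoning: from $a\to c\to b$ with $c\not\Rightarrow_t a$ and $b\not\Rightarrow_t c$, conclude $b\not\Rightarrow_t a$. This case is not immediate and uses the adjacencies together with Lemma~\ref{lem:consistent_models}.

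\textbf{Step 2 (a consistent orientation respecting true ancestry).} I would form the relation $R$ on $\Omega_s$ consisting of the compelled links of $K(\Omega_s)$ together with the true ancestry $\Rightarrow_t$ restricted to $\Omega_s$. I would show that $R$ is acyclic, so that it admits a linear extension $L$ with $y$ before $x$. Then I orient every undirected link of $K(\Omega_s)$ according to $L$. The resulting graph is acyclic, since every link, compelled or not, agrees with $L$. It creates no new collider, because the undirected part of a CPDAG consists of chordal chain components, and orienting those by any linear order consistent with the compelled part yields a member of the equivalence class; Meek's extension result with background knowledge gives this directly. The model $M$ obtained this way is consistent, and every directed path in it follows $L$. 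Since $y$ precedes $x$ in $L$, there is no path from $x$ to $y$, so $x\not\Rightarrow_M y$, as required.

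\textbf{Main obstacle.} The hard step is the acyclicity of $R$ in Step 2. Step 1 only forbids a compelled arrow from pointing directly against true ancestry; it does not yet exclude a mixed cycle such as $a\to b$ compelled with $b\not\Rightarrow_t a$, followed by a chain of true ancestry and further compelled arrows that returns to $a$. My first attempt would strengthen Step 1 to the statement that if $a\to b$ is compelled, then $b$ is not a true ancestor of any $d\in\Omega_s$ with $d\Rightarrow_t a$ or $d\to_{K} a$. I would prove this by the same splicing argument, and then close cycles by induction on their length. If that fails, the fallback is to work only with the single relevant pair. Assuming every consistent model has $x\Rightarrow_M y$, there would have to be a compelled arrowhead on some path from $x$ into $y$, and chasing it back through Meek's rules to an originating V-structure would yield a d-connection that contradicts $y\Rightarrow_t x$.
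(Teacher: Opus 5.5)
Your reduction matches the paper's: for every feasible $\Omega_s$, exhibit one $M\in\mathcal M(\Omega_s)$ without $x\Rightarrow_M y$. The gap is that the step carrying all the content, acyclicity of $R$, is never proved, and the route you sketch for it cannot work locally.

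\textbf{The obstacle already sits inside Step~1.} The R2 case of your Step~1 is exactly a cycle $a\to_K c\to_K b\Rightarrow_t a$ in $R$. Your invariant (``$a\to_K b$ implies $b\not\Rightarrow_t a$'') cannot be propagated through R2 using local information. Consider undisclosed $\ell_1,\ell_2$ with $b\to_t a$, $a\leftarrow_t\ell_1\to_t c$, and $c\leftarrow_t\ell_2\to_t b$. Then all three pairs are adjacent and correlated, $c\not\Rightarrow_t a$, and $b\not\Rightarrow_t c$, yet $b\Rightarrow_t a$. So the inductive hypothesis on $a\to_K c$ and $c\to_K b$, the adjacencies, and Lemma~\ref{lem:consistent_models} do not exclude $b\Rightarrow_t a$. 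Any contradiction must come from how those two arrows were compelled elsewhere in the graph, which is a global fact.

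\textbf{Your proposed repairs hit the same step.} Your strengthened invariant only adds the case of two consecutive compelled arrows; its first clause already follows from the basic invariant by transitivity. It says nothing about cycles in which compelled arrows alternate with true-ancestry segments. ``Induction on cycle length'' and the fallback of chasing arrowheads back through Meek's rules both run into this R2 step without resolving it. As written, the argument reduces the theorem to an unproved claim of essentially the same strength.

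\textbf{How the paper supplies the global ingredient.} It does not argue rule by rule; it uses a structural result.
\begin{enumerate}[nosep]
\item The latent projection $L(\Omega_s)$ is Markov equivalent to every DAG in $\mathcal M(\Omega_s)$ (Lemma~\ref{lem:membership}).
\item For adjacent pairs, $L(\Omega_s)$ orients $a\to_L b$ exactly when $a\Rightarrow_t b$, and every true directed path between disclosed variables becomes a directed path in $L$.
\item The loyal-equivalent-graph theorem of \citet{zhang2005transformational} gives a Markov-equivalent MAG $H$ that keeps every directed edge of $L$ and has only invariant bidirected edges.
\item Since the class contains a DAG, no bidirected edge is invariant, so $H$ is a DAG in $\mathcal M(\Omega_s)$ with $y\Rightarrow_H x$ (Lemma~\ref{lem:realizability}).
\end{enumerate}
This also yields your acyclicity of $R$ for free, since every relation in $R$ is ancestry in $H$.

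\textbf{A second, fixable error in Step~2.} Orienting chain components by an arbitrary linear order consistent with the compelled arrows can create colliders. For example, the CPDAG $a-b-c$ with order $a,c,b$ yields $a\to b\leftarrow c$. Meek's background-knowledge result does not by itself certify that your orientation is consistent with the class. Your particular order can be rescued because it also extends true ancestry, but only with an extra lemma you would need to prove via inducing paths or the MAG. The lemma is: in every unshielded non-collider $u-w-v$ of $K$, $w$ is a true ancestor of $u$ or of $v$. Combined with Step~1 for compelled edges, this forces a tail at $w$.
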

\begin{proof}
	See Appendix \ref{sec:proofs}.
\end{proof}

Theorem~\ref{thm:no_persuade_defective} argues that the sender can never persuade a sophisticated receiver of reverse causation, i.e., that the causal relation runs in the direction opposite to the truth.
\begin{example}[Labor unions, continued] \label{exp:union2}
	Recent research suggests that the truth may be the exact opposite of the story the labor unions are telling: rather than unionization increasing wages, it is instead the case that unions are more likely to be created in high-wage industries and environments \citep{dinardo_economic_2004}. 
	Theorem~\ref{thm:no_persuade_defective} then implies that if this is indeed the truth, labor unions would not be able to persuade ``sophisticated'' workers to join.
\end{example}

The proof of Theorem~\ref{thm:no_persuade_defective} shows that for every feasible disclosure $\Omega_s$, the set of consistent models contains a DAG that preserves every directed link in the latent projection of the true model. This follows by combining standard properties of latent projections with the loyal-equivalent-graph result of \citet{zhang2005transformational}. Consequently, if $x\Leftarrow_t y$, then regardless of which variables the sender reveals or withholds, there is always a consistent model with $x\Leftarrow y$. The sender can therefore never compel the opposing relation $x\Rightarrow_s y$ by selective disclosure.\footnote{\citet{spirtes_causal_1995} study which causal relations can be reliably inferred when the set of latent variables is given. In our setting, the sender strategically chooses which variables remain latent. The contribution of Theorem~\ref{thm:no_persuade_defective} is to show that this strategic omission cannot compel a causal direction opposing the truth.}

\medskip 
Moving on from Theorem~\ref{thm:no_persuade_defective}, it is reasonable to ask whether it is possible to persuade a sophisticated receiver that $x \to_s y$ if $x \not\Leftarrow_t y$. 
As we show in what follows, under some additional conditions, persuasion is not only possible in this case, but can be done by disclosing at most two variables in addition to $x,y$. If such variables exist, we say that persuasion is ``easy''.
We show later that if these conditions do not hold, then even if persuasion is possible, it may be ``difficult'' in that it may require disclosing arbitrarily many variables.

\begin{definition}[Obvious cause]
	Given $x,y \in \Omega_t$, we call $z \in \Omega_t$ an \emph{obvious cause of $y$ given $x$} if $z \in \bar{C}_t(y)$ and $z$ is not correlated with $x$.
\end{definition}
\begin{definition}[Non-obvious cause]
	Given $x,y \in \Omega_t$ such that $x \Rightarrow_t y$, we call $w \in \Omega_t$ a \emph{non-obvious cause of $y$ given $x$} if $w \in \bar{C}_t(x)$ and $x$ d-separates $w$ and $y$. 
\end{definition}
\begin{definition}[Confounder]
	Given $x,y \in \Omega_t$, we call $c \in \Omega_t$ a \emph{confounding variable} (or a \emph{confounder}) for $(x,y)$ if $C_t$ contains a path $c \Rightarrow_t x$ that does not pass through $y$ and a path $c \Rightarrow_t y$ that does not pass through $x$. We denote the set of such confounders by $\tilde{C}_t(x,y) \subset \Omega_t$.
\end{definition}

\begin{proposition} \label{prop:persuade_sophist}
	Suppose $x,y \in \Omega_t$ are correlated and $x \not\Leftarrow_t y$. 
	\begin{enumerate}
		\item If there exists an obvious cause $z$ of $y$ given $x$, then a sophisticated receiver can be persuaded that $x \to_s y$ by a model $M_s$ with $\Omega_s=\{x,y,z\}$.
		\item If $x \Rightarrow_t y$ and there exist two non-obvious causes $v,w$ of $y$ given $x$ such that $v \perp w$, then a sophisticated receiver can be persuaded that $x \to_s y$ by a model $M_s$ with $\Omega_s=\{v,w,x,y\}$.
	\end{enumerate}
\end{proposition}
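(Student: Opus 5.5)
For part 1 the plan is to compute the IC output $K(\{x,y,z\})$ directly, and to show that it orients the link as $x \to y$ and is itself a consistent model. Every (in)dependence will be read off the true model through d-separation; this is legitimate because $M_t$ is consistent, so d-separation in $M_t$ is equivalent to conditional independence in $P$.

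\textbf{Part 1, independencies.} By definition of an obvious cause, $z$ and $x$ are uncorrelated in $M_t$. Hence $x \perp z$, and in particular no directed path $z \Rightarrow_t y$ and no trek between $x$ and $y$ passes through $z$ or $x$ in a way that would correlate $x$ and $z$. Three facts follow.
\begin{itemize}[nosep]
\item Since $x,y$ are correlated and $x \not\Leftarrow_t y$, there is a trek ($x \Rightarrow_t y$ or $x \Leftarrow_t c \Rightarrow_t y$) avoiding $z$. It has no colliders, so it stays open given $\{z\}$, giving $(x \not\perp y \mid z)$.
\item The directed path $z \Rightarrow_t y$ avoids $x$, so $(z \not\perp y \mid x)$.
\item Together with $x \perp z$, the skeleton is $x - y - z$ with $x,z$ nonadjacent.
\end{itemize}

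\textbf{Part 1, the V-structure.} It remains to show $(x \not\perp z \mid y)$. I will take the trek from $x$ to $y$ and the directed path from $z$ to $y$, and let $m$ be the first node on the $z$-path that also lies on the trek. Splicing gives a walk $x \cdots \to m \leftarrow \cdots z$ whose only collider is $m$. This collider is $y$ or an ancestor of $y$, so it is opened by conditioning on $y$, and no non-collider on the walk is $y$. The standard fact that a d-connecting walk yields a d-connecting path then gives the dependence.

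\textbf{Part 1, conclusion.} So $x,y,z$ form a direct V-structure, and Step 2 of the IC algorithm compels $x \to y \leftarrow z$. It remains to check that $\mathcal{M}(\Omega_s) \neq \emptyset$, i.e.\ that $x \to y \leftarrow z$ itself satisfies Markov and faithfulness. On three variables this reduces exactly to the (in)dependencies just listed. Hence every consistent model has $x \to y$, and the sophisticated receiver accepts.

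\textbf{Part 2, independencies.} I follow the same template on $\{v,w,x,y\}$.
\begin{itemize}[nosep]
\item $v \perp w$ is given. The collider-splicing argument above, applied to the directed paths $v \Rightarrow_t x$ and $w \Rightarrow_t x$, gives $(v \not\perp w \mid x)$.
\item $v \Rightarrow_t x$ gives $v - x$ adjacency. The directed path cannot pass through $w$ (else $v,w$ would be correlated), and cannot pass through $y$ (else $x \Rightarrow_t y \Rightarrow_t x$, a cycle). So $v$ and $x$ stay dependent given every subset of $\{w,y\}$. The same argument gives $w - x$.
\item Non-obviousness gives $(v \perp y \mid x)$ and $(w \perp y \mid x)$, so $v,y$ and $w,y$ are nonadjacent.
\item $x \Rightarrow_t y$ along a path avoiding $v,w$ (by acyclicity) gives $x - y$.
\end{itemize}
The skeleton is therefore $v - x - w$ plus $x - y$. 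The direct V-structure compels $v \to x \leftarrow w$. Meek's first rule ($v \to x - y$ with $v,y$ nonadjacent), equivalently Step 3(a), then compels $x \to y$.

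\textbf{Part 2, the main obstacle.} I expect the hardest step to be verifying that the candidate $v \to x \leftarrow w$, $x \to y$ is actually consistent. Faithfulness needs every dependence the graph implies to hold in the data, for example $(v \not\perp w \mid y)$, $(v \not\perp w \mid x,y)$ and $(v \not\perp x \mid w,y)$. For these I would again use spliced walks: $y$ is a descendant of $x$, so conditioning on $y$ opens the collider at $x$. I would also use the d-separation $(v \perp y \mid x)$ to rule out spurious extra connections through $y$. Once consistency holds, every $M \in \mathcal{M}(\Omega_s)$ has $x \to y$, and persuasion succeeds.
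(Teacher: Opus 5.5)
Your proposal is correct and follows the paper's argument: the obvious cause creates the V-structure $x\to y\leftarrow z$, the two independent non-obvious causes create $v\to x\leftarrow w$, and $(v\perp y\mid x)$ then rules out $y\to x$. You also add the check $\mathcal{M}(\Omega_s)\neq\emptyset$, which the paper's short proof leaves implicit.

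One piece of that check is still missing in Part 2. The Markov side also requires $(v\perp y\mid x,w)$ and $(w\perp y\mid x,v)$. These hold because $w\Rightarrow_t x$: every collider opened by conditioning on $w$ is already opened by $x$, so $\{x,w\}$ d-separates $v$ and $y$ whenever $x$ does (and symmetrically for $w$). Alternatively, note that the latent projection of $M_t$ onto $\{v,w,x,y\}$ is exactly the DAG $v\to x\leftarrow w$, $x\to y$, so Lemma~\ref{lem:membership} gives consistency immediately.
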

\begin{proof}
	\textbf{Case 1.}
	Revealing variables $\Omega_s = \{x,y,z\}$ when $z$ is an obvious cause creates a V-structure $x \to y \leftarrow z$, hence $x \to_s y$ is compelled by $P|\Omega_s$.
	
	\textbf{Case 2.}
	Revealing variables $\Omega_s = \{v,w,x,y\}$ when $v$ and $w$ are non-obvious causes creates a V-structure $v \to x \leftarrow w$ since $v \perp w$, and $x \to_s y$ is then compelled by $P|\Omega_s$ by $(v,x,y)$ \emph{not} being a V-structure since $(v \perp y \mid x)$ (which holds because by definition of $v$, $x$ d-separates $v$ and $y$).
\end{proof}

\begin{figure}
	\centering
		
	\begin{minipage}[b]{0.45\textwidth}
		\centering 
		\includegraphics[scale=0.3]{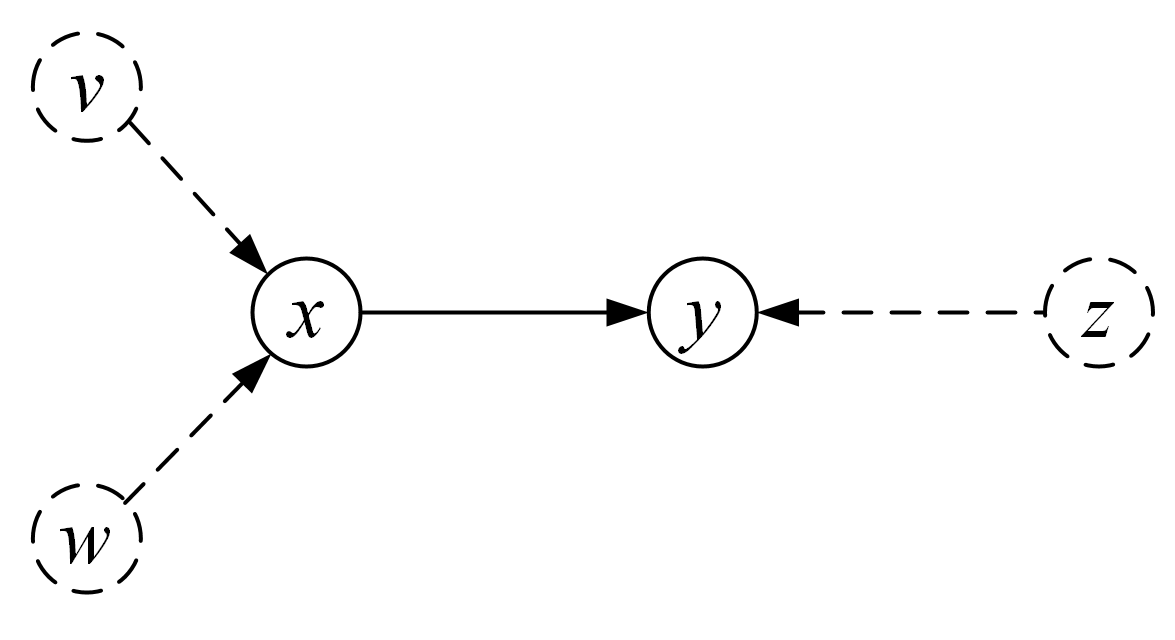}\\
		(a) true link $x\to_t y$\\
	\end{minipage}
	\begin{minipage}[b]{0.45\textwidth}
		\centering 
		\includegraphics[scale=0.3]{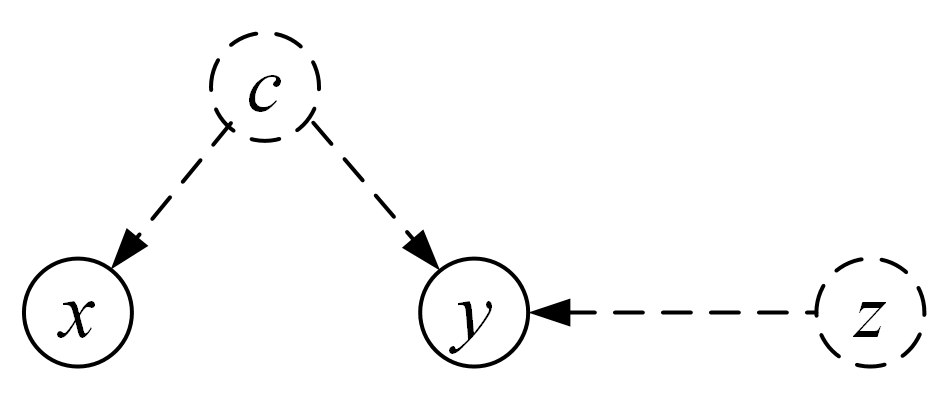}\\
		(b) unobservable confounder\\
	\end{minipage}
	
	\caption{$M_t$ with obvious and non-obvious causes. \label{fig:causes_copy}}
\end{figure}

Obvious and non-obvious causes describe variables that suffice to persuade a sophisticated receiver. If an obvious cause exists, revealing it forms a V-structure together with the two variables of interest, which would be identifiable from the data. In the second case, $v\to x\leftarrow w$ orients the links into $x$. The absence of a V-structure at $(v,x,y)$ then compels $x\to y$.
 Figure~\ref{fig:causes_copy} illustrates the two cases.

Both obvious and non-obvious causes have close relatives in econometric practice, albeit no exact analogs. 
An obvious cause $z$ known to affect $y$ yet independent of $x$ is the familiar shape of an excluded demand shifter in a simultaneous (demand \& supply) system, and verifying $z \perp x$ is a balance test. A non-obvious cause $v$ is qualitatively similar to an instrumental variable, affecting $y$ only through $x$ (i.e., $x$ d-separates $v$ and $y$).
Yet, the goals and the assumptions are different. In econometrics, the causal graph is assumed---both the skeleton and the link directions---and the problem is to estimate the size of the causal effect (causal inference). We consider instead the problem of causal discovery: the object is not the size of an effect but the existence and direction of the link itself. In particular, there is no expectation that the receiver shares the same causal model from the outset; instead, the goal is to prove to the receiver that the targeted causal relation is compelled by the disclosed data.\footnote{The causal-discovery literature contains results that may resemble Proposition~\ref{prop:persuade_sophist}. For example, \citet{mani_theoretical_2012} use the term ``Y-structures'' to describe causal constructs with two non-obvious causes and describe a score-based algorithm to discover them in the data. In a similar vein, \citet{cooper_simple_1997} described an algorithm that used non-obvious causes to mine datasets for pairs of $x,y$ s.t. $x \to y$. The difference between these papers and our work is conceptual: \citet{cooper_simple_1997} and \citet{mani_theoretical_2012} mine the data to discover \emph{any} causal relations, whereas we are seeking ways to prove to a receiver that $x \to_s y$ for specific $x$ and $y$. While the answers may look similar, the questions are qualitatively different.}

\medskip 

What if the premise of Proposition~\ref{prop:persuade_sophist} does not hold? Is it easy to persuade a sophisticated receiver if there are no suitable obvious or non-obvious causes? Example~\ref{exp:nocauses} in Supplementary Appendix \ref{sec:examples} demonstrates that in this case, persuasion may be easy, difficult (in the sense of requiring the sender to reveal arbitrarily many variables in addition to $x,y$), or impossible. 
This reflects the core challenge faced by a large portion of empirical economics, which seeks to establish causal relationships between variables of interest. The sheer size of the literature underscores the difficulty of the task, as these variables are typically jointly determined by a large number of confounding factors.\footnote{A canonical example is the estimation of the returns to education, where ability, family background, and other unobserved characteristics confound the relationship between schooling and earnings; see, for instance, \cite{card99}.} 
Therefore, credible identification (and persuasion) requires conditioning on a rich set of covariates or exploiting exogenous variation. Consequently, persuading a sophisticated receiver in such a setting would require the sender to provide information at the level of an advanced empirical economist.

We conclude the section with a structural result on optimal disclosure in simple worlds. The theorem below claims that if the sender speaks up, then $x$ and $y$ must be adjacent in their proposed model. (Recall that the sender may also remain silent when persuasion is impossible or too costly.) Thus, the sender never benefits from disclosing specific channels that carry the causal effect of $x$ on $y$. 

\begin{theorem} \label{thm:direct}
	If $M_t$ is simple, then, for either receiver type, the sender's optimal proposed model $M_s$ is either $M_s = \emptyset$ or a model in which $x$ and $y$ are adjacent.
\end{theorem}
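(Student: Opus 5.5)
The plan is to treat the two receiver types separately and, in both cases, argue by minimality. If an optimal $M_s\neq\emptyset$ had $x$ and $y$ non-adjacent, we would exhibit a persuasive proposal with strictly fewer variables. Since $\kappa$ is strictly increasing, that would contradict optimality.

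\textbf{Na\"ive receiver.} Any persuasive proposal must contain both $x$ and $y$, so $|\Omega_s|\ge 2$. By Lemma~\ref{lem:consistent_models}, persuasion requires $x,y$ to be correlated in $M_t$. By Proposition~\ref{prop:pers_naive}, the two-variable model $\Omega_s=\{x,y\}$, $x\to_s y$, then persuades. It therefore attains the minimal size and is the unique optimal speaking proposal up to its (fixed) orientation. In it, $x$ and $y$ are trivially adjacent. If persuasion is impossible, silence is optimal.

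\textbf{Sophisticated receiver: reduction.} Take an optimal persuasive $M_s$. Since every consistent model shares the skeleton of $K(\Omega_s)$, adjacency of $x,y$ in $M_s$ is equivalent to adjacency in $K(\Omega_s)$. Suppose, for contradiction, that they are not adjacent. Then there is $S\subseteq\Omega_s\setminus\{x,y\}$ with $(x\perp y\mid S)$, and $S\neq\emptyset$ because $x\not\perp y$. Since $x\Rightarrow y$ holds in every consistent model, $K(\Omega_s)$ contains a compelled directed path $x\to m_1\to\dots\to m_k\to y$ with $k\ge1$.

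\textbf{Sophisticated receiver: tracing the compulsion.} By Meek's rules, the orientation of the first link $x\to m_1$ must ultimately be traced back to some direct V-structure in the data. I would track the V-structure responsible. Either it points into $x$ (some $u\to x\leftarrow w$ with an outgoing link $x - m_1$ unshielded from $u$), or it points into some $m_i$ or $y$ through an outside variable $z$ (a collider $m_{i-1}\to m_i\leftarrow z$ or $m_k\to y\leftarrow z$).

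Here simplicity does the key work. Every collider in $M_t$ has uncorrelated parents, so the V-structures available are identifiable with the empty control set. Marginalizing out mediators preserves unconditional (in)dependencies. This lets me build a smaller disclosure.

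In the first case, $u,w$ are non-obvious causes of $y$ given $x$ (or can be chosen so), with $u\perp w$. Proposition~\ref{prop:persuade_sophist}(2) then gives a persuasive set $\{u,w,x,y\}$. In the second case, $z$ is a cause of $y$ uncorrelated with $x$, i.e.\ an obvious cause. Proposition~\ref{prop:persuade_sophist}(1) then gives the persuasive set $\{x,y,z\}$.

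In both cases the new disclosure drops every mediator $m_i$ while keeping only variables already in $\Omega_s$. It is therefore strictly smaller than $\Omega_s$, since $\Omega_s$ contained the $m_i$ in addition to these. Moreover, $x$ and $y$ are adjacent in it, because no separating set remains.

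\textbf{Main obstacle.} The hard step is justifying the claims used in the tracing argument. First, in the first case $u,w$ are genuinely non-obvious causes, i.e.\ $x$ d-separates each of them from $y$ in $M_t$. Second, in the second case $z$ is really an ancestor of $y$ and uncorrelated with $x$ in $M_t$, not merely in $K(\Omega_s)$. This requires translating compelled orientations of the marginal CPDAG back to ancestral relations in $M_t$, in the spirit of Theorem~\ref{thm:no_persuade_defective}'s latent-projection argument.

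If some such variable fails the condition, my fallback is different. I would instead show directly that $\Omega_s\setminus\{m_1\}$ still yields a nonempty $\mathcal{M}$ compelling $x\Rightarrow y$. Simplicity would guarantee that removing $m_1$ neither destroys the source V-structure nor creates a new inconsistent one.
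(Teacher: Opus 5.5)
Your na{\"i}ve-receiver argument is fine and is the paper's. The sophisticated case has a genuine gap. The ``tracing the compulsion'' step is not merely unproved but false as stated, and it is where all the difficulty lies. That step never uses minimality, so it would have to work for every non-direct positive set. Your dichotomy (the compelling V-structure sits at $x$, or at a mediator or $y$ with an outside parent) also ignores orientations propagated by Meek's R1/R2 from V-structures elsewhere.

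Here is a counterexample. Take $M_t$ with links $a\to c\leftarrow b$, $a\to x$, $c\to x$, $c\to m$, $c\to y$, $x\to m\to y$. Its only unshielded collider is $a\to c\leftarrow b$ with $a\perp b$, so it is simple. Disclosing everything compels $c\to x$, $c\to m$, $c\to y$ (R1), then $a\to x$ (R2), then $x\to m$ and $m\to y$ (R1). Since $x$ and $y$ are non-adjacent, this is a positive, non-direct set. Yet the parents of $x$, of $m$ and of $y$ are pairwise adjacent, so neither of your cases arises. There is also no obvious cause and no non-obvious cause at all, since $a,b,c$ are confounders (cf.\ Lemma~\ref{lem:noc_to_unconf}). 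A smaller positive set does exist, e.g.\ $\{a,b,c,x,y\}$, and it is direct as the theorem says. But it is not of the form in Proposition~\ref{prop:persuade_sophist}, so reduction to that proposition cannot be the mechanism.

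Two further steps are unsupported. First, Proposition~\ref{prop:persuade_sophist}(2) needs $x\Rightarrow_t y$, which positivity does not deliver: compelled marginal orientations need not reflect ancestry in $M_t$, as $e\to w\leftarrow t$ in the MBA example shows. Second, your opening claim that $K(\Omega_s)$ contains a compelled directed $x$--$y$ path fails for general CPDAGs. Take $x - p$, $x - q$, $p\to y\leftarrow q$, with $p,q$ and $x,y$ non-adjacent: every consistent DAG has $x\Rightarrow y$, but no path is compelled. The obstruction there is a chordless four-cycle with a single collider, which is exactly what simplicity has to rule out.

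Your fallback (remove one mediator and argue positivity survives) has the right shape, but its one-sentence justification omits the ideas the paper needs.

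\begin{enumerate}
\item \textbf{Choice of mediator.} Minimality makes $y$ the unique sink of every consistent model (Lemma~\ref{lem:min-sink}, Corollary~\ref{cor:min-sink}). This lets one choose the mediator $v$ maximal among strict descendants of $x$ that are ancestors of $y$, so that $y$ is its only child. Marginalizing such a $v$ only adds links $a\to y$ for parents $a$ of $v$ and creates no bidirected edges, so consistency is preserved (Lemma~\ref{lem:peel}). Removing an arbitrary $m_1$ with several children can instead produce bidirected edges and an empty $\mathcal M$.
\item \textbf{Positivity, the hardest part.} Suppose positivity fails after removing $v$. One lifts a model of the reduced set with $x\not\Rightarrow y$ back to the original set via sink attachment. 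Together with a CPDAG clique-ordering argument, this shows that $x\to v$ and that some parent $c$ of $y$ is adjacent to $x$ but not to $v$. The result is the chordless cycle $x - v - y - c - x$ with one collider (Lemma~\ref{lem:trace}).
\item \textbf{Role of simplicity.} Simplicity of $M_t$ forbids such cycles in any consistent marginal model (Lemma~\ref{lem:one-collider}). That in turn requires the monotone-adjacency property of latent projections in simple worlds (Lemma~\ref{lem:marg-adj}).
\end{enumerate}

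None of this follows from ``simplicity guarantees that removing $m_1$ neither destroys the source V-structure nor creates a new inconsistent one.''
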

\begin{proof}
	See Appendix \ref{sec:proofs}.
\end{proof}

The causal discovery literature treats the observed variable set as given and asks which orientations the data compel \citep{verma_equivalence_1990,meek_causal_1995}, or mines a fixed dataset for locally compelled links \citep{cooper_simple_1997,mani_theoretical_2012}; work on intervention design instead minimizes the number of experiments \citep{ChooShiragur2023,ShiragurZhangUhler2023}. Theorem~\ref{thm:direct} concerns a third problem: selecting which variables to observe so that purely observational data compel a targeted ancestral relation. The answer it provides is structural: any minimal such selection compels the relation through a direct adjacency, so certifying $x\Rightarrow y$ through mediating links is never cheaper. The computational complexity of finding a minimal selection is, to our knowledge, an open question.

\medskip 

To conclude Section \ref{sec:persuade_blank}, the feasibility and the difficulty of persuasion depend on both the sophistication of the receiver and the topology of the causal graph. Persuading a na{\"i}ve receiver is trivial, as any correlation can be presented as causal, so long as this correlation exists in the data. In contrast, persuading a sophisticated receiver is feasible in a smaller range of scenarios (only when the true model does not contradict the proposed narrative), and always involves revealing additional variables. In the best case, only one or two suitable (obvious or non-obvious) causes need to be revealed in addition to the variables of interest $x$ and $y$ to create a persuasive narrative. In the absence of such variables, persuasion may become arbitrarily difficult or even impossible. However, if it is possible, then (at least in simple worlds) offering a model that directly connects $x$ and $y$ is optimal.

\section{Debunking of pre-existing models} \label{sec:preexist}

The previous section investigated how to persuade a receiver with a blank mind. However, what if the receiver already has a subjective model in mind by the time the sender enters the picture? Can the sender change the receiver's mind? As we show in this section, debunking a receiver's model is qualitatively similar to persuading a sophisticated receiver, even when the receiver is actually na{\"i}ve. We begin by introducing the modified problem.

\subsection{Modified setup: Receiver with a pre-existing model}

Suppose the receiver is initially aware of a subset $\Omega_r \subset \Omega_t$. This means that they only observe the data $P|\Omega_r$ on variables in $\Omega_r$ but not other variables. The receiver further has some subjective model of the world $M_r=(\Omega_r,C_r)$ that is consistent with $P|\Omega_r$.

As before, the sender knows the true model $M_t$ and can propose a model $M_s$ to the receiver. The sender can propose any model $M_s$ that is consistent with $P|\Omega_s$ and is such that $\Omega_r \subseteq \Omega_s \subseteq \Omega_t$ (the sender knows the set of variables known to the receiver and cannot make the receiver forget any variables). Alternatively, the sender can stay silent, $M_s = \varnothing$.

For the receiver's choice procedure to be well-defined, we must specify when the initial model $M_r$ remains compatible with data involving additional variables. Given $\Omega_r\subseteq\Omega_s$, call a model $M\in\mathcal M(\Omega_s)$ a \emph{consistent extension} of $M_r$ to $\Omega_s$ if it preserves every causal link in $M_r$: i.e., for all $a,b\in\Omega_r$, $a\to_r b$ implies $a\Rightarrow_M b$. We say that the disclosure $\Omega_s$ \emph{debunks} $M_r$ if $M_r$ has no consistent extension to $\Omega_s$.\footnote{We also abuse the notation and say that model $M_s = (\Omega_s,C_s)$ debunks $M_r$ if disclosure $\Omega_s$ does so.} 

Given the sender's proposal $M_s$, we assume that if there exists a consistent extension of $M_t$ to $\Omega_s$, then the receiver adopts it (breaking ties arbitrarily) and rejects the sender's proposed model. If no such extension exists, the receiver discards $M_r$ and evaluates $M_s$ according to the same acceptance rule as before. For instance, take our illustrative example in Section \ref{sec:example}. The employee's subjective model after hearing the business school advertisement is as in Figure~\ref{fig:pic3}(b). If the employer then reveals variable $a$, the employee's subjective model does not contradict the new data, and so their subjective model after this conversation is as in Figure~\ref{fig:pic4}(b).

We again assume that the sender's objective is to \emph{persuade} the receiver to accept a model $M_s$ such that either $x \Rightarrow_s y$ for some $x,y \in \Omega_r$, or $x$ and $y$ are not adjacent (both cases are considered).\footnote{Our results regarding when and how the sender can successfully rule out a link between $x$ and $y$ also apply directly to the persuasion problem considered in Section \ref{sec:persuade_blank}. This case was not covered previously because it is difficult to think of a real-world setting in which the sender would want to reveal two variables that the receiver was not aware of and subsequently try to persuade the receiver that the two are not causally related.} 
We focus on the interesting case where the receiver is initially aware of both variables, $x,y \in \Omega_r$, but their existing model does not satisfy the sender's objective. Further, we maintain the sender's secondary objective to persuade using the smallest number of variables, i.e., to minimize $|\Omega_s|$.

The sender's problem in this modified model is twofold. First, they choose which new variables $\Omega_s \setminus \Omega_r$ to present to the receiver to debunk their initial model $M_r$: the expanded data must admit no consistent extension of $M_r$.  Section \ref{sec:debunk} examines when and how the receiver's model can be debunked.
The second part of the sender's problem is to propose a new consistent model $M_s$ that the receiver of a given type accepts. We reserve \emph{successful persuasion} for a proposal satisfying both requirements. Thus, debunking the incumbent model is necessary but need not be sufficient for persuasion. Sections \ref{sec:persuade} and \ref{sec:breaklink} discuss when and how the sender can successfully persuade the receiver.

\subsection{Debunking the receiver's model} \label{sec:debunk}

One sufficient way to debunk $M_r$ is to rule out one of its causal links. We say that the disclosure $\Omega_s$ \emph{debunks the link} $a\to_r b$ if $a\not\Rightarrow_M b$ for every $M\in\mathcal M(\Omega_s)$. Debunking any one link is sufficient to debunk $M_r$, because no consistent extension can then preserve all of its links. The converse need not hold: the links may each be individually compatible with the new data while not being jointly compatible within a single model (see Example~\ref{ex:debunk_nolink} in Supplementary Appendix~\ref{sec:examples}). Our results below focus on the link-level route to debunking.

Debunking a link for both types of receivers is essentially equivalent to persuading a sophisticated receiver about that link, as characterized in Section \ref{sec:persuade_soph}. This is because debunking a link $x \leftarrow_r y$ requires disclosing data that are incompatible with the link---which means either compelling a link $x \to_s y$ (or, more generally, showing that $x \Rightarrow_M y$ for all $M \in \mathcal{M}(\Omega_s)$), or showing that $x$ and $y$ are correlated purely due to confounders but $x \not\Leftarrow_s y$ and $x \not\Rightarrow_s y$ (this option is explored in Section \ref{sec:breaklink}).
Consequently, our results regarding what a sophisticated receiver can and cannot be persuaded of have direct implications for which models can and cannot be debunked. We start by stating a corollary of Theorem~\ref{thm:no_persuade_defective}, showing that our notion of a defective link (and a defective model) provides a necessary condition for which links can be debunked.

\begin{corollary} \label{cor:no_debunk_truth}
	The sender can never debunk a non-defective link.
\end{corollary}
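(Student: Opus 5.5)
The plan is to reduce the corollary to the mechanism behind Theorem~\ref{thm:no_persuade_defective}. Take a link $a\to_r b$ in $M_r$ that is non-defective, i.e. $a\Rightarrow_t b$. Debunking this link with a disclosure $\Omega_s\supseteq\Omega_r$ means that $a\not\Rightarrow_M b$ for every $M\in\mathcal M(\Omega_s)$. So it suffices to show the following: for every admissible $\Omega_s$ with $a,b\in\Omega_s$, there is at least one $M\in\mathcal M(\Omega_s)$ with $a\Rightarrow_M b$.

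For the construction of $M$ I would not reprove anything; I would invoke the step already established in the proof of Theorem~\ref{thm:no_persuade_defective}. That proof shows that for every disclosure $\Omega_s$ the consistent set $\mathcal M(\Omega_s)$ contains a DAG preserving every directed link of the latent projection of $M_t$ onto $\Omega_s$. This uses the latent-projection properties and the loyal-equivalent-graph result of \citet{zhang2005transformational}.

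The key check is that $a\Rightarrow_t b$ survives this projection as an ancestral relation. Since $a,b\in\Omega_s$, a directed path $a\to_t\cdots\to_t b$ in $M_t$ projects onto a directed path among observed variables in the latent projection. Each maximal stretch of the path through latent nodes becomes a directed edge between its observed endpoints. Hence $a$ is an ancestor of $b$ in the projection. Because the loyal DAG preserves every directed edge of the projection, it preserves this directed path, giving $a\Rightarrow_M b$. Consistency of $M$ comes directly from the cited construction, and nonemptiness of $\mathcal M(\Omega_s)$ follows as well. Thus no $\Omega_s$ debunks the link.

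The main obstacle is a subtlety in the projection itself. The latent projection is in general a mixed graph with bidirected edges, so I must confirm that the edges of the ancestral path are genuinely directed there and are not absorbed or reinterpreted. In a maximal ancestral graph, if $u$ is an ancestor of $v$ in $M_t$, any edge between them is oriented $u\to v$, and ancestral relations among observed nodes are retained exactly. The statement of the theorem's proof (``preserves every directed link'') therefore carries over to ancestral paths. I would spell out this step. Everything else is a direct restatement, with $(x,y)$ in Theorem~\ref{thm:no_persuade_defective} replaced by $(b,a)$ and ``cannot compel $b\Rightarrow a$'' strengthened to ``cannot exclude $a\Rightarrow b$''.

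Finally, I would note that debunking $M_r$ via this link is the only route at issue here. The corollary is stated at the link level, so no joint-compatibility argument is needed.
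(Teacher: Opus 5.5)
Your reduction and your handling of the projection follow the paper's own route. The corollary is Lemma~\ref{lem:exists_modconswtrue} applied with $(x,y)=(b,a)$, i.e.\ Lemma~\ref{lem:realizability} combined with preservation of ancestry under latent projection. Your check that each maximal latent stretch of $a\Rightarrow_t b$ becomes a directed projected edge correctly unpacks what the paper cites from \citet{richardson_ancestral_2002}.

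The step that fails is your assertion that ``nonemptiness of $\mathcal M(\Omega_s)$ follows as well,'' together with your paraphrase that the construction works ``for every disclosure $\Omega_s$.'' Nonemptiness is a hypothesis of Lemma~\ref{lem:realizability}, not an output. It is exactly what guarantees that the Markov-equivalence class of $L(\Omega_s)$ contains a DAG. That in turn means no bidirected edge is invariant, so the loyal equivalent graph of \citet{zhang2005transformational} is itself a DAG. Without it, the loyal graph can keep bidirected edges and you obtain no element of $\mathcal M(\Omega_s)$.

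This affects the truth of the statement, not just its proof. Supersets $\Omega_s\supseteq\Omega_r$ with $\mathcal M(\Omega_s)=\emptyset$ do exist: in Example~\ref{exp:decept}, $\Omega_r\cup\{z\}$ admits no consistent model. For any such disclosure, the defining condition ``$a\not\Rightarrow_M b$ for every $M\in\mathcal M(\Omega_s)$'' holds vacuously. Every link of $M_r$, non-defective ones included, would then count as debunked.

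The corollary is therefore only true over feasible disclosures, and that is how the paper reads it. In Section~\ref{sec:preexist} the sender must propose a model consistent with $P|\Omega_s$. The proof of Theorem~\ref{thm:no_persuade_defective} quantifies only over $\Omega_s$ with $\mathcal M(\Omega_s)\neq\emptyset$.

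To repair your proof, define ``admissible'' as $\Omega_r\subseteq\Omega_s$ with $\mathcal M(\Omega_s)\neq\emptyset$. Pass nonemptiness into Lemma~\ref{lem:realizability} as an assumption, and drop the claim that it is derived. The rest of your argument then goes through as written.
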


Corollary \ref{cor:no_debunk_truth} argues that the sender can debunk only defective links in the receiver's model, that is, causal relations that the receiver believes in, $x \leftarrow_r y$, but which do not exist in reality, $x \not\Leftarrow_t y$. In particular, if the receiver's model $x \leftarrow_r y$ is incorrect only in that it omits some proxy variable $a$ (meaning that the true model is such that $x \leftarrow_t a \leftarrow_t y$), then revealing $a$ would only make the receiver expand their model, but not abandon it. 
We can further state the following result (which is not immediately implied by Corollary \ref{cor:no_debunk_truth} due to the point above that debunking a link is not equivalent to debunking a model, but follows immediately from Lemma~\ref{lem:realizability} in the proof of Theorem~\ref{thm:no_persuade_defective}).

\begin{corollary} \label{cor:no_debunk_truth_m}
	The sender can never debunk a non-defective model.
\end{corollary}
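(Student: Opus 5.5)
We must show that if $M_r$ is non-defective, meaning every link $a\to_r b$ satisfies $a\Rightarrow_t b$, then for every disclosure $\Omega_s\supseteq\Omega_r$ there is a consistent extension of $M_r$ to $\Omega_s$. The plan is to build one explicit extension: the model that faithfully reproduces the true ancestral relations among the disclosed variables. The paper attributes this to Lemma~\ref{lem:realizability} in the proof of Theorem~\ref{thm:no_persuade_defective}, which I take to state the following. For any $\Omega\subseteq\Omega_t$ with $\mathcal M(\Omega)\neq\emptyset$, there is a model $M^\ast\in\mathcal M(\Omega)$ that preserves every ancestral relation of the true model among variables in $\Omega$: for $a,b\in\Omega$, $a\Rightarrow_t b$ implies $a\Rightarrow_{M^\ast} b$.

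This $M^\ast$ comes from the latent projection of $M_t$ onto $\Omega$. Its directed edges encode exactly the ancestral relations $a\Rightarrow_t b$ that pass through marginalized variables. We then convert it to a DAG with the loyal-equivalent-graph result of \citet{zhang2005transformational}, which keeps all arrowheads and tails of the directed edges and so keeps all ancestral relations.

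The steps are as follows.

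\begin{enumerate}[nosep]
\item Fix any $\Omega_s$ with $\Omega_r\subseteq\Omega_s\subseteq\Omega_t$. If $\mathcal M(\Omega_s)=\emptyset$, the sender cannot propose any consistent $M_s$ with this disclosure, so it is not a feasible proposal and cannot be used to debunk. We therefore assume $\mathcal M(\Omega_s)\neq\emptyset$.
\item Invoke Lemma~\ref{lem:realizability} to obtain $M^\ast\in\mathcal M(\Omega_s)$ preserving all true ancestral relations among the variables in $\Omega_s$.
\item Take any link $a\to_r b$ of $M_r$. Non-defectiveness gives $a\Rightarrow_t b$. Since $a,b\in\Omega_r\subseteq\Omega_s$, step 2 gives $a\Rightarrow_{M^\ast} b$.
\item So $M^\ast$ is a consistent extension of $M_r$ to $\Omega_s$. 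Because $\Omega_s$ was arbitrary, no disclosure debunks $M_r$.
\end{enumerate}

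The argument must work at the level of the whole model, not link by link. Corollary~\ref{cor:no_debunk_truth} only yields, for each link separately, some consistent model containing that link. The links could fail to be jointly realizable in a single model, which is exactly the gap noted before this corollary. A single $M^\ast$ that simultaneously respects all true ancestral relations closes that gap.

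The main obstacle is the content of Lemma~\ref{lem:realizability} itself. First, we need the latent projection's ancestral structure to be Markov and faithful to $P|\Omega_s$ whenever some consistent DAG exists. Second, the loyal transformation must not flip any directed edge. If the lemma as stated only preserves direct (projected) edges, I would fill the gap with transitivity: every $a\Rightarrow_t b$ among disclosed variables corresponds to a directed path in the projection, and those projected edges are preserved.
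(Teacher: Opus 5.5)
Your proposal is correct and is essentially the paper's argument. Lemma~\ref{lem:realizability} gives a single $M_q\in\mathcal M(\Omega_s)$ that keeps every directed edge of $L(\Omega_s)$. Your transitivity fallback supplies the ancestral-preservation fact the paper invokes in Lemma~\ref{lem:exists_modconswtrue}: consecutive disclosed vertices on a true path $a\Rightarrow_t b$ are joined by a directed inducing path through undisclosed vertices, hence by a directed projection edge. So $M_q$ preserves every link of the non-defective $M_r$ at once, and restricting to disclosures with $\mathcal M(\Omega_s)\neq\emptyset$ matches the paper's implicit requirement that the sender propose a consistent model.
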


Corollary \ref{cor:no_debunk_truth_m} says that if the receiver's model is a simplified version of the truth, in the sense that it only omits some variables but does not contain any defective links, then such a model cannot be debunked by the sender. This is true because a simplified version of the truth always admits a consistent extension, and cannot be debunked. 
It follows that the sender can only hope to successfully persuade the receiver if the receiver's pre-existing model is defective. 

The converse is not automatically true: not every defective model can be debunked. There appears to be no simple necessary-and-sufficient characterization of when debunking is possible. However, richness is a trivial sufficient condition: if the true model is rich, then any defective receiver's model can be debunked by revealing the true model $M_t$. The relevant question then becomes one of cost: is it necessary to reveal the whole true model, or would a small, well-chosen subset of variables suffice?
We next show that when the true model is simple and rich, Proposition~\ref{prop:persuade_sophist} can be extended to debunking models. Specifically, we argue that obvious or non-obvious causes can be used to ``easily'' debunk a defective model.

\begin{theorem}\label{thm:debunk_simple}
	Suppose the receiver's model $M_{r}$ is consistent with $P|\Omega_{r}$ and has a defective link $x \gets_{r} y$. Then:
	\begin{enumerate}
		\item If there exists an obvious cause $z$ of $y$ given $x$, then the link $x \gets_{r} y$, and hence $M_{r}$, can be debunked by revealing one new variable: $\Omega_s = \Omega_r \cup \{z\}$. 
		\item If $M_{t}$ is simple and rich, $x\Rightarrow_t y$, and a non-obvious cause of $y$ given $x$ exists, then the link $x \gets_{r} y$, and hence $M_{r}$, can be debunked by revealing at most two new variables.
	\end{enumerate}
\end{theorem}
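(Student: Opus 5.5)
For part 1, I would avoid the IC machinery and argue only with the correlation relation from Lemma~\ref{lem:consistent_models}. Set $\Omega_s=\Omega_r\cup\{z\}$. Since $z\in\bar C_t(y)$, the variables $z,y$ are correlated in $M_t$. Since $z$ is an obvious cause, $z$ and $x$ are not correlated in $M_t$. By Lemma~\ref{lem:consistent_models}, every $M\in\mathcal M(\Omega_s)$ has the same pattern: $z$ and $y$ correlated, $z$ and $x$ not.

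Now suppose some such $M$ had $y\Rightarrow_M x$. There are three ways $z$ and $y$ can be correlated in $M$:
\begin{itemize}[nosep]
\item if $z\Rightarrow_M y$, then $z\Rightarrow_M x$;
\item if $y\Rightarrow_M z$, then $y$ is a common ancestor of $z$ and $x$;
\item if they share an ancestor $c$, then $c\Rightarrow_M y\Rightarrow_M x$ makes $c$ a common ancestor of $z$ and $x$.
\end{itemize}
Each case makes $z$ and $x$ correlated in $M$, a contradiction. So $y\not\Rightarrow_M x$ for every consistent $M$, which debunks the link $x\gets_r y$ and hence $M_r$.

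For the count of ``one new variable'', I need $z\notin\Omega_r$. The same argument applied to $M_r$ itself would contradict its consistency if $z\in\Omega_r$. So $z$ is genuinely new, or, trivially, the claim holds with zero new variables.

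For part 2, I would reduce to a generalisation of the same argument. If I can reveal variables $a,a'$ with $a\perp a'$, both ancestors of $x$ in $M_t$, and with $x$ d-separating each of them from $y$, then I aim to show that every $M\in\mathcal M(\Omega_r\cup\{a,a'\})$ has $x\Rightarrow_M y$. The steps, in order, are:
\begin{enumerate}[nosep]
\item $a$ and $a'$ are uncorrelated in $M$ but become dependent given $x$. Hence $M$ must contain a collider on an $a$--$a'$ path whose center is $x$ or an ancestor of $x$. Consequently $a$ and $a'$ each have a directed path into $x$ in $M$.
\item Combine this with $(a\perp y\mid x)$, $a\not\perp y$, and Markov/faithfulness. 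Then $y\Rightarrow_M x$ is impossible: it would give the chain $a\Rightarrow x\Leftarrow y$, meaning $x$ is a collider, and conditioning on $x$ could not render $a,y$ independent.
\item Steps 1--2 already debunk $x\gets_r y$. Since $x,y$ are correlated and $x\not\Leftarrow_M y$, the ancestor-based argument from part 1 also yields $x\Rightarrow_M y$.
\end{enumerate}

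The remaining task, which I expect to be the main obstacle, is to produce such $a,a'$ from simplicity and richness. I would start from the given non-obvious cause $w$ and follow the last link $u\to_t x$ of a path $w\Rightarrow_t x$. By richness this link is compelled by $P$, so by the Meek rules it traces back to some collider $b\to_t c\gets_t b'$ with $c\Rightarrow_t x$. By simplicity, $b\perp b'$, so these are the candidates $a=b$, $a'=b'$. Two things then need checking:
\begin{itemize}[nosep]
\item $x$ d-separates $b$ (and $b'$) from $y$. This should follow because $w$ is non-obvious, i.e.\ all influence of the upstream block on $y$ passes through $x$. If this fails for the traced collider, I would try to replace one of $b,b'$ by $w$ itself, or fall back on part 1, since a failure seems to produce an obvious cause.
\item The extra variables already in $\Omega_r$ do not spoil steps 1--2. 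I would handle this by working with statements about correlation and d-separation in $M$ rather than with IC adjacencies.
\end{itemize}
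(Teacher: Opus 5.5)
Your part~1 is correct, and it is a little leaner than the paper's route through Lemma~\ref{lem:cert_oc}: you need only $z\not\perp y$, $z\perp x$ and Lemma~\ref{lem:consistent_models}, not $(x\not\perp z\mid y)$. In part~2, your steps~1--2 are essentially the paper's certificate (Lemma~\ref{lem:cert_noc}). Tracing a compelled link into $x$ back through Meek's rules to an unshielded collider centered in $\bar C_t(x)\cup\{x\}$ is a workable, if heavier, substitute for the source-flip argument of Lemma~\ref{lem:Vstr_upstream_direct}.

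The genuine gap is your first check. You never show that $x$ d-separates the collider parents $b,b'$ from $y$. The fallback you sketch also points the wrong way: a failure would not yield an obvious cause. The missing idea is that a single non-obvious cause $w$ has a global consequence. There can be no confounder $c\in\tilde C_t(x,y)$, because the walk $w\Rightarrow_t x\Leftarrow_t c\Rightarrow_t y$ has its only collider occurrence at $x$ and is therefore active given $x$ (Lemma~\ref{lem:noc_to_unconf}). Once $x\Rightarrow_t y$ and $\tilde C_t(x,y)=\emptyset$, \emph{every} ancestor $b$ of $x$ is d-separated from $y$ by $x$ (Lemma~\ref{lem:unconf_to_noc}). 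Indeed, take a walk from $b$ to $y$ that is active given $x$. If it avoids $x$, it is a fork $b\Leftarrow_t s\Rightarrow_t y$ whose summit (possibly $b$ itself) is a confounder by acyclicity. Otherwise every occurrence of $x$ on it is a collider occurrence, and the suffix after the last one is a fork $x\Leftarrow_t s\Rightarrow_t y$ with $s$ a confounder. Only this makes the parents of an arbitrary upstream collider admissible certificate variables; without it, part~2 is not established.

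Two smaller points. First, in step~1, $a\perp a'$ and $(a\not\perp a'\mid x)$ do not give directed paths $a\Rightarrow_M x$. They give only an active walk $a\Leftarrow_M s_1\Rightarrow_M x\Leftarrow_M s_2\Rightarrow_M a'$, and $s_1\neq a$ does occur. For example, if $M_t$ is $a\to m\to x\leftarrow a'$, $x\to y$ and $m$ is also revealed, reversing $a\to m$ gives a consistent $M$ in which $a\not\Rightarrow_M x$. Step~2 survives once you use the trek: $y\Rightarrow_M x$ would make $a\Leftarrow_M s_1\Rightarrow_M x\Leftarrow_M y$ active given $x$, contradicting $(a\perp y\mid x)$. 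Second, step~3 is unnecessary for debunking, and it does not follow from the part~1 argument. If you want $x\Rightarrow_M y$, rule out pure confounding by the same active-walk argument, with $y\Rightarrow_M x$ replaced by a common ancestor of $x$ and $y$.
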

\begin{proof}
	See Appendix \ref{sec:proofs}.
\end{proof}

Theorem~\ref{thm:debunk_simple} generalizes Proposition~\ref{prop:persuade_sophist} to the case of the receiver with an arbitrary pre-existing model. In Proposition~\ref{prop:persuade_sophist}, the receiver did not have any model in mind, so it was safe to assume the causes would be adjacent to the variables of interest. Here, by contrast, the receiver already holds a subjective model involving additional variables that might stand between the candidate causes and the variables of interest. Nevertheless, the result continues to hold.
This equivalence between debunking a defective link and persuading a blank-mind sophisticated receiver has no analog in the causal discovery literature, which does not model an agent whose beliefs must be corrected.
Examples \ref{exp:polls}--\ref{exp:inflation} in Supplementary Appendix \ref{sec:examples} illustrate how the insights of Theorem~\ref{thm:debunk_simple} could be applied in different real-world settings.

The proof of Theorem~\ref{thm:debunk_simple}, much like that of Proposition~\ref{prop:persuade_sophist}, proceeds by constructing a V-structure either upstream of $y$ (using an obvious cause), or upstream of $x$ (using a non-obvious cause). With an obvious cause, the construction is straightforward. With a non-obvious cause, we use model richness to show that there must exist another independent non-obvious cause. Simplicity then ensures the projection preserves identifiability, so the two causes orient $x \to_s y$ and debunk the receiver's defective link. Alternatively, part 2 of the theorem can be formulated without relying on richness, in a way it was done in Proposition~\ref{prop:persuade_sophist}. 

As with persuasion, if the premise of Theorem~\ref{thm:debunk_simple} does not hold, it may be difficult to debunk the receiver's defective model. That is, the sender may then have to reveal arbitrarily many new variables. 
For example, suppose that the receiver's model is $x \leftarrow_r y$ and is defective, meaning $x \not\Leftarrow_t y$. This opens up two cases: either $x \Rightarrow_t y$, or $x \not\Rightarrow_t y$. 
In the former case, Example~\ref{exp:nocauses} from Supplementary Appendix \ref{sec:examples} applies, which demonstrates that the sender may need to reveal arbitrarily many variables to debunk the receiver's model $x \leftarrow_r y$. The latter case is discussed in Section \ref{sec:breaklink}.
Further, Theorem~\ref{thm:debunk_simple} does not guarantee that the sender can propose a consistent replacement model using the same disclosure $\Omega_s$. This is illustrated by Example~\ref{exp:decept} in Supplementary Appendix \ref{sec:examples}.
Finally, we do not explore the possibility of debunking the receiver's model without debunking any of its links, which could be possible in principle.

\subsection{Persuading a receiver with a pre-existing model} \label{sec:persuade}

The previous subsection asked when can a defective model be (easily) debunked. We now turn to the second requirement: replacing the debunked model with a model that the receiver accepts.
The sender wants to persuade the receiver that $x \Rightarrow_s y$, and we assume the receiver's initial model is such that $x \not\Rightarrow_r y$.
We consider different cases corresponding to whether the respective link is present in the true model or not.

\paragraph{Truth is on the sender's side.}
If the true model $M_t$ is such that $x \Rightarrow_t y$ and the world is rich, then the sender can always persuade the receiver that $x \Rightarrow_s y$. By definition of richness, this can be done by revealing the true model: $M_s = M_t$. In other words, while persuasion may be difficult (due to $M_t$ containing many variables and links), it is possible in principle. Further, the debunking certificates identified in Theorem~\ref{thm:debunk_simple} can also support an accepted replacement. Hence, if an appropriate cause of $y$ exists, then successful persuasion requires at most two additional variables. This applies regardless of whether the receiver is na{\"i}ve or sophisticated. The result is summarized by the following corollary.

\begin{corollary}
	If the true model is simple and rich and such that $x \Rightarrow_t y$ and there exists an (obvious or non-obvious) cause of $y$ given $x$, then the sender can successfully persuade the receiver that $x \Rightarrow_s y$ with at most two additional variables.\footnote{Note specifically that the assumption $x \Rightarrow_t y$ rules out the situation described in Figure~\ref{fig:decept} and Example~\ref{exp:decept} from Supplementary Appendix \ref{sec:examples}, where revealing an obvious cause produces an inconsistent model.}
\end{corollary}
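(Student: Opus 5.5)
The plan is to reuse the debunking certificates from the proof of Theorem~\ref{thm:debunk_simple} and to show that the same disclosure $\Omega_s$ also supports a consistent replacement model that every consistent model agrees on. Successful persuasion needs two things: $\Omega_s$ debunks $M_r$, and the receiver then accepts $M_s$. A sophisticated receiver accepts $M_s$ if $\mathcal{M}(\Omega_s)\neq\emptyset$ and $x\Rightarrow_M y$ for all $M\in\mathcal{M}(\Omega_s)$. A na{\"i}ve receiver only needs $M_s$ to be consistent. It therefore suffices to exhibit a disclosure with $|\Omega_s\setminus\Omega_r|\le 2$ and three properties: $\mathcal{M}(\Omega_s)$ is nonempty, every element of it has $x\Rightarrow y$, and this compulsion rules out $x\leftarrow_r y$ or any other link of $M_r$ inconsistent with it. Choosing $M_s\in\mathcal{M}(\Omega_s)$ then handles both receiver types at once. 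The case where $M_r$ lacks $x\Rightarrow_r y$ only because $x,y$ are unlinked would be handled identically. Richness gives that the relevant link-level debunking applies.

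The first step is nonemptiness of $\mathcal{M}(\Omega_s)$. Here I would invoke the realizability result (Lemma~\ref{lem:realizability} in the proof of Theorem~\ref{thm:no_persuade_defective}). It says that for every $\Omega_s\subseteq\Omega_t$ the set $\mathcal{M}(\Omega_s)$ contains a DAG preserving every directed ancestral relation of the latent projection of $M_t$. Since $x\Rightarrow_t y$, this yields a consistent model $M^*$ with $x\Rightarrow_{M^*}y$. This is exactly what the footnote flags: the assumption $x\Rightarrow_t y$ is what excludes the inconsistent outcome of Example~\ref{exp:decept}.

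The second step is compulsion. With an obvious cause $z$, take $\Omega_s=\Omega_r\cup\{z\}$. As in Theorem~\ref{thm:debunk_simple}(1), $z$ and $x$ are uncorrelated while both are correlated with $y$. Hence the V-structure at $y$ (possibly through intermediate variables of $\Omega_r$) forces every consistent model to orient the path from $x$ into $y$ rather than out of it, so $x\Rightarrow_M y$. With a non-obvious cause, I would follow Theorem~\ref{thm:debunk_simple}(2). Richness supplies a second non-obvious cause $w$ independent of the first, $v$. Simplicity ensures that the V-structure $v\to x\leftarrow w$ is visible without controls even after projection. The absence of a V-structure at $(v,x,y)$, because $x$ d-separates $v$ and $y$, then propagates the orientation $x\to y$ along the chain via Meek's rules. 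So $x\Rightarrow_M y$ for all $M\in\mathcal{M}(\Omega_s)$ with at most two new variables.

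The main obstacle is the second step when $x$ and $y$ are not adjacent in the projection onto $\Omega_s$, because intermediate variables of $\Omega_r$ lie between them. Compelling a single adjacency does not then immediately give an ancestral relation in every consistent model. My first attempt would be to argue that the debunking proof of Theorem~\ref{thm:debunk_simple} already establishes exactly $x\Rightarrow_M y$ for every consistent $M$, since that is how it rules out $x\leftarrow_r y$; if so, the corollary is just that statement combined with nonemptiness. Failing that, I would use the proof of Theorem~\ref{thm:direct} to show that, in a simple world, the compelled orientations propagate along every directed $x$-to-$y$ path in the CPDAG.
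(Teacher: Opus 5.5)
Your nonemptiness step fails, and it is exactly the point the footnote is about. Lemma~\ref{lem:realizability} \emph{assumes} $\mathcal M(\Omega_s)\neq\emptyset$. It says that if some DAG is consistent with $P|\Omega_s$, then one of them keeps every directed link of the latent projection. It cannot produce a consistent DAG when the projection is Markov equivalent to none (Examples~\ref{exp:IC_incons} and \ref{exp:decept}). From $x\Rightarrow_t y$ you therefore only get ``if $\mathcal M(\Omega_s)\neq\emptyset$, some consistent model has $x\Rightarrow y$''.

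The paper's one-line justification rests on the same unproved claim, and the claim is false. Take $M_t$ with links $g_1\to x\leftarrow g_2$, $g_1\to u_1\leftarrow h_1$, $g_2\to u_2\leftarrow h_2$, $h_1\to z\leftarrow h_2$, $x\to y\leftarrow z$. Every link lies in an unshielded collider whose parents share no ancestor, so $M_t$ is simple and rich. Moreover $x\to_t y$, $z$ is an obvious cause and $g_1$ a non-obvious cause of $y$ given $x$. Let $\Omega_r=\{x,y,u_1,u_2\}$. The only independence is $u_1\perp u_2$, both $x$ and $y$ are V-centers for it, and $x$--$y$ stays unoriented. So $M_r$: $u_1\to x\leftarrow u_2$, $u_1\to y\leftarrow u_2$, $y\to x$ is consistent, with a defective link $x\leftarrow_r y$.

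Now reveal $z$. We get $x\perp z$ and $(x\not\perp z\mid u_1)$, with $u_1$ tied to $x$ through latent $g_1$ and to $z$ through latent $h_1$; this forces $x\to u_1$. But $u_1\perp u_2$ and $(u_1\not\perp u_2\mid x)$ still force $u_1\to x$. Hence $\mathcal M(\Omega_r\cup\{z\})=\emptyset$.

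Every other disclosure of at most two of $z,h_1,h_2,g_1,g_2$ fails as well:
\begin{itemize}
\item Revealing $z$ or $h_i$ while $g_i$ stays hidden gives the same clash on $x$--$u_i$. With two new variables, revealing $z$ always leaves some $g_i$ hidden.
\item Revealing only $g$'s, with $h_i$ and $z$ hidden, forces colliders at both ends of $u_i$--$y$.
\item Revealing $\{h_i,g_i\}$ forces $u_j\to x$ and $u_j\to y$ ($j\neq i$). Then $g_i$ and $y$ are d-connected given $x$ in every candidate DAG, although $g_i\perp y\mid x$ in $P$.
\end{itemize}
So no such disclosure admits a consistent proposal. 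This step cannot be repaired without an extra hypothesis, for instance nonemptiness of $\mathcal M$ for the certificate disclosure.

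Second, even when $\mathcal M(\Omega_s)\neq\emptyset$, you have not established compulsion for the sophisticated receiver. Lemmas~\ref{lem:cert_oc} and \ref{lem:cert_noc} give only $x\not\Leftarrow_M y$ for every consistent $M$, not $x\Rightarrow_M y$. So your ``first attempt'' is not available. The upgrade is immediate only if $x$ and $y$ stay adjacent in every member of $\mathcal M(\Omega_s)$, and disclosure can break that adjacency. For example, take $x\to c\leftarrow z$, $c\to m\to y$, $z\to m$, $z\to y$, which is simple and rich, with $\Omega_r=\{x,m,y\}$. Then $x$ and $y$ are adjacent over $\Omega_r$, but $x\perp y\mid\{m,z\}$. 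The proof of Theorem~\ref{thm:direct} concerns minimal positive disclosures and does not supply the propagation you need here. Finally, your remark that the case with $x,y$ nonadjacent in $M_r$ is ``handled identically'' is unsupported, since Theorem~\ref{thm:debunk_simple} presupposes the link $x\leftarrow_r y$.
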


\paragraph{Spurious correlation.}
If the true model is such that $x$ and $y$ are not adjacent and $x \not\Rightarrow_t y$, $x \not\Leftarrow_t y$, but rather there exist some confounders $c \in \tilde{C}_t(x,y)$ such that $c \Rightarrow_t x,y$, then two cases are possible.
If $x$ and $y$ are nonadjacent in $M_r$, additional disclosure cannot make them adjacent in any consistent expansion: a d-separating set available in $\Omega_r$ remains available after disclosure. This rules out persuasion through a direct $x$-$y$ link, but the sender may still be able to flip the direction of $x$-$c$ or $c$-$y$ links. If, on the other hand, $x \leftarrow_r y$ (due to the receiver being unaware of some of the confounders $c \in \tilde{C}_t(x,y)$), then the sender can flip this link and persuade the receiver that $x \rightarrow_s y$, as is illustrated by Example~\ref{exp:decept2} below. This is possible even though there is no actual link between $x$ and $y$ in the true model, meaning the sender debunks the receiver's defective model and replaces it with another defective model. This is also possible regardless of whether the receiver is na{\"i}ve or sophisticated.
However, such deception is only possible if the true model allows it and if the receiver is unaware of any causes of $x$ given $y$, as illustrated by Example~\ref{exp:decept} in Supplementary Appendix \ref{sec:examples}.

\begin{example}[Flipping a spurious link] \label{exp:decept2}
	Suppose the true model is as depicted in Figure~\ref{fig:decept2}(a), and the receiver's model is $x \leftarrow_r y$, as in Figure~\ref{fig:decept2}(b). Then the sender can successfully persuade the receiver that $x \to_s y$ by revealing variable $z$ (the proposed model is shown in Figure~\ref{fig:decept2}(c)), in which case $x,y,z$ form a V-structure in the data.
\end{example}
\begin{figure}
	\centering
	
	\begin{minipage}[b]{0.32\textwidth}
		\centering
		
		\includegraphics[scale=0.3]{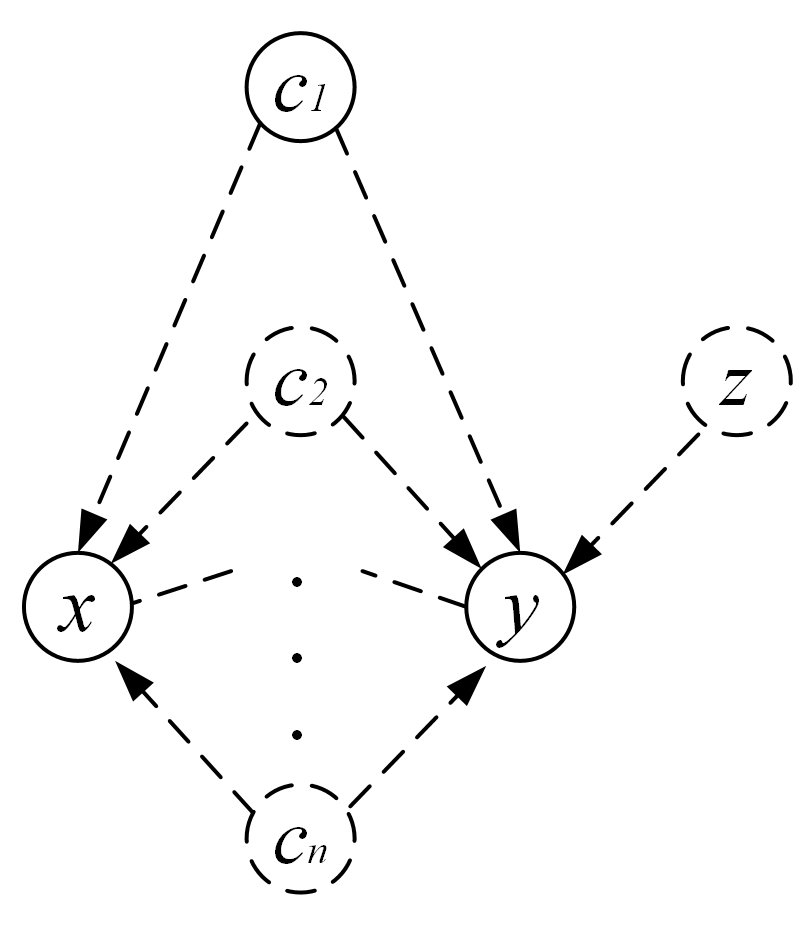}\\
		(a) $M_t$\\
	\end{minipage}
	\hfill
	\begin{minipage}[b]{0.32\textwidth}
		\centering
		
		\includegraphics[scale=0.3]{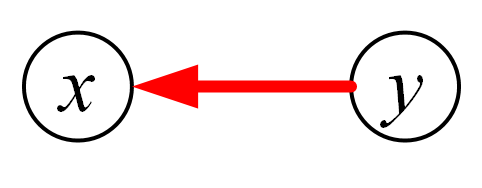}\\
		(b) $M_r$\\
	\end{minipage}
	\hfill
	\begin{minipage}[b]{0.32\textwidth}
		\centering
		
		\includegraphics[scale=0.3]{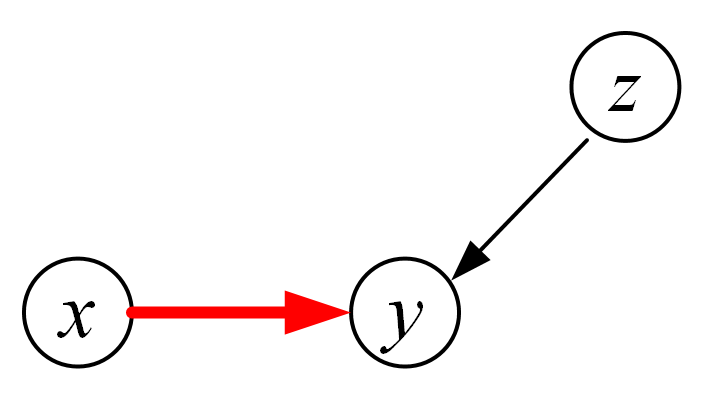}\\
		(c) $M_s$\\
	\end{minipage}
	
	\caption{Replacing a defective model with a defective model. \label{fig:decept2}}
\end{figure}

\paragraph{Truth is against the sender.}
If the true model is such that $x \Leftarrow_t y$, then Corollary \ref{cor:no_debunk_truth} shows that the sender can never debunk \emph{this} link and present data for which only $x \Rightarrow_s y$ is consistent with it. This implies that a sophisticated receiver cannot be persuaded in this case.
However, persuading a na{\"i}ve receiver may be possible if $M_r$ contains some other defective link. If so, the sender can ``nitpick'' the na{\"i}ve receiver's model: debunk some defective link that is not directly related to the targeted link, and then use this opportunity to deceive the receiver. The following example demonstrates how such ``persuasion by nitpicking'' can work.

\begin{figure}
	\centering
		
	\begin{minipage}[b]{0.32\textwidth}
		\centering 
		\includegraphics[scale=0.27]{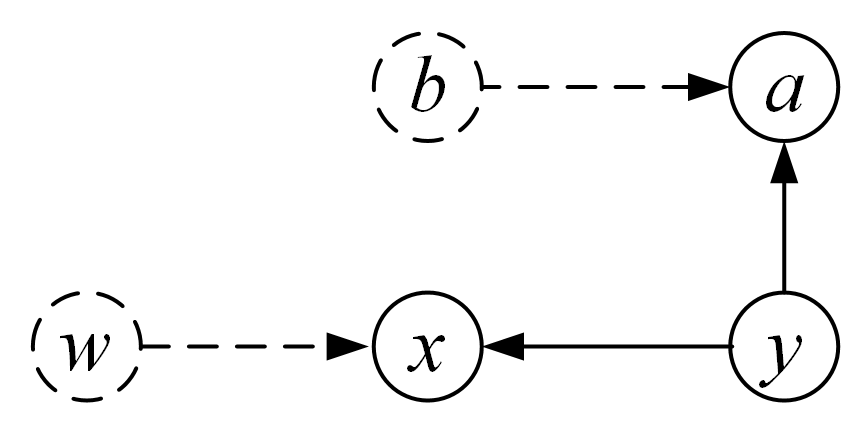}\\
		(a) $M_t$\\
	\end{minipage}
	\begin{minipage}[b]{0.32\textwidth}
		\centering 
		\includegraphics[scale=0.27]{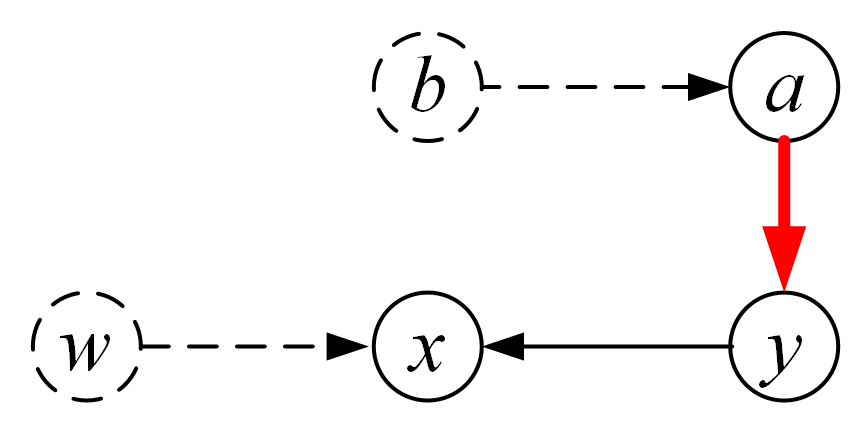}\\
		(b) $M_r$\\
	\end{minipage}
	\begin{minipage}[b]{0.32\textwidth}
		\centering 
		\includegraphics[scale=0.27]{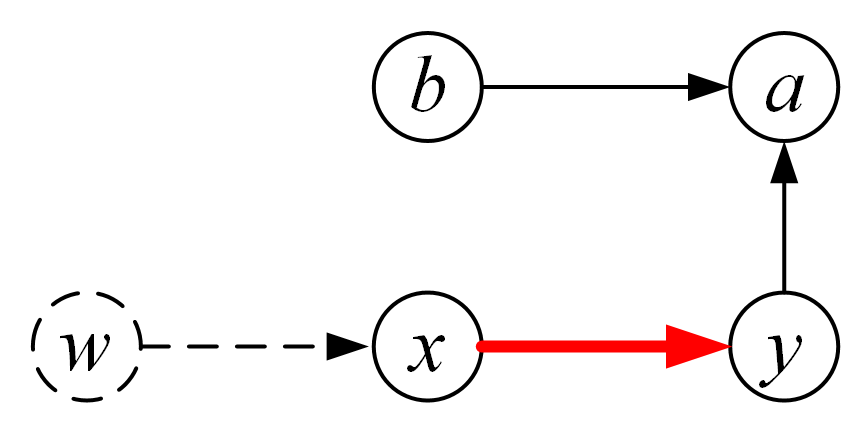}\\
		(c) $M_s$\\
	\end{minipage}
	
	\caption{Persuasion by nitpicking. \label{fig:nitpick}}
\end{figure}
\begin{example}[Persuasion by nitpicking] \label{exp:nitpick}
	Suppose the true world is as shown in Figure~\ref{fig:nitpick}(a). The na{\"i}ve receiver observes variables $a$, $x$ and $y$ and thinks the model is as shown in Figure~\ref{fig:nitpick}(b). Specifically, the receiver correctly believes that $x \leftarrow_r y$, but also mistakenly believes that $a\to_r y$. The sender would like to deceive the receiver and persuade them that $x \to_s y$. Because $x \leftarrow_t y$ is the truth, the sender cannot debunk this link directly. Instead, the sender can debunk the defective link $a\to_r y$. In this example, the sender is able to reveal variable $b$, which is the obvious cause of $a$ given $y$. The sender will then offer the model shown in Figure~\ref{fig:nitpick}(c), which will be accepted by the na{\"i}ve receiver.
\end{example}

As before, successful deception along the lines of Example~\ref{exp:nitpick} requires that the link of interest is not compelled by $P|\Omega_r$: models with both $x \Rightarrow y$ and $x \Leftarrow y$ must be consistent with the data available to the receiver. Otherwise, if the na{\"i}ve receiver can unambiguously infer from the data $P|\Omega_r$ that $x \Leftarrow_r y$, then no nitpicking can help the sender persuade them that $x \Rightarrow_s y$.
Further, the scope for such deception also depends on the relationship between the link of interest and the defective link. If both links trace back to the same V-structure, deception may not be possible, as demonstrated by Example~\ref{exp:nitpick_no} in Supplementary Appendix \ref{sec:examples}.

\subsection{Dissuading a receiver with a pre-existing model} \label{sec:breaklink}

What if the sender wants to rule out a link between $x$ and $y$? In other words, the receiver believes that $x \Leftarrow_r y$, and the sender wants to persuade them that there is no link between $x$ and $y$ (i.e., propose a model such that $x \not\Leftarrow_s y$ and $x \not\Rightarrow_s y$). This could, for example, be the case when the sender tries to debunk a defective model fueled by spurious correlation, which has previously been imprinted on the receiver. 

Ruling out a direct link between two correlated variables $x$ and $y$ requires the sender to demonstrate the proxy and/or confounding variables that capture all sources of correlation between $x$ and $y$. Without such complete disclosure, $x$ and $y$ would still be correlated from the receiver's perspective conditional on any set of other variables known to them, and hence the receiver would still draw a direct causal link between $x$ and $y$. In other words, the sender needs to reveal a set of variables that d-separates $x$ and $y$ (Definition~\ref{def:dsep}). The statement is formalized in the following proposition, which is trivial and is presented without proof.

\begin{proposition} \label{prop:lat_dsep}
	Suppose the receiver's model contains a link $x \leftarrow_r y$. This link is debunked by a sender's model $M_s$ in which $x$ is not adjacent to $y$ only if $\Omega_s$ contains a set of variables $S \subset \Omega_s$ that d-separates $x$ and $y$. Further,
	\begin{enumerate}[nosep]
		\item if $x \leftarrow_t y$ or $x \to_t y$, then no such set $S \subset \Omega_t$ exists;
		\item if $x$ and $y$ are not adjacent in $M_t$, then such a set $S \subset \Omega_t$ necessarily exists.
	\end{enumerate}
\end{proposition}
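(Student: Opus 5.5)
The plan is to prove the three claims in order, each by combining the consistency definition (Markov property plus faithfulness) with faithfulness of the true model $M_t$ to $P$.

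\textbf{Main claim (necessity).} Suppose $M_s$ debunks the link and $x,y$ are nonadjacent in $M_s$. By Step~1 of the IC algorithm, or directly by the Markov property applied to $M_s$, nonadjacency in a consistent model yields a set $S\subseteq\Omega_s\setminus\{x,y\}$ that d-separates $x$ and $y$ in $M_s$. For concreteness I will take $S$ to be the parents of whichever of $x,y$ is later in a topological order of $M_s$. The Markov property then gives $(x\perp y\mid S)$ in $P$. Since $M_t$ is consistent with $P$, faithfulness of $M_t$ says $S$ also d-separates $x$ and $y$ in $M_t$. So $\Omega_s$ contains a set $S$ that d-separates $x,y$ in the true model, which is the sense of Definition~\ref{def:dsep} when no model is specified. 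If $M_s$ is not consistent, the receiver rejects it, so we may restrict attention to consistent $M_s$.

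\textbf{Item 1.} If $x$ and $y$ are adjacent in $M_t$, then the single-edge path $x - y$ has no non-endpoint variable. No $S$ can block it, so no d-separating set exists in $\Omega_t$. Equivalently, by faithfulness and Lemma~\ref{lem:consistent_models}(2), $x$ and $y$ stay adjacent in every consistent model on any $\Omega_s\ni x,y$.

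\textbf{Item 2.} If $x,y$ are nonadjacent in $M_t$, I will use the standard fact for DAGs that a nonadjacent pair is d-separated by the parents of the later variable in a topological order. Say $y$ is not an ancestor of $x$; then $S=C_t(y)$ works. Any path from $y$ leaving through a parent $p$ is blocked at $p$: either $p\in S$ is a non-collider there, or $p$ is a collider on the path, which is impossible since the edge $p\to y$ points out of $p$. Any path leaving through a child of $y$ must eventually contain a collider. That collider is a descendant of $y$, so it is neither in $S$ nor has a descendant in $S$, since that would create a cycle with $y$'s parents. Hence every such path is blocked by clause~1 of Definition~\ref{def:dsep}.

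The main obstacle is making this last path argument airtight. In particular, I must check that the first collider along the child-side path has no descendant in $C_t(y)$, which would force $y\Rightarrow_t y$. I must also check that the argument still works when $x$ is itself a descendant of $y$; since $x$ and $y$ are nonadjacent, $x$ is then not a parent of $y$. To handle this, I will choose the topological order so that the later variable is $y$ if $x\not\Leftarrow_t y$, and swap the roles of $x$ and $y$ otherwise.
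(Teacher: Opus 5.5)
Your proof is correct. The paper states this proposition without proof as trivial, and your argument supplies the reasoning that its surrounding text and the proof of Lemma~\ref{lem:consistent_models} have in mind: nonadjacency in a consistent $M_s$ yields a separating set inside $\Omega_s$, and the Markov property for $M_s$ together with faithfulness of $M_t$ turns it into d-separation in $M_t$; for item 1, the bare edge path (equivalently Lemma~\ref{lem:consistent_models}(2)) rules out any separating set. Your parents-of-the-later-vertex argument for item 2, which you also reuse in the main claim, makes explicit the standard DAG fact the paper takes for granted. The details you flag are handled correctly: the first collider on a child-side path is a descendant of $y$ and so has no descendant in $C_t(y)$, and you swap roles when $y \Rightarrow_t x$.
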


In some situations, it is trivial that no such d-separating set exists. Specifically, if $x \leftarrow_t y$ or $x \to_t y$, then Lemma~\ref{lem:consistent_models} implies that $x$ and $y$ must be adjacent in any consistent subjective model as well. It follows that it would not be possible to dissuade the receiver of a link between $x$ and $y$ in this case, regardless of the receiver's type. 
The remaining cases are explored below.

\paragraph{Truth is on the sender's side.}
Suppose $x$ and $y$ are not adjacent (but correlated) in the true model. Then revealing a d-separating set is necessary to rule out a direct link between the two variables in the receiver's model. This can be as simple as revealing a single hidden confounder (as in the case of polio, see Example~\ref{exp:polio} in Supplementary Appendix \ref{sec:examples}).
However, while finding a minimal d-separating set is a well-studied problem solvable in polynomial time \citep{TianPazPearl1998, vanderZanderLiskiewicz2020}, even the minimal set can be arbitrarily large, making the associated disclosure cost prohibitively high. For example, in the $M_t$ depicted in Figure~\ref{fig:decept2}(a), the only set that d-separates $x$ and $y$ is $S \equiv \{c_1, \dots, c_n\}$, meaning the sender would have to reveal $n$ variables to break a link between $x$ and $y$.

\begin{example}[Immigration and crime]
	Consider immigration $x$ and crime $y$. The two are correlated in aggregate data---largely because immigrants settle in urban, economically distressed areas---but causal studies find no effect of immigration on overall crime \citep[e.g.,][]{bianchi2012,marie_2024}. Yet the causal belief persists and fearmongering continues.\footnote{Donald Trump stated in his presidential campaign announcement speech (June 16, 2015) that immigrants are ``bringing drugs, they're bringing crime, they're rapists.'' Full transcript: \url{https://time.com/3923128/donald-trump-announcement-speech/}, retrieved Apr 03, 2026.} 
	Our model explains why: persuading a receiver that a link is \emph{absent} requires disclosing the d-separating set of the two variables---settlement patterns, local economic conditions, demographics, and so on. An instrument, by contrast, works around the confounders rather than revealing them. So long as any confounder stays hidden, the receiver sees that immigration and crime remain correlated conditional on everything they know, and continues to draw a causal link. 
\end{example}

Further, revealing a d-separating set is necessary but not necessarily sufficient to debunk the receiver's model. Revealing $S$ breaks the direct link between $x$ and $y$, but may not help identify the directions of the links between $x,y$ and all the d-separating variables. For instance, if the true model is $x \leftarrow_t c \to_t y$ and the receiver starts with a model $x \leftarrow_r y$, then revealing $c$ does not by itself debunk the receiver's model. In particular, the receiver would regard the data on $x,y,c$ as consistent with an extended model $x \leftarrow_r c \leftarrow_r y$ and reject the sender's model claiming $x \leftarrow_s c \rightarrow_s y$.


We conclude that breaking a link between $x$ and $y$ is a difficult task. Therefore, if the sender's goal is merely to debunk the receiver's original model $x \leftarrow_r y$ (without regard to which alternative model is put in place), then truthful persuasion may not be the best course of action when $x\not\Leftarrow_t y$ and $x\not\Rightarrow_t y$. Indeed, in Example~\ref{exp:decept2}, a receiver's defective model can be debunked by another defective model via revealing just one new variable, while debunking it by a non-defective model requires revealing arbitrarily many new variables $S = \{c_1, ..., c_n\}$. Persuading with a defective model may thus be easier than proving the truth, regardless of the receiver's type.

\paragraph{Truth is against the sender.}
We now move on to the case in which the sender wants to persuade the receiver that $x\not\Leftarrow_s y$ and $x\not\Rightarrow_s y$ but the truth is against the sender in that either $x \Leftarrow_t y$ or $x \Rightarrow_t y$. In this case, Theorem~\ref{thm:no_persuade_defective} suggests that the sender can never prove conclusively that $x$ and $y$ are correlated purely due to confounders. This then implies that a sophisticated receiver cannot be persuaded. However, if the receiver is na{\"i}ve, then just like in Example~\ref{exp:nitpick} above, the sender may still be able to persuade them that $x$ and $y$ do not affect one another. If the receiver believes $x \leftarrow_r y$ and this link is defective, meaning $x \Rightarrow_t y$, then the sender may be able to debunk this link and offer a fitting consistent model with confounders. Otherwise, if $x \Leftarrow_t y$ and $x \leftarrow_r y$, then  ``persuasion by nitpicking'' may still be possible if $M_r$ has some other defective link. Examples \ref{exp:dissuade_2} and \ref{exp:dissuade_1} in Supplementary Appendix \ref{sec:examples} demonstrate these possibilities in the two cases (when $x \Rightarrow_t y$ and $x \Leftarrow_t y$, respectively, all while $x \leftarrow_r y$).

\section{Applications} \label{sec:applications}

The results so far characterize a single interaction between one sender and one receiver. To illustrate their dynamic implications, we now consider an exogenous sequence of senders with conflicting objectives. We do not model sender arrival or solve a dynamic equilibrium; instead, each interaction follows Section~\ref{sec:preexist}: each sender observes the receiver's current awareness set and model, and can disclose additional variables together with a consistent proposed model. Disclosure is cumulative. The conclusions below should therefore be understood as conditional on the sequence of senders and disclosures. Throughout the applications, the disputed link between $x$ and $y$ is the only defective link in the incumbent model, ruling out persuasion by nitpicking (Example~\ref{exp:nitpick}).

We show that this dynamic environment is strongly path-dependent, and this path dependence is built on a fundamental asymmetry: a causal claim can \emph{enter} the receiver's model while it is merely possible (for na{\"i}ve receivers), but it \emph{exits} only when it is refuted.
We also contrast the dynamic implications of our model with the dynamic version of \citet{schwartzstein_using_2021}.

\subsection{Labor unions} \label{sec:app_unions}

We now expand on the labor union story from Examples \ref{exp:union1} and \ref{exp:union2}, highlighting other aspects of this setting and how our results apply.

For as long as labor unions have existed, they have promised higher wages for their members. Both unionization rates and union membership $u$ (for simplicity, we do not distinguish between the two) are correlated with wages $w$, see \citet{lewis_chapter_1986}. 
The unions have spun this correlation as causal evidence that $u \to_s w$ \citep[e.g., ][]{white_collar_unionism_1970,seiu_unions_2018}.
On the other side of the debate is the selection argument: ``most students of unionism, I think, would argue that union status selection is not random'' \citep[p. 1143]{lewis_chapter_1986}, suggesting that $u \leftarrow_s w$ is the correct model. Specifically, \citet{zerzan_unions_1976} argued that ``it is generally the relatively more prosperous wage fields that become unionized.'' 
\citet{bockerman_erosion_2006} offer a specific selection channel: high earners assign higher value to unemployment insurance; if it is mainly offered by labor unions then high earners are more likely to select into joining a union.

Causal evidence suggests that the latter narrative is closer to the truth. Under U.S. law, if a majority of workers vote in favor of the union, management must recognize and bargain with the union. \citet{dinardo_economic_2004} use this in a regression discontinuity design, comparing outcomes in firms where the vote barely passed with those where it barely failed, arguing that such firms must be sufficiently similar in all other regards. They show that outcomes in the two groups are statistically indistinguishable, suggesting that unions do not have a causal effect on wages, $u \not\Rightarrow_t w$. 

The design of \cite{dinardo_economic_2004} can be represented in our language as $z \to_t u \leftarrow_t w$, where $z$ is random noise in the voting outcomes, which is independent of $w$ but affects whether a union is formed, $u$. Adopting this as the true model, we can reconstruct the following (heavily simplified and abridged) version of the debate around labor unions using our model.
\begin{enumerate}[noitemsep]
	\item First, unions use descriptive statistics to propose the narrative $u \to_s w$ \citep{white_collar_unionism_1970}. This persuades na{\"i}ve workers to join but not the sophisticated ones.
	
	\item Then, the opponents such as \citet{zerzan_unions_1976} argue that $u \leftarrow_s w$ is the correct model instead. This is best described as screaming into the void, since this neither sways the na{\"i}ve workers---who have adopted the unions' model $u \to_r w$, which is not debunked by $u \leftarrow_s w$---nor persuades the sophisticated workers, who remain unconvinced by either model.
	
	\item Finally, causal evidence from \citet{dinardo_economic_2004} reveals $z \to_t u \leftarrow_t w$ as the true model. This persuades both the na{\"i}ve and the sophisticated workers that $u \leftarrow w$ and shuts down further debate. Indeed, while some unions do still advertise higher wages for union members, many others do not, restricting themselves to broader promises of job protection, opportunity to bargain with employers, and positive change.\footnote{``How would a union benefit me?'' Office \& Professional Employees International Union F.A.Q. URL: \url{https://www.opeiu.org/NeedAUnion/FrequentlyAskedQuestions.aspx}, retrieved Jul 15, 2026.}$^,$\footnote{``Why should I join?'' National Tertiary Education Union F.A.Q. URL: \url{https://www.nteu.au/NTEU/FAQs/Library?hkey=8f2e9fd5-5bb5-48c8-8c86-e9e4fcebd651}, retrieved Jul 15, 2026.}
\end{enumerate}

Had the evidence of \citet{dinardo_economic_2004} been publicly known before the unions' campaign, the data on $\{u, w, z\}$ would have identified $z \to u \leftarrow w$ from the outset. Workers of either type would have adopted the selection story, and the unions' claim $u \to w$ would have no bite. An early disclosure of $z$ would have \emph{inoculated} the receivers of either type against the unions' narrative. 

We also use this opportunity to highlight the differences in the dynamics of narratives enabled by our model and that of \citet{schwartzstein_using_2021}.\footnote{While \citet{schwartzstein_using_2021} do not explicitly set up a dynamic version with multiple senders, we consider a natural model that results from running a series of their static models across a number of periods. This effectively restricts senders to be myopic/non-strategic about future senders' behavior, but suffices for illustrative purposes, since the same restrictions apply in our model as well. \label{foot:SS_extension}}  
In our model, there is a first-mover advantage: the initial narrative instilled by labor unions in the first stage becomes difficult to debunk. In their model, each new sender can offer a marginally better-fitting model and effectively persuade the receiver. The sequence above would therefore play out as follows in the model of \citet{schwartzstein_using_2021}:
\begin{enumerate}[noitemsep]
	\item First, unions use descriptive statistics to propose the narrative $u \to_s w$, which aligns with the observed signals and persuades the workers.
	\item Then, an opponent like \citet{zerzan_unions_1976} could use recent anecdotes and selected evidence (which is indeed his primary mode of persuasion in the referenced article) to argue that model $u \leftarrow_s w$ fits these observed signals better. Since the fit is indeed better, the workers are persuaded by these arguments and become disenchanted with unions.
	\item The unions could respond by presenting other anecdotes which make the model $u \to_s w$ fit the data better. The workers would be persuaded again.
	\item Causal evidence from \citet{dinardo_economic_2004} would reveal a whole new variable, enlarging the dataset and making the model with $u \leftarrow_s w$ (and $z \to_s u$) fit the data better again.
	\item The unions could again respond by revealing either more anecdotal evidence, or additional variables, together with a model describing conditional distributions of all these variables in a way that makes the model with $u \to_s w$ fit the data better again.
\end{enumerate}
In the model of \citet{schwartzstein_using_2021}, each new sender can offer a distribution of signals conditional on states of the world that renders their preferred model a better fit for the limited observed data on signal realizations. The authors argue that the process stops with the model that claims the observed sequence of signals was inevitable regardless of the state (and thus leaves the receiver with no information about the state of the world apart from their prior belief). However, if the sender can reveal new evidence or, like in our setting, additional variables (which is not allowed in their model), then the process could arguably run indefinitely. Each competing sender would offer new data discrediting the old model's fit together with a better-fitting model that supports their preferred narrative. The receiver would then flip between the competing narratives indefinitely.
In contrast, in our setting, the receiver only flips to a competing narrative if their old model is debunked. Debunking is irreversible, meaning that they will then never adopt this narrative again. This is because identification is discrete, whereas fit in \citet{schwartzstein_using_2021} is continuous. If, for every remaining defective link in our model, some sender eventually finds it profitable to disclose a debunking certificate, cumulative disclosure eliminates all false narratives and the receiver converges to the true model.


\subsection{MBA: competing obvious causes}\label{sec:MBA-dynamic}

\begin{figure}
	\centering
	\includegraphics[scale=0.3]{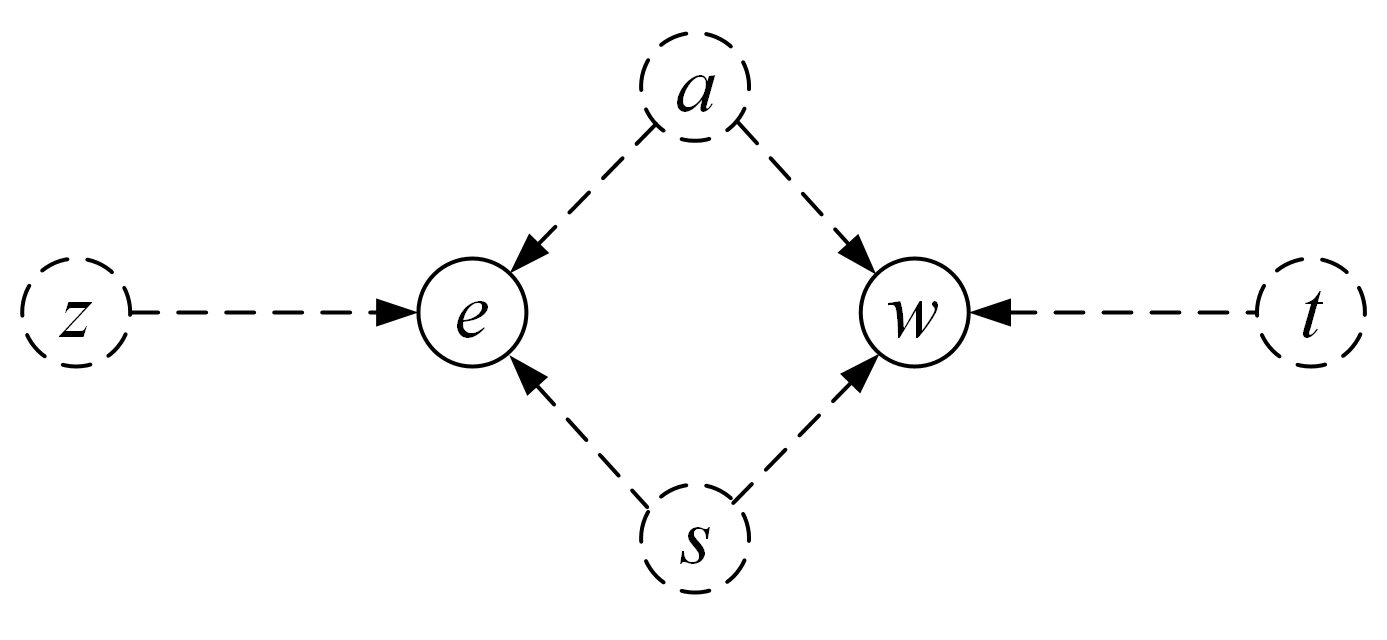}
	\caption{MBA example with competing obvious causes $z$ and $t$. \label{fig:mba_new}}
\end{figure}
We now return to the MBA example of Section~\ref{sec:example} and place it in the sequential environment of this section. To make the competition between senders interesting, we enrich the true model of Figure~\ref{fig:pic3}(a) with one additional variable. Let $z$ denote an \emph{admission or scholarship lottery}, with $z$ being either idiosyncratic admissions noise or the lottery outcome.\footnote{``Niche MBA Scholarship'', URL: \url{https://www.niche.com/colleges/scholarships/mba-scholarship/}, retrieved Jul 20, 2026.}$^,$\footnote{``GMAT Club MBA Scholarship'', URL: \url{https://gmatclub.com/forum/gmat-club-mba-scholarship-2025-2026-10-000-application-open-451640.html/}, retrieved Jul 20, 2026.} 
The true model becomes as in Figure~\ref{fig:mba_new}. As before, we assume the correlation between education $e$ and earnings $w$ is generated entirely by the confounders, $a$ (ability) and $s$ (social skills).

The model contains two obvious causes. Tenure $t$ is an obvious cause of $w$ given $e$, while lottery $z$ is an obvious cause of $e$ given $w$. By Proposition~\ref{prop:persuade_sophist}, each variable suffices to persuade even a sophisticated receiver of one of two \emph{opposing} claims:
\begin{equation}\label{eq:mba-margins}
    \Omega_s = \{e, w, t\}
    \;\Longrightarrow\;
    e \to_s w \leftarrow_s t,
    \qquad\qquad
    \Omega_q = \{e, w, z\}
    \;\Longrightarrow\;
    z \to_q e \leftarrow_q w.
\end{equation}
In the left model, $M_s$, link $e \to_s w$ is compelled by the data. In the right model, $M_q$, link $e \leftarrow_q w$ is similarly compelled. Both identified stories are defective: the true model contains neither $e \to_t w$ nor $e \leftarrow_t w$. 

Further, the two models are incompatible: if all variables $\Omega \equiv \Omega_s \cup \Omega_q$ are revealed, there is no model $M$ that is consistent with the data on all of them (similar to Example~\ref{exp:decept} in Supplementary Appendix \ref{sec:examples}). 
Indeed, as long as at least one confounder is unobserved, $e$ and $w$ remain adjacent in every candidate model. The data then demand a collider at $w$ (from $t$) and a collider at $e$ (from $z$) on the same edge. Only models containing \emph{both} $a$ and $s$ are consistent once $t$ and $z$ have been disclosed, since then $(e \perp w \mid a, s)$ breaks the adjacency and both V-structures can coexist. 

Because disclosure is cumulative, the above implies that whichever obvious cause is disclosed \emph{first} renders the opposing obvious cause powerless as a stand-alone response. We now show how persuasion unfolds depending on the order in which the senders make their claims.

\paragraph{1. The business school moves first.}
The school launches its campaign around $\{e, w, t\}$, as in Section~\ref{sec:example}. The data then identify $e \to w \leftarrow t$, and both na{\"i}ve and sophisticated employees adopt the MBA-premium story.
Consider now the employer, who wants to dissuade the employee (the receiver) from undertaking education. After the school's campaign, they can no longer simply reveal $z$, since there is no consistent model to accompany it. Instead, the employer can only reveal $\{a,s\}$, which proves the absence of a causal link between $e$ and $w$ and refutes the school's claim.
The school's first move \emph{inoculates} the employee against the ``simple'' potential counterargument and raises the price of persuasion for the employer from one variable ($z$) to two ($a$ and $s$). 

\paragraph{2. The employer moves first.} 
Suppose instead that the thought of pursuing an MBA crosses the mind of an employee independently of any business school's ad campaign, and they go to discuss it with their employer. The employer is thus the first to propose a model explaining the correlation between $e$ and $w$ observed by the employee.

The employer can then talk about the outcomes of scholarship lotteries, revealing variable $z$ and arguing that model $z \to_q e \leftarrow_q w$ is uniquely identified given these data.\footnote{E.g., \cite{schwerdt2012effects} study a randomized experiment in which vouchers for adult education were assigned by lottery across the general Swiss adult population. They document increased adult-education participation but no statistically significant average earnings effect one year later.}
The employee then believes that it is the high earners who select into education, $e \leftarrow w$. This selection story is itself defective, since the true model contains no link between $e$ and $w$. However, it completely inoculates the employee against the business school's narrative. The school's ad campaign revealing $t$ produces no consistent models, so it has no model to propose to the employee. The school cannot offer any other variables that would persuade the receiver that $e \to w$; the most it can do is debunk the link $e \leftarrow_r w$ by revealing $\{a,s\}$, but this would not convince the employee to enroll in an MBA program. Notably, inoculation against the business school's defective narrative $e \to w$ is done via another defective narrative $e \leftarrow w$.

\paragraph{3. A truth-oriented sender moves first.} 
Finally, suppose the first mover is a regulator or a government agency concerned first and foremost with accuracy. Such a sender can disclose the common causes $\{a,s\}$ directly. In any model with $\{e, w, a, s\} \subseteq \Omega$, the data reveal
the lack of a direct link between education and earnings. Both directional claims, $e \to w$ and $w \to e$, are refuted once and for all. A later disclosure of $t$ cannot revive the school's campaign, since the collider argument $e \to w \leftarrow t$ requires $e$ and $w$ to be adjacent, which has been ruled out.

This route to defeating the school's narrative is more truthful than the employer's scholarship gambit but also more expensive: in this example, it requires two variables instead of one. More generally, it can require many more as the number of confounders grows, possibly becoming infeasible or not worthwhile.

\bigskip 
Comparing the scenarios, we see that unless the business school moves first, it is locked out of instilling its preferred narrative. Its ad campaign would therefore work only among receivers with no (conflicting) pre-existing models---though equally well for na{\"i}ve and sophisticated receivers.
The employer moving second faces a subtler predicament: their original claim $e \leftarrow w$ is likewise unattainable, but they can still achieve their goal by disclosing $\{a,s\}$. The latter option is more truthful, but also more costly. 
Overall, our model predicts a clear \emph{first-mover advantage}. The sender who moves first (i) has a higher chance of instilling their preferred narrative, (ii) incurs a lower persuasion cost, and (iii) might inoculate the receiver against conflicting narratives presented by future senders.\footnote{We clarify what inoculation means formally in Appendix \ref{sec:proofs}, in the remark following Lemma~\ref{lem:cert_noc}.} 
As in the previous example, this contrasts sharply with the results of \citet{schwartzstein_using_2021}, which suggest a last-mover advantage. 

Furthermore, this example refines our conclusions regarding learning outcomes. The previous example suggested that the true narrative will prevail. In this example, the school and the employer only care whether the employee believes that $e \to w$ or not. In this sense the truth prevails: the employee's final model correctly identifies that $e$ does not causally affect $w$ (if the disclosure costs are low enough for the employer to counter the MBA campaign by revealing $\{a,s\}$). However, the employee need not end up with the \emph{correct} model regarding $e$ and $w$: if the employer moves first, then the employee ultimately believes that $e \leftarrow w$, which is incorrect. We conclude that the receiver only ends up learning the true model to the extent that this is in the senders' interests.\footnote{\citet{gentzkow_competition_2017} obtain a somewhat similar conclusion with competing senders in the context of Bayesian Persuasion.}

\section{Model microfoundation} \label{sec:microfound}

This section sets up a fully specified game that we show is equivalent to our baseline setup in Section \ref{sec:model}. This is meant to ground our baseline model in a more standard economic setting, in which the sender wants to persuade the receiver to take a certain action, rather than merely leaving the receiver with some causal model. To capture the receiver's uncertainty over the true model of the world, we use the variational representation of ambiguity-averse preferences due to \citet{maccheroni_ambiguity_2006}, hereinafter MMR.

In particular, we show that a na{\"i}ve receiver behaves like a subjective expected utility maximizer given a consistent sender's model $M_s$. In turn, a sophisticated receiver behaves like an ambiguity-averse agent who considers the worst-case scenario among all models consistent with the data.
The latter kind of behavior is natural: \citet{ambuehl2025} show that when experimental subjects are unable or unwilling to rule out some of the models of the world, they choose cautiously, evaluating options according to the worst-case scenario.\footnote{\citet{bauch2025} offer a different model of narrative communication under ambiguity.}

\subsection{Model setup}
Fix some true causal model $M_t = (\Omega_t,C_t)$.
Consider a game between a sender and a receiver that proceeds in two stages: first the sender proposes a model, and then the receiver makes a decision. At the second stage, the receiver's decision is $x \in \{0,1\}$, where we interpret $x=1$ as ``taking an action'' and $x=0$ as ``the status quo''.
Variable $y$ represents an outcome that is positively correlated with $x$ in the data, but it is unclear whether the two are causally related. The players' terminal payoffs are given by
\begin{align*}
	v_R(x,y) &= x \cdot y - x,
	\\
	v_S(x,y) &= x.
\end{align*}
The sender thus always wants the receiver to choose $x=1$. For specificity, suppose that $\mathbb{E}[y]=0$ (where the expectation is taken with respect to the data $P$). Therefore, if the receiver's model has $x \not\Rightarrow y$, then the receiver prefers action $x=0$. Suppose further that if $x\Rightarrow y$, the receiver prefers $x=1$. To state this condition in terms of primitives, denote the gross benefit of acting according to model $M=(\Omega,C)$ as $b(M) \equiv \mathbb{E} \left[ y \mid M, do(x)=1 \right]$.\footnote{The $do(x)$ operator denotes an intervention that externally sets variable $x$, replacing its usual data-generating mechanism while leaving the rest of the causal system unchanged. See \citet[Ch. 3.2]{pearl_causality_2009}.}  
Then the nontriviality condition is equivalent to requiring $b(M) > 1$ for all models $M=(\Omega,C)$ consistent with $P|\Omega$ and such that $x \Rightarrow_M y$.

The receiver is initially unaware of any variables, apart from $x$ and $y$.
The sender knows the true model $M_t$ and in the game's first stage they propose a model $M_s = (\Omega_s,C_s)$, i.e., they choose a subset $\Omega_s \subseteq \Omega_t$ to disclose and a subjective model $C_s$ that is consistent with $P|\Omega_s$. The sender may also remain silent, $M_s = \varnothing$.

After observing $M_s$, the receiver forms a set $\mathcal{M}(\Omega_s)$ of models that are consistent with the data $P | \Omega_s$.
The receiver then chooses the action $x$ to maximize the following value function:
\begin{align} \label{eq:receiver_util_MMR}
	V_R(x \mid M_s) 
	&\equiv \min_{M \in \mathcal{M}(\Omega_s)} \left\{ \mathbb{E}_y \left[ v_R(x,y) \mid M, do(x) \right] + \frac{1}{\theta} \psi (M \mid M_s) \right\}
	\\ \nonumber
	&= \min_{M \in \mathcal{M}(\Omega_s)} \left\{ x \left(b(M)-1\right) + \frac{1}{\theta} \psi (M \mid M_s) \right\}.
\end{align}
Here, $\psi (M \mid M_s)$ is a model index function that represents, intuitively, whether the receiver considers model $M$ plausible after the sender proposed model $M_s$ (higher $\psi$ means lower plausibility). We assume this function is such that $\psi (M_s \mid M_s) = 0$ and $\psi (M \mid M_s) \in \left[ \underline{\psi}, \bar{\psi} \right]$ with $\underline{\psi} > 0$ for any $M \neq M_s$. Parameter $\theta \in \mathbb{R}_+$ measures ambiguity aversion. Expression \eqref{eq:receiver_util_MMR} then selects the worst causal model for the receiver given $x$, adjusting for how plausible they consider different models to be.
If the sender stays silent, $M_s = \varnothing$ (or proposes an inconsistent $M_s$), we assume the receiver chooses $x^*(M_s)=0$.\footnote{The receiver is only aware of $x$ and $y$ initially. Since they are correlated, then both models $x \leftarrow y$ and $x \rightarrow y$ are consistent, and we assume the receiver treats the former as the focal model, which renders inaction $x=0$ optimal.}

The sender chooses a consistent $M_s$ to maximize 
\begin{align} \label{eq:sender_util_MMR}
	V_S(M_s) \equiv x^*(M_s) - \kappa(|\Omega_s|),
\end{align}
where $x^*(M_s)$ denotes the receiver's best response to hearing the sender's proposed model $M_s$.
By equilibrium we mean a subgame-perfect Nash equilibrium (where the receiver adopts the sender's proposed model $M_s$ as a default model in their problem \eqref{eq:receiver_util_MMR}).

\paragraph{Connection to MMR.}
The model set up above is equivalent to the variational preferences model of \citet*{maccheroni_ambiguity_2006}. This is a broad family of preferences that captures subjective expected utility preferences, maxmin preferences \citep{gilboa_maxmin_1989}, and multiplier preferences \citep{hansen_robust_2001,hansen_robust_1999} as special cases.

The connection between our model and \citet{maccheroni_ambiguity_2006} may not be immediate, since our receiver operates with a set of causal models, as opposed to a set of prior beliefs over states of the world. This issue, however, is easy to address by treating our models as states of the world and extending the receiver's problem to allow for beliefs over models. For a given $M_s$, extend the receiver's cost function to $\varDelta(\mathcal{M}(\Omega_s))$ affinely as $\hat{\psi}(p) \equiv \mathbb{E}_p[\psi(\cdot \mid M_s)]$ for all $p \in \varDelta(\mathcal{M}(\Omega_s))$. Then $\hat{\psi}$ is weakly convex, lower semicontinuous, and grounded ($\hat{\psi}(\delta_{M_s}) = 0$). Further, the MMR criterion
\begin{align*}
	\min_{p \in \varDelta(\mathcal{M}(\Omega_s))} \left\{ \mathbb{E}_M \Big[ \mathbb{E}_y \left[ v_R(x,y) \mid M, do(x) \right] \mid p \Big] + \frac{1}{\theta} \hat{\psi} (p) \right\}
\end{align*}
(where the outer expectation is taken over models $M$ according to $p$) is exactly equal to \eqref{eq:receiver_util_MMR}, since an affine objective on a simplex attains its minimum at a vertex.
It follows that function \eqref{eq:receiver_util_MMR} is indeed a member of the variational preference family.

\subsection{Equivalence to the baseline model}

In what follows, we refer to the setup above as ``the game'' and to the setup in Section \ref{sec:model} as ``the baseline model''.

If $\theta$ is high enough, the receiver behaves as a maxmin agent, maximizing the worst-case payoff among the models they find consistent. Such an agent is ambiguity averse and exhibits the status quo bias: they will choose $x=1$ if and only if all consistent models have $x \Rightarrow y$, and otherwise they default to the status quo decision $x=0$.\footnote{Ambiguity aversion thus produces the status quo bias in this setting. \citet{ortoleva_status_2010} explores the link between the two more generally.}
In contrast, for $\theta$ low enough, the ambiguity term $\frac{1}{\theta} \psi (M \mid M_s)$ rules out all models except $M_s$, making the receiver credulous: they will choose $x=1$ if and only if $x \Rightarrow_s y$ according to the sender's proposed model $M_s = (\Omega_s, C_s)$.
We introduce the labels and summarize this in a subsequent proposition.
\begin{definition}
	The receiver is \emph{credulous} if $\theta < \underline{\psi}$. The receiver is \emph{skeptical} if $\theta > \bar{\psi}$.
\end{definition}
\begin{proposition} \label{prop:ambig_optactions}
	If the receiver is credulous, then $x^*(M_s)=1$ if and only if $x \Rightarrow_s y$ according to $M_s$.
	\\
	If the receiver is skeptical, then $x^*(M_s)=1$ if and only if $x \Rightarrow_M y$ for all $M \in \mathcal{M}(\Omega_s)$.
\end{proposition}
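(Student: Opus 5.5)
The proof is a direct comparison of the two actions in \eqref{eq:receiver_util_MMR}, using the sandwich bounds on $\psi$. Fix a consistent proposal $M_s$, so that $M_s \in \mathcal{M}(\Omega_s) \neq \emptyset$. For an inconsistent $M_s$ or silence, both statements hold trivially: $x^*=0$ by assumption, and a sophisticated criterion over an empty set never yields $x=1$. Write $g(M) \equiv b(M)-1$.

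From the nontriviality condition and $\mathbb{E}[y]=0$, $g(M)>0$ whenever $x \Rightarrow_M y$. When $x \not\Rightarrow_M y$, the intervention $do(x)$ leaves the distribution of $y$ unchanged, so $b(M)=\mathbb{E}[y]=0$ and $g(M)=-1$. The key fact is therefore that $g(M) \ge 0$ if and only if $x \Rightarrow_M y$, with $g(M)=-1$ otherwise.

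At $x=0$ the objective equals $\min_M \psi(M\mid M_s)/\theta = 0$, attained at $M_s$. Hence $V_R(0)=0$. At $x=1$ we have
\[
V_R(1) = \min_{M\in\mathcal{M}(\Omega_s)} \Big\{ g(M) + \tfrac{1}{\theta}\psi(M\mid M_s) \Big\}.
\]
The receiver chooses $x=1$ iff $V_R(1)>0$, breaking ties toward the status quo. Exact ties should be handled by assuming the receiver defaults to $x=0$, or by noting that strict inequalities appear below.

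\textbf{Credulous case.} For $M\neq M_s$, the penalty term satisfies $\psi/\theta \ge \underline{\psi}/\theta > 1$, so $g(M)+\psi/\theta > -1+1 = 0$. The candidate $M_s$ contributes $g(M_s)$. If $x\Rightarrow_s y$, then $g(M_s)>0$, and every term is positive, so $V_R(1)>0$. If not, $g(M_s)=-1$ and $V_R(1)\le -1<0$.

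\textbf{Skeptical case.} Here $\psi/\theta \le \bar\psi/\theta < 1$. If every consistent $M$ has $x\Rightarrow_M y$, all terms are at least $g(M)>0$, so $V_R(1)>0$. If some $M$ has $x\not\Rightarrow_M y$, its term is $-1+\psi/\theta<0$, so $V_R(1)<0$.

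The main obstacle is the claim $g(M)=-1$ when $x\not\Rightarrow_M y$. This requires checking that under $do(x)$ in a model where $x$ is not an ancestor of $y$, the marginal of $y$ equals its observational marginal. That holds by the truncated factorization, since $y$'s ancestors are untouched, together with the Markov property of the consistent model $M$ reproducing $P|\Omega_s$.
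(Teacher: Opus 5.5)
Your proof is correct and follows the paper's argument. As in the paper, $V_R(0)=0$, each model with $x\not\Rightarrow_M y$ contributes $-1+\psi(M\mid M_s)/\theta$, and the bound $\theta<\underline{\psi}$ or $\theta>\bar{\psi}$ then fixes the sign of $V_R(1)$ strictly. The paper splits cases by whether any consistent model has $x\not\Rightarrow_M y$ rather than by receiver type, and it takes $b(M)=\mathbb{E}[y]=0$ for such models for granted; your truncated-factorization remark justifies this.

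One correction. Like the paper's proof, your argument covers only consistent $M_s$. Your aside that both statements hold trivially for inconsistent proposals is false. An inconsistent $M_s$ with $x\Rightarrow_s y$ still yields $x^*=0$ by assumption. So does an inconsistent $M_s$ proposed when every member of a nonempty $\mathcal{M}(\Omega_s)$ has $x\Rightarrow_M y$.
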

\begin{proof}
	See Appendix \ref{sec:proofs}.
\end{proof}

We proceed to show that from the sender's point of view, the baseline model with a sophisticated receiver is equivalent to the game with a skeptical receiver. The sender's goal in this case is to persuade the receiver that \emph{any} consistent model implies $x \Rightarrow y$. Similarly, the baseline model with a na{\"i}ve receiver is equivalent to the game with a credulous receiver. In this case, the sender only needs to find \emph{some} consistent model $M_s$ with $x \Rightarrow_s y$ to achieve a high payoff.
The formal result is captured by the following proposition.
\begin{proposition} \label{prop:ambig_equiv_str}
	The sender's equilibrium strategy in the game with a skeptical (resp., credulous) receiver coincides with the sender's optimal strategy in the baseline model with a sophisticated (resp., na{\"i}ve) receiver.
\end{proposition}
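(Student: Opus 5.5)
The proof reduces the sender's problem in the game to the one in the baseline model. The comparison runs through the payoff functions \eqref{eq:sender_util} and \eqref{eq:sender_util_MMR}, and Proposition~\ref{prop:ambig_optactions} supplies the receiver's behavior.

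Both objectives share the cost term $-\kappa(|\Omega_s|)$ and the normalization $\kappa(0)=0$. Silence gives payoff $0$ in each setting: in the baseline, $v(\varnothing)=0$; in the game, $x^*(\varnothing)=0$ by assumption. So it suffices to show that, for every feasible proposal $M_s$, the receiver's best response $x^*(M_s)$ in the game equals the persuasion indicator $v(M_s)$ in the corresponding baseline model. The two payoff functions then coincide pointwise on the sender's strategy space, so their maximizers coincide. The receiver's stage is a single decision, so subgame perfection imposes nothing beyond the receiver best-responding.

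\emph{Skeptical vs.\ sophisticated.} Take a consistent $M_s$. By Proposition~\ref{prop:ambig_optactions}, $x^*(M_s)=1$ if and only if $x\Rightarrow_M y$ for all $M\in\mathcal M(\Omega_s)$. The set $\mathcal M(\Omega_s)$ is nonempty because it contains $M_s$. This is exactly the sophisticated receiver's acceptance condition for a model with $x\Rightarrow_s y$, so $x^*(M_s)=v(M_s)$.

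\emph{Credulous vs.\ na\"ive.} By Proposition~\ref{prop:ambig_optactions}, $x^*(M_s)=1$ if and only if $x\Rightarrow_s y$. A na\"ive receiver accepts any consistent $M_s$, so $v(M_s)=1$ exactly when $x\Rightarrow_s y$. Again $x^*=v$.

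\emph{Inconsistent proposals.} These need bookkeeping because the two setups treat them slightly differently. In the game the sender is restricted to consistent proposals, and the receiver plays $x=0$ after an inconsistent one. In the baseline such proposals are feasible, but both receiver types reject them, giving $v=0$. Either way, an inconsistent proposal yields at most $-\kappa(|\Omega_s|)\le 0$, which is weakly dominated by silence and strictly dominated when $\Omega_s\neq\varnothing$. Hence these proposals never matter for the optimum.

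\emph{Main obstacle: matching the optimal sets.} With pointwise equality established, the argmax sets agree on consistent proposals. Inconsistent proposals are strictly suboptimal in the baseline, and in the game they are either not available or, if allowed, equally suboptimal. Ties, for instance between silence and a persuasive model whose cost equals $1$, are excluded by the standing assumption that the benefit of persuasion exceeds its cost whenever persuasion is possible. Otherwise they are resolved identically in both settings because the payoffs are identical. The delicate points are these feasibility conventions and the nonemptiness of $\mathcal M(\Omega_s)$. Once they are stated explicitly, the argument is a direct substitution of Proposition~\ref{prop:ambig_optactions}.
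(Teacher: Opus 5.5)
Your proof is correct and follows the paper's argument: both establish $x^*(M_s)=v(M_s)$ for every consistent proposal via Proposition~\ref{prop:ambig_optactions}, using $M_s\in\mathcal{M}(\Omega_s)$ in the skeptical case, so $V_S$ and $U_S$ coincide pointwise and have the same maximizers. The paper simply takes the two strategy spaces to be identical, so your separate bookkeeping for inconsistent proposals is unnecessary; it is also slightly loose, because the sophisticated acceptance rule as literally stated depends only on $\Omega_s$ and on $x\Rightarrow_s y$, so ``both types reject inconsistent proposals'' is a convention you are imposing rather than one the definition supplies.
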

\begin{proof}
	See Appendix \ref{sec:proofs}.
\end{proof}

Proposition~\ref{prop:ambig_equiv_str} also lets us import the results of Section~\ref{sec:persuade_blank} directly into the game. In particular, by Theorem~\ref{thm:direct}, if $M_t$ is simple, then the sender's optimal proposal in the game with either receiver type is either silence or a model in which $x$ and $y$ are adjacent (see Corollary~\ref{cor:thm2-game} in Appendix~\ref{sec:ambig_pf}).

In the following section, we briefly outline how this game can be expanded to accommodate the setting with a pre-existing receiver's model set up in Section \ref{sec:preexist}.

\subsection{Game with a pre-existing model}
We conclude by sketching a natural extension of the game to a receiver with a pre-existing model, which anchors the index on the receiver's current model: $V_R(x \mid M_r) = \min_{M \in \mathcal{M}(\Omega_r)} \left\{x \left(b(M)-1\right) + \frac{1}{\theta} \psi(M \mid M_r) \right\}$. If the sender stays silent, $M_s=\varnothing$, or attempts to propose an inconsistent model, then the receiver chooses the decision $x$ that maximizes the criterion above.

The receiver would then only replace the anchor with $M_s$ if their initial model $M_r$ is debunked (in line with the AGM paradigm discussion in Section \ref{sec:discussion}). Conditional on this, a credulous receiver would behave na{\"i}vely, choosing the decision $x$ that maximizes their expected utility as if $M_s$ were the true model. A skeptical receiver would only choose action, $x=1$, if all models consistent with $P|\Omega_s$ support this, as a sophisticated agent would in Section \ref{sec:preexist}. In either case, the sender would first need to debunk the receiver's pre-existing model $M_r$, hence the analysis of Section \ref{sec:preexist} applies.

\section{Discussion} \label{sec:discussion}

In this section, we discuss various assumptions behind the model and the potential consequences of relaxing these assumptions.

\subsection{Assumptions regarding the receiver}

\paragraph{Receiver is unaware of their unawareness.}
We assume that our receiver is unaware of their own unawareness and does not even consider the existence of variables they do not observe. One could relax this assumption and make the receiver aware of the fact that there may exist other variables they do not observe. This approach raises a conceptual problem, since unobserved confounders can explain \emph{any} data. For example, any data can be explained by assuming the existence of some confounding variable $c$ that affects all other observed variables: $c \to_t x$ for all $x \in \Omega_r \setminus \{c\}$ (``God's will'' is one example of such a $c$; various conspiracy theories also rely on the existence of such unobserved---and unobservable---confounding factors). It is then unclear how the receiver should choose between multiple consistent models that rely on confounders. 

However, some conclusions can be reached by assuming the receiver allows for the existence of unobserved variables in principle, but applies Occam's razor by not relying on them unless absolutely necessary.\footnote{\citet{gottesman_limitation_2025} argues, in a different but somewhat related setting, that it is optimal for the receiver to not include unrelated variables in their model if the sender does not suggest them. This is true even if the receiver makes strategic inferences from the sender's messages given the sender's strategic concerns.} 
Using this approach, the receiver can still recover the set of causal models that are consistent with the data by using the $IC^*$ algorithm \citep[see][ch.2]{pearl_causality_2009} instead of the IC algorithm this paper applies. Using the $IC^*$ algorithm, the receiver is able to find a consistent model for each set of observable variables, and, in some cases, discover confounders on their own. 
For example, in the situation discussed in Section \ref{sec:MBA-dynamic}, if the receiver observes both $z$ and $t$, they would realize that the model produced by the IC algorithm is not consistent with the data, so there must be a confounding variable affecting $e$ and $w$. The receiver would not observe its distribution, but would become aware of its existence and would even infer the correct causal structure. Further assumptions would then need to be made about how the receiver accepts or rejects models in this scenario. 
Moreover, the $IC^*$ algorithm is only one of many approaches in the literature on causal discovery, and there is no clear consensus regarding the best way to work with situations in which some relevant variables may be unobservable \citep[see][ch. 9.4.1]{peters_elements_2017}.

\paragraph{Receiver is certain of their model.}
We assume in the baseline model that the receiver always holds at most one subjective model in mind. If their initial model becomes irreconcilable with the data, the receiver replaces it with another (singular) model. Given that the receiver's model is very likely to be incorrect---in its incompleteness at the very least---this approach violates the rational expectations assumption that is standard in economic theory. A more standard approach would be to assume the receiver is uncertain about which model of the world is correct, assigning  subjective probabilities to the possible models and updating these probabilities via Bayes' rule after observing new information from the sender. This approach is adopted by the ``narrative persuasion'' literature \citep[cf.][]{schwartzstein_using_2021,aina_tailored_2025,ispano_perils_2025}. We show in Section \ref{sec:microfound} that our results can also be seen through the lens of such model uncertainty.

When taken literally, our approach in the baseline model is instead closer to what is known in epistemology as the AGM paradigm \citep{peppas_constructive_1995}. The AGM paradigm prescribes that belief changes should follow the principle of minimal change. In line with this principle, \emph{expansion} of a subjective model of the world is the simplest change and should be employed when new data are compatible with the current set of beliefs. In our model, once the receiver has a subjective model, they first try to expand their pre-existing model $M_r$ to incorporate new data. However, when new data are incompatible with the pre-existing model, the model is \emph{revised} in some way based on the sophistication of the receiver. 

Experimental evidence suggests that our approach is closer than Bayesian updating to the thought process people employ in the real world. In particular, \cite{aina_weighting_2026} run a lab experiment in which participants must assign subjective probabilities to different models of the world that rationalize some observed evidence. They find that most participants assign subjective probability of one to the model with the best fit to the data, while only a small share of responses are consistent with Bayesian updating.\footnote{The selection criterion in their data are fit-based, in line with \citet{schwartzstein_using_2021}, rather than identification-based. However, the finding that subjects commit fully to a single model rather than mix over candidates is closer to our modeling approach. Whether people select models based on fit or identification is not explored in existing literature and is an interesting question for future work.}

\paragraph{Receiver discards the debunked model.}
In our model, the receiver's response to evidence that is incompatible with their subjective model of the world is to completely discard this model. An alternative assumption, in line with the principle of minimal revision mentioned above, could be that the receiver only discards the debunked links. They could then attempt to construct a new model that is consistent with the new evidence while remaining ``as close as possible'' in some metric to their original model. This approach would make persuasion more difficult for the sender. Our results could then be understood as an upper bound on the feasibility of persuasion in the best case for the sender. 
This approach could also resolve a gap concerning which model a receiver would have if the sender's data debunked their old model, but the sender did not offer a new consistent model to go with it. Our model takes no stance on this issue, but resolving it may be relevant for future work.

\subsection{Assumptions regarding the data}

\paragraph{Perfect data.}
Our model assumes that both players can perfectly observe the joint distribution of all variables (that they are aware of), and that this distribution is sufficiently nice (rich) in the sense that every link in the true model is identifiable in the data. This abstracts from decades of econometric work and centuries of statistical work, which together deal with an immense variety of issues that arise in statistical estimation of such joint distributions. This abstraction is intentional, since we explore the issues that are specific to persuasion and selective disclosure. However, the usual problems of statistical estimation remain in the real world. Incorporating them in our model would lead to agents having less confidence in subjective models. On the one hand, this may make persuasion more difficult, especially with sophisticated receivers, since the sender would have a harder time debunking pre-existing models with limited and imperfect data, and may struggle to find enough evidence supporting their proposed model. But on the other hand, the receivers' pre-existing models would also be based on limited and imperfect data, and as a consequence, they may be easier to debunk in this case than in our benchmark---including, e.g., by offering new data on the variables already known to the receiver.

\paragraph{No interventions.}
We consider learning from observational data, where the receiver tries to recover the true causal graph by merely looking at existing data, and the sender can only provide more of the existing data (other variables). Specifically, we assume that neither the sender nor the receiver can run experiments, directly manipulating some of the variables and thereby generating additional data points. Allowing for such interventions could allow the sender to more easily demonstrate some truthful links, as well as potentially ease the receiver's causal discovery problem. The emerging causal-discovery literature explores optimal design of interventions to learn the causal graph (see \citealp[Section 5]{zanga_survey_2022}; \citealp{ChooShiragur2023,ShiragurZhangUhler2023}), but communication in such a setting is yet to be explored.

On the flip side, allowing the \emph{receiver} to run experiments could also mitigate the risk of being deceived by the sender. The issue is that if the whole distribution is formed by the receiver's choices and the realized consequences, underexperimentation could lead to poor data and learning traps (for some examples, see \citealp{eliaz_model_2020,gratton_bad_2025,spiegler2026bci}). Overall, we view the issues related to optimal learning and persuasion with interventions as an appealing avenue for future research.

\section{Conclusion}

In this paper, we propose a model of causal persuasion. Our main contribution is to develop a novel tractable model of strategic communication of causal models that explicitly deals with how causation can be established and debunked.
We show that unless the receiver is na{\"i}ve, persuading them of a causal link requires disclosing additional variables. The task of persuading a sophisticated receiver with no pre-existing model is also closely related to debunking a pre-existing model that a (na{\"i}ve or sophisticated) receiver may have.

Our model emphasizes the asymmetry in communicating causal models. We show that it is easier for the sender to persuade the receiver that a particular causal link exists than to persuade them that it does not. Persuasion difficulty in this case is measured by the number of causal variables that need to be revealed by the sender to debunk the receiver's subjective causal model. 

Our model is designed to make persuasion as simple as possible for the sender, while restricting them to disclosure of only truthful data. We show that deception nonetheless remains possible in this case, with the sender being able to spin the correlation generated by confounding variables as causation, even if the receiver is sophisticated. Further, the sender may be able to debunk the receiver's defective (incorrect) model and replace it with another defective model. However, if a sender wants to persuade a sophisticated receiver that no causal link exists between two variables and any correlation between them is spurious, then this is possible only when the true model also contains no such relation. Even then, the absence of a link may be difficult to prove---more difficult than flipping the perceived direction of a link.

The theory of causal persuasion we propose in this paper opens up a vast array of questions for future research. The discussion in Section \ref{sec:discussion} outlines some of the assumptions that could be relaxed or reinterpreted in future work. These include allowing the sender to generate new data (whether genuine---generated through interventions---or fabricated), allowing the receiver's prior belief over models to be nondegenerate, or allowing the receiver to retain part of their subjective model when faced with conflicting evidence. Additionally, while we touch on competition between senders in the applications in Section \ref{sec:applications}, exploring this topic in more detail appears to be an interesting direction for future research.

\appendix

\section{Proofs} \label{sec:proofs}

\subsection{Technical graph terminology}

We begin by introducing a few more helpful definitions.

A \emph{path} is a sequence of distinct adjacent vertices, while a \emph{walk} is allowed to repeat vertices and edges. Given a path or walk, an interior occurrence of a vertex is a \emph{collider occurrence} if the two incident edges both have arrowheads pointing into that occurrence; every other interior occurrence is a \emph{non-collider occurrence}.
For example, the DAG in Figure~\ref{fig:pic3}(a) contains a walk $e \leftarrow a \to w \leftarrow s \to w \leftarrow t$, where both incidences of $w$ are collider occurrences, and $a,s$ are non-collider occurrences.
To distinguish this path-based terminology from the main text, we use ``collider occurrence'' whenever discussing a particular path or walk. ``Collider'' without qualification continues to mean an unshielded collider, as in the main text.

A path is \emph{active} given a conditioning set $S$ if every non-collider occurrence lies outside $S$ and every collider occurrence is an ancestor of some variable in $S$, possibly itself. Definition~\ref{def:dsep} states that the set $S$ \emph{d-separates} two variables if $S$ blocks every path between them---i.e., there is no path between them that is active given $S$. We say that $a,b$ are \emph{d-connected} given $S$ if $S$ does not d-separate them. 
In what follows, it is more convenient to use the following equivalent definition of d-separation stated in terms of walks instead of paths (given by, e.g., \citealp[\S2]{van_der_zander_separators_2019}).

\begin{definition}[d-connectedness]\label{def:dsep2}
	Let $M$ be a DAG over $\Omega$, and let $a,b\in\Omega$ and $S\subseteq\Omega\setminus\{a,b\}$ be disjoint. Then $a$ and $b$ are \emph{d-connected} given $S$ in $M$ if and only if there exists a walk between $a$ and $b$ in $M$ on which every collider occurrence belongs to $S$ and every non-collider occurrence lies outside $S$.
\end{definition}

Note that as a consequence of Definition~\ref{def:dsep2}, the existence of a \emph{collider-free} path between $a$ and $b$ establishes that $a$ and $b$ are d-connected given $S=\emptyset$. Further, every such collider-free path has the \emph{fork shape} $a \Leftarrow s \Rightarrow b$ for some summit vertex $s$ (possibly coinciding with $a$ or $b$).

Next, we introduce maximal ancestral graphs (MAGs) that represent projections of the true model $M_t$ on the set of observed variables $\Omega_s$; see \citet{richardson_ancestral_2002,zhang_completeness_2008} for further discussion of these and related concepts.

\begin{definition}[MAG]
	A \emph{maximal ancestral graph (MAG) $L$} is a mixed graph (consisting of undirected $a-b$, directed $a\to b$, and bidirected $a \leftrightarrow b$ links) over variables $\Omega$ such that $\forall a,b \in \Omega$:
	\begin{enumerate}[nosep]
		\item there are no directed cycles (consisting of directed links only),
		\item if $a \leftrightarrow b$ then there does not exist a path $a \Rightarrow b$ (consisting of directed links only), 
		\item if $a-b$ then $\nexists c \in \Omega$ s.t. $c \to a$ or $c \leftrightarrow a$, and
		\item if $a,b$ are not adjacent (i.e., not $a-b$ and not $a \to b$ and not $a \leftrightarrow b$) then there exists a set that m-separates them.\footnote{M-separation generalizes Definition~\ref{def:dsep}/Definition~\ref{def:dsep2} to mixed graphs: A path is active given $S$ if every noncollider occurrence lies outside $S$ and every collider occurrence is an ancestor of a vertex in $S$; otherwise the path is blocked. Variables $a,b$ are m-separated by $S$ if $S$ blocks every path between $a,b$.}
	\end{enumerate}
\end{definition}

In general, the undirected edges in MAGs are used to deal with selection on unobserved variables. In our setting, such selection is never present, hence all MAGs contain only directed and bidirected edges, but no undirected edges. (\citet{zhang2005transformational} refer to such graphs as Directed MAGs or DMAGs.)

\begin{definition}[Latent projection]\label{def:proj}
	\emph{Latent projection} $L(\Omega_s) \equiv \mathrm{proj}(M_t, \Omega_s)$ of $M_t$ on $\Omega_s$ is a MAG such that $\forall a,b \in \Omega_s$:
	\begin{enumerate}[nosep]
		\item $a,b$ are adjacent in $L(\Omega_s)$ if and only if there is an inducing path relative to $\Omega_t \setminus \Omega_s$ between them in $M_t$---where a path between $a,b$ is called an \emph{inducing path relative to $S \subset \Omega$} if every non-endpoint variable in this path is either in $S$, or a collider occurrence, and every collider occurrence on this path is an ancestor of either $a$ or $b$ \citep{zhang_completeness_2008}; and
		\item if $a,b$ are adjacent in $L(\Omega_s)$, then: $a \to_{L} b$ iff $a \Rightarrow_t b$, and $a \leftrightarrow_{L} b$ iff $a \not\Rightarrow_t b$ and $a \not\Leftarrow_t b$ (but $a,b$ are correlated in $M_t$).
	\end{enumerate}
\end{definition}
The inducing-path criterion above is related to, but distinct from the walk-based d-separation criterion in Defition \ref{def:dsep2}: the latter characterizes conditional d-connection in a DAG, whereas the former characterizes adjacency after latent variables are marginalized.

\subsection{Proof of Theorem~\ref{thm:no_persuade_defective}}

As before, $\mathcal{M}(\Omega_s)$ denotes the set of models (DAGs) consistent with $P|\Omega_s$. Recall that a model $M$ is consistent if and only if for any $a,b \in \Omega_s$ and $S \subset \Omega_s \setminus \{a,b\}$, $S$ d-separates $a$ and $b$ in $M$ if and only if $(a \perp b \mid S)$.
The following is a known result establishing that latent projections are also consistent with the observed data; see \citet[Theorems 4.18 and 6.4]{richardson_ancestral_2002} and \citet[Proposition 3.6]{ali_markov_2009}.

\begin{lemma} \label{lem:membership}
	The latent projection $L(\Omega_s)$ of $M_t$ on $\Omega_s$ faithfully represents the data: for any $a,b \in \Omega_s$ and $S \subset \Omega_s \setminus \{a,b\}$, $S$ m-separates $a$ and $b$ in $L(\Omega_s)$ if and only if $(a \perp b \mid S)$.
	
	If $\mathcal{M}(\Omega_s) \neq \emptyset$, then $L(\Omega_s)$ is Markov-equivalent to every $M \in \mathcal{M}(\Omega_s)$: for any $a,b \in \Omega_s$ and $S \subset \Omega_s \setminus \{a,b\}$, $S$ d-separates $a$ and $b$ in $M$ if and only if $S$ m-separates $a$ and $b$ in $L(\Omega_s)$. In particular, $L(\Omega_s)$ and $M$ have the same sets of adjacencies and (unshielded) colliders.
\end{lemma}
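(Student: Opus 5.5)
The plan is to route everything through the true model $M_t$ and the data. Both claims reduce to one fact about latent projections: conditional independencies among the disclosed variables read off the true DAG are exactly the m-separations read off the projection.

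\textbf{Step 1 (projection preserves separation).} For $a,b\in\Omega_s$ and $S\subseteq\Omega_s\setminus\{a,b\}$, I would show that $S$ d-separates $a$ and $b$ in $M_t$ if and only if $S$ m-separates them in $L(\Omega_s)=\mathrm{proj}(M_t,\Omega_s)$. This is the defining property of the latent projection in \citet[Theorem 4.18]{richardson_ancestral_2002}, and I would invoke it rather than reprove it.

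If a self-contained argument were wanted, it would work by path surgery in two directions.
\begin{itemize}[nosep]
	\item From $M_t$ to $L(\Omega_s)$: take a d-connecting walk in $M_t$ (Definition~\ref{def:dsep2}) and contract each maximal segment through latent variables. Each such segment is an inducing path relative to $\Omega_t\setminus\Omega_s$, so it becomes a single edge of $L(\Omega_s)$. Its marks are determined by ancestry, as in Definition~\ref{def:proj}.
	\item From $L(\Omega_s)$ to $M_t$: conversely, expand each edge of an m-connecting path in $L(\Omega_s)$ back into an inducing path in $M_t$.
\end{itemize}
In both directions one must check that collider and non-collider status, and the ancestor-of-$S$ condition, are preserved. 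This bookkeeping is the main obstacle, which is why I would cite the theorem.

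\textbf{Step 2 (first claim).} By assumption $M_t\in\mathcal M(\Omega_t)$. The Markov property and faithfulness then give $(a\perp b\mid S)$ if and only if $S$ d-separates $a$ and $b$ in $M_t$. Composing this with Step 1 gives: $S$ m-separates $a$ and $b$ in $L(\Omega_s)$ if and only if $(a\perp b\mid S)$.

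\textbf{Step 3 (second claim).} Now suppose $M\in\mathcal M(\Omega_s)$. Consistency of $M$ says that d-separation in $M$ coincides with conditional independence in $P|\Omega_s$. Step 2 says the same holds for m-separation in $L(\Omega_s)$. Chaining the two equivalences gives Markov equivalence of $M$ and $L(\Omega_s)$.

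It remains to show that the adjacencies and unshielded colliders agree.
\begin{itemize}[nosep]
	\item \emph{Adjacencies.} In a DAG, $a,b$ are non-adjacent if and only if some $S$ d-separates them. In a MAG, non-adjacent vertices are m-separable by maximality (condition 4), and adjacent vertices are never m-separated. Hence non-adjacency is characterized by the same separation statements in both graphs.
	\item \emph{Unshielded colliders.} For an unshielded triple $a-c-b$, $c$ is a collider if and only if $c$ belongs to no set separating $a$ and $b$. This holds for DAGs, and for ancestral graphs by \citet[Proposition 3.6]{ali_markov_2009}. The separating sets are identical in $M$ and $L(\Omega_s)$, so the colliders coincide.
\end{itemize}

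The only step with real content is the projection theorem in Step 1, together with the MAG version of the collider criterion; both I would import from the cited literature.
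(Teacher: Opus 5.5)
Your proposal is correct and takes essentially the same approach as the paper, which proves the lemma only by citing \citet[Theorems 4.18 and 6.4]{richardson_ancestral_2002} and \citet[Proposition 3.6]{ali_markov_2009}; your Steps 2--3 just spell out how that result composes with the faithfulness of $M_t$ and the consistency of $M$. Your derivation of matching adjacencies and unshielded colliders from the identical separation statements is also sound, and the collider criterion you cite follows directly from the definition of m-separation together with maximality.
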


We also introduce the CPDAG as the standard way to describe the set of consistent models $\mathcal{M}(\Omega_s)$ \citep[cf. ][Definitions 2.18--2.23]{zanga_survey_2022}. Any DAG $M$ that agrees with (the skeleton and the oriented links of) the CPDAG $K(\Omega_s)$ is consistent with the observed data $P|\Omega_s$. 

\begin{definition}[CPDAG]
	A \emph{completed partially-directed acyclic graph} (CPDAG) $K(\Omega_s)$ is a graph over variables $\Omega_s$ s.t. $\forall a,b \in \Omega_s$:
	\begin{enumerate}[nosep]
		\item $a$ and $b$ are adjacent in $K(\Omega_s)$ if and only if they are adjacent in every $M \in \mathcal{M}(\Omega_s)$;
		\item $a \to b$ in $K(\Omega_s)$ if and only if $a \to_M b$ for every $M \in \mathcal{M}(\Omega_s)$;
		\item if $K(\Omega_s)$ is such that $a,b$ are adjacent and not $a \to b$ and not $a \leftarrow b$, then $a,b$ are connected by an undirected link.
	\end{enumerate}
\end{definition}

\begin{figure}
	\centering
	
	\begin{minipage}[b]{0.32\textwidth}
		\centering
		
		\includegraphics[scale=0.3]{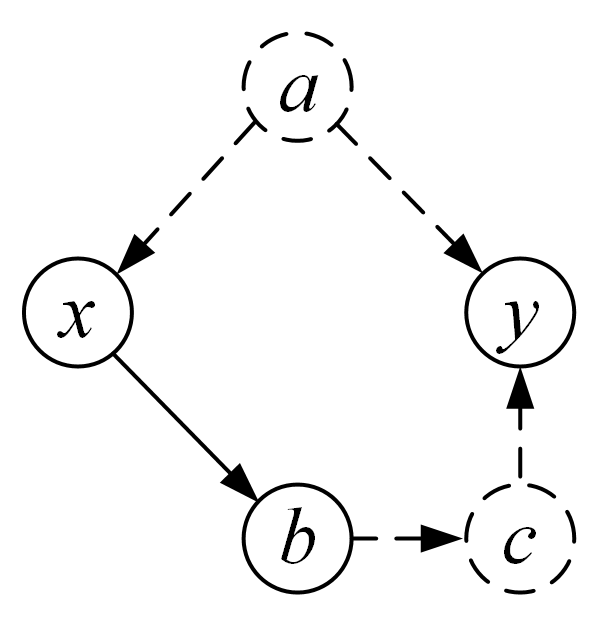}\\
		(a) True model $M_t$\\
	\end{minipage}
	\hfill
	\begin{minipage}[b]{0.32\textwidth}
		\centering
		
		\includegraphics[scale=0.3]{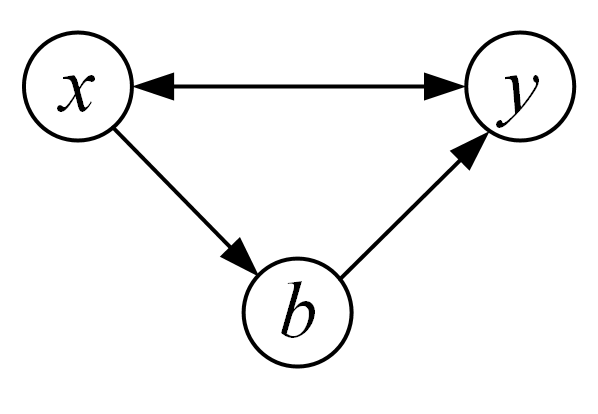}\\
		(b) Latent projection $L(\Omega_s)$\\
	\end{minipage}
	\hfill
	\begin{minipage}[b]{0.32\textwidth}
		\centering
		
		\includegraphics[scale=0.3]{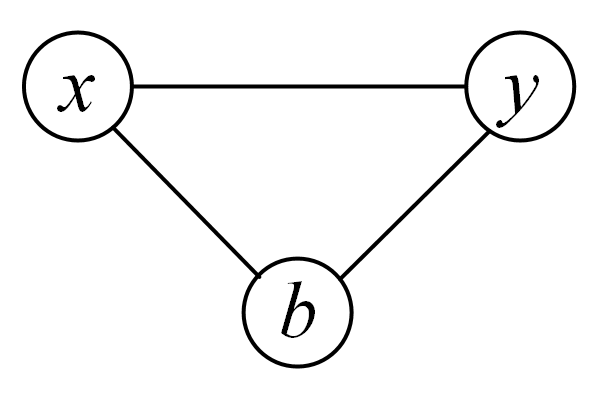}\\
		(c) CPDAG $K(\Omega_s)$\\
	\end{minipage}
	
	\caption{Latent projections and CPDAGs \label{fig:MAG_CPDAG}}
\end{figure}

To summarize, the true model $M_t$ generates a latent projection (MAG) $L(\Omega_s)$ on the set of observed variables $\Omega_s$. The true model also generates the data $P|\Omega_s$ from which we can recover the set of models consistent with the data, $\mathcal{M}(\Omega_s)$. This set can be represented as a CPDAG $K(\Omega_s)$. 
MAG $L(\Omega_s)$ tells us how the causal graph actually looks once we account for missing variables, and CPDAG $K(\Omega_s)$ tells us how much of the causal graph we can recover from the data.
An example is presented in Figure~\ref{fig:MAG_CPDAG}: MAG $L(\Omega_s)$ preserves directed ancestry through omitted proxy variables and represents omitted confounders by bidirected links; CPDAG $K(\Omega_s)$ orients links using V-structures, the requirement not to create additional unshielded colliders, and acyclicity; in this example, it is therefore completely unoriented.

Our goal now is to show that $L(\Omega_s)$, $K(\Omega_s)$ and $M_t$ do not contradict each other. 
The following lemma is an immediate consequence of the loyal-equivalent-graph result of \citet{zhang2005transformational}, which is key to proving Theorem~\ref{thm:no_persuade_defective}. It argues that the latent projection must be represented among the consistent models: there is an $M \in \mathcal{M}(\Omega_s)$ that agrees with every directed link in $L(\Omega_s)$.

\begin{lemma}\label{lem:realizability}
	If $\mathcal M(\Omega_s)\neq\emptyset$, there exists $M_q\in\mathcal M(\Omega_s)$ preserving every directed link of the latent projection $L(\Omega_s)=\operatorname{proj}(M_t,\Omega_s)$: for any $a,b\in\Omega_s$, if $a\to_L b$, then $a\to_q b$.
\end{lemma}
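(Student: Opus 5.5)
The plan is to derive Lemma~\ref{lem:realizability} from Lemma~\ref{lem:membership} together with the transformational characterization of Markov equivalence of directed MAGs in \citet{zhang2005transformational}. The idea is to view every consistent model $M\in\mathcal M(\Omega_s)$ as a special MAG, namely one with no bidirected links. We then use a ``loyal'' Markov-equivalent MAG of $L(\Omega_s)$ that keeps all its directed links while having as few bidirected links as possible in the equivalence class. Since the class contains a DAG, that minimum is zero, so the loyal graph is itself a DAG.

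\textbf{Step 1.} Fix any $M\in\mathcal M(\Omega_s)$, which exists by assumption. I check that a DAG is a (directed) MAG in the sense of the definition above:
\begin{itemize}[nosep]
	\item it has no directed cycles;
	\item it has no bidirected or undirected links, so conditions 2--3 are vacuous;
	\item any two nonadjacent vertices in a DAG are d-separated by, e.g., the parents of whichever is later in a topological order, and m-separation coincides with d-separation on DAGs.
\end{itemize}
By the second part of Lemma~\ref{lem:membership}, $M$ and $L(\Omega_s)$ have the same separation statements, hence are Markov-equivalent MAGs. Consequently the Markov equivalence class of $L(\Omega_s)$ contains a member with zero bidirected links.

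\textbf{Step 2.} I invoke the result of \citet{zhang2005transformational}, in the following form. For any DMAG $G$ there exists a Markov-equivalent DMAG $H$ that satisfies two conditions:
\begin{itemize}[nosep]
	\item $H$ has the minimum number of bidirected links in the equivalence class of $G$;
	\item every directed link $a\to b$ of $G$ is also present as $a\to b$ in $H$.
\end{itemize}
Applied to $G=L(\Omega_s)$, and using Step 1, $H$ has no bidirected links. It also has no undirected links, because there is no selection in our setting. So $H$ is a DAG over $\Omega_s$ that preserves every directed link of $L(\Omega_s)$. Set $M_q\equiv H$.

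\textbf{Step 3.} Consistency of $M_q$ follows by chaining equivalences. $S$ d-separates $a,b$ in $M_q$ if and only if $S$ m-separates them in $L(\Omega_s)$, by Markov equivalence. By the first part of Lemma~\ref{lem:membership}, this holds if and only if $(a\perp b\mid S)$. This gives both the Markov property and faithfulness, so $M_q\in\mathcal M(\Omega_s)$.

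The main obstacle is Step 2: the exact form of the loyal-equivalent-graph theorem must deliver both minimality and preservation of directed links simultaneously. If the cited statement only gives a sequence of legitimate single-edge mark changes between equivalent MAGs, I would argue directly instead. Starting from $L(\Omega_s)$, I would repeatedly convert a bidirected link $a\leftrightarrow b$ into a directed one while staying in the class, and never reverse an existing directed link. I would then show that this process can continue until no bidirected links remain, using the existence of the DAG $M$ to rule out getting stuck. The reason such conversions never need to touch existing tails is that they only add arrowhead-removals: turning $\leftrightarrow$ into $\to$ removes one arrowhead, whereas reversing a directed link would create a new arrowhead. Verifying that the process cannot get stuck is where most of the care would go.
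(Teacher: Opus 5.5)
Your proposal is correct and follows the paper's proof. Both arguments take the loyal equivalent graph $H$ of $L(\Omega_s)$ from Proposition~2 of \citet{zhang2005transformational}, use the fact that the Markov equivalence class contains a DAG to conclude that $H$ has no bidirected links, and then apply Lemma~\ref{lem:membership} to get $H\in\mathcal M(\Omega_s)$. The paper states the key property as ``every bidirected link of $H$ is invariant in the class,'' which immediately implies your ``minimum number of bidirected links'' version, so your fallback argument is not needed.
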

\begin{proof}
   By Lemma~\ref{lem:membership}, the latent projection $L(\Omega_s)$ is Markov-equivalent to every $M\in\mathcal M(\Omega_s)$. Proposition~2 of \citet{zhang2005transformational} establishes that there exists a ``loyal equivalent graph'' of $L(\Omega_s)$, denoted as $H$. This $H$ is a directed MAG (so it has no undirected links), preserves every directed edge of $L(\Omega_s)$, and all bi-directed links in $H$ are invariant, meaning they are present in all MAGs Markov-equivalent to $L(\Omega_s)$. 
	
	Because $\mathcal M(\Omega_s)\neq\emptyset$, the Markov-equivalence class of $L(\Omega_s)$ contains a DAG. No bidirected edge can therefore be invariant in this class, since a DAG cannot have any bidirected edges. It follows that $H$ has no bidirected edges and is itself a DAG. Since $H$ is Markov equivalent to $L(\Omega_s)$, Lemma~\ref{lem:membership} implies that $H\in\mathcal M(\Omega_s)$. Taking $M_q=H$ proves the result. 
\end{proof}

\begin{lemma} \label{lem:exists_modconswtrue}
	If $x \Leftarrow_t y$ and $\mathcal{M}(\Omega_s)\neq\emptyset$, then there exists a model $M^*\in\mathcal{M}(\Omega_s)$ such that $x \Leftarrow_{M^*} y$.
\end{lemma}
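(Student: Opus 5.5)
The plan is to take $M^*$ to be the model $M_q$ supplied by Lemma~\ref{lem:realizability}, and to show that it inherits the ancestral relation $y \Rightarrow x$ from $M_t$. The work lies in showing that ancestry in $M_t$ between observed variables is reflected by a chain of \emph{directed} links in the latent projection $L(\Omega_s)$. Once that is done, Lemma~\ref{lem:realizability} carries this directed chain over to $M_q$.

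First, I take a directed path $y = v_0 \to_t v_1 \to_t \dots \to_t v_k = x$ in $M_t$. I then mark the observed vertices on it, $y = u_0, u_1, \dots, u_m = x$, with each $u_j \in \Omega_s$, listed in path order. Consecutive marked vertices $u_j, u_{j+1}$ are joined by a directed subpath whose interior lies entirely in $\Omega_t \setminus \Omega_s$.

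Next, I check that such a subpath is an inducing path relative to $\Omega_t\setminus\Omega_s$. It has no collider occurrences, because it is directed, and every non-endpoint lies in the latent set. By Definition~\ref{def:proj}(1), $u_j$ and $u_{j+1}$ are therefore adjacent in $L(\Omega_s)$. Since $u_j \Rightarrow_t u_{j+1}$, Definition~\ref{def:proj}(2) then gives $u_j \to_L u_{j+1}$. Acyclicity of $M_t$ rules out the reverse orientation, and the bidirected alternative is excluded because it requires $u_j\not\Rightarrow_t u_{j+1}$.

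Finally, since $\mathcal M(\Omega_s)\neq\emptyset$, Lemma~\ref{lem:realizability} yields $M_q\in\mathcal M(\Omega_s)$ with $u_j \to_q u_{j+1}$ for every $j$. Concatenating these links gives a directed path $y \Rightarrow_q x$, so $M^* = M_q$ works.

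The main obstacle is essentially bookkeeping: making sure the definition of an inducing path as quoted in Definition~\ref{def:proj} really covers a collider-free directed path with only latent interior vertices. The case $k=1$, where $x$ and $y$ are directly adjacent, is covered trivially. Everything substantive has been delegated to Lemma~\ref{lem:realizability}, namely the loyal-equivalent-graph result, so no further difficulty is expected.
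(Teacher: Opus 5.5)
Your proposal is correct and follows the paper's route: carry the directed path $y \Rightarrow_t x$ into the latent projection $L(\Omega_s)$, then use Lemma~\ref{lem:realizability} to transfer its directed links to a consistent DAG $M_q$. The only difference is that you verify the projection step explicitly, by checking that the latent-interior directed segments between consecutive observed vertices are inducing paths. The paper instead cites Richardson and Spirtes (2002) for that step.
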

\begin{proof}
	If the true model contains a path $x \Leftarrow_t y$ and $x,y \in \Omega_s$, then the latent projection $L(\Omega_s)$ also contains a path $x\Leftarrow_{L} y$ \citep{richardson_ancestral_2002}. By Lemma~\ref{lem:realizability}, if $\mathcal{M}(\Omega_s)\neq\emptyset$ then there exists a consistent model $M^*\in\mathcal{M}(\Omega_s)$ that includes this path.
\end{proof}

Theorem~\ref{thm:no_persuade_defective} is then a direct corollary of the lemma above.
\begin{proof}[Proof of Theorem~\ref{thm:no_persuade_defective}]
	Persuading a sophisticated receiver that $x \Rightarrow_s y$ requires presenting a model $M_s$ such that $x \Rightarrow_M y$ for all $M \in \mathcal{M}(\Omega_s)$. Lemma~\ref{lem:exists_modconswtrue} implies that for every feasible disclosure $\Omega_s$---that is, every $\Omega_s$ with $\mathcal M(\Omega_s)\neq \emptyset$---there exists $M^*\in\mathcal M(\Omega_s)$ with $x\Leftarrow_{M^*}y$.
\end{proof}

\subsection{Proof of Theorem~\ref{thm:direct}}

As in the main text, for a DAG $D$ over $\Omega$ and $a\in\Omega$, we write $C_D(a)$ for the parents of $a$, $C_D^{-1}(a)$ for its children, and $\bar{C}_D(a)$ for its strict ancestors. Additionally, define
\[
\mathrm{An}_D(a)\equiv \bar{C}_D(a)\cup\{a\},
\qquad
\mathrm{De}_D(a)\equiv \{b\in\Omega : a\Rightarrow_D b\}\cup\{a\}
\]
for the \emph{self-inclusive} ancestor and descendant sets. Further, $\mathrm{ne}_D(a)$ denotes the set of variables adjacent to $a$ in $D$ (neighbors), a \emph{sink} is a variable without children, a \emph{source} is a variable without parents, and $D-z$ is the subgraph of $D$ induced on $\Omega\setminus\{z\}$.
Finally, we use $D \simeq D'$ to denote observationally equivalent DAGs, i.e., $D \simeq D'$ if and only if $D,D' \in \mathcal{M}(\Omega)$ for some $\Omega$.

\begin{definition}[Positive set]\label{def:positive}
	A disclosed set $\Omega$ with $\{x,y\}\subseteq\Omega\subseteq\Omega_t$ is:
	\begin{itemize}[nosep]
		\item \emph{positive} if $\mathcal{M}(\Omega)\neq\emptyset$ and $x\Rightarrow_M y$ for every $M\in\mathcal{M}(\Omega)$;
		\item \emph{minimal} if $|\Omega| \leq |\Omega'|$ for any other positive $\Omega' \subseteq \Omega_t$;
		\item \emph{direct} if $x \in ne_M(y)$ for some (equivalently, every) $M \in \mathcal{M}(\Omega)$;
	\end{itemize}
\end{definition}

Positivity of a disclosure is simply a shorthand for the sophisticated receiver's acceptance criterion for a proposal with $x\Rightarrow_s y$. (By Proposition~\ref{prop:ambig_optactions}, this is equivalent to inducing an action from a skeptical receiver in the game of Section \ref{sec:microfound}.)
Using this terminology, the sender's problem with a sophisticated receiver is to find a minimal positive disclosure. Our goal in this proof is to show that such minimal $\Omega^*$ is necessarily direct. The proof proceeds in the following steps:
\begin{enumerate}[nosep]
	\item Section~\ref{app:simple} shows that simplicity of $M_t$ implies that no $M \in \mathcal{M}(\Omega)$ for any $\Omega \subset \Omega_t$ can have a particular structure: a cycle of length at least four that has exactly one collider and no chord (no link between nonconsecutive vertices). Lemmas~\ref{lem:tri_ancestors}--\ref{lem:one-collider} establish this restriction.
	
	\item Section~\ref{app:peel} demonstrates that removing descendants of $x$ and $y$ from $\Omega$ does not impact positivity, hence a minimal $\Omega^*$ does not include them, and has $y$ as the only sink (Lemma~\ref{lem:min-sink}). 
	
	\item Section~\ref{app:min} shows that in simple $M_t$, all proxies $v$ s.t. $x \Rightarrow v \to y$ can also be removed without violating positivity, since otherwise $v$ must be a part of the structure ruled out in Section~\ref{app:simple}.
	
	\item Section~\ref{app:proof2} collates the results and completes the proof.
\end{enumerate}

\subsubsection{Structures prohibited in simple worlds}\label{app:simple}

\begin{lemma} \label{lem:tri_ancestors}
	If $M_t=(\Omega_t,C_t)$ is simple then for any distinct $y,a,b \in \Omega_t$ such that $a,b \in C_t(y)$: either $a \perp b$, or $a$ is adjacent to $b$ in $M_t$.
	
	Further, for any distinct $y,a,b \in \Omega_t$ such that $a,b \in \bar C_t(y)$: either $a \perp b$, or $a \in \mathrm{An}_t(b)$, or $b \in \mathrm{An}_t(a)$.
\end{lemma}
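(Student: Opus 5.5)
The plan is to use the definition of simplicity together with the fact that $M_t$ is consistent with $P$ (Markov property), and then to reduce the second statement to the first by a minimality argument.

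\emph{First statement.} Let $a,b\in C_t(y)$ be distinct and suppose $a$ and $b$ are not adjacent in $M_t$. Then $a\to_t y\leftarrow_t b$ is an (unshielded) collider, so simplicity says $a$ and $b$ are not correlated in $M_t$. Concretely, there is no directed path between them and no common ancestor.

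I will then check that this means every path between $a$ and $b$ contains a collider occurrence. A collider-free path has the fork shape $a\Leftarrow s\Rightarrow b$, with the summit $s$ possibly equal to $a$ or $b$. Such a path would make $a$ and $b$ correlated, which is excluded. Hence $\emptyset$ d-separates $a$ and $b$, and the Markov property of $M_t$ gives $a\perp b$.

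\emph{Second statement.} Let $a,b\in\bar C_t(y)$ be distinct with $a\not\perp b$. Suppose, toward a contradiction, that $a\notin\mathrm{An}_t(b)$ and $b\notin\mathrm{An}_t(a)$.

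1. Faithfulness (contrapositive of the Markov argument above) makes $a$ and $b$ correlated in $M_t$. Since neither is an ancestor of the other, they have a common ancestor $c$.
2. The set of common self-inclusive descendants $\mathrm{De}_t(a)\cap\mathrm{De}_t(b)$ is nonempty, because it contains $y$. Pick $w$ in this set that is minimal with respect to ancestry, meaning no other element of the set is a strict ancestor of $w$. Then $w\neq a,b$, since neither of $a,b$ is an ancestor of the other.
3. Choose a parent $a'\in C_t(w)$ lying on a directed path from $a$ to $w$, so $a\in\mathrm{An}_t(a')$. Similarly choose $b'\in C_t(w)$ with $b\in\mathrm{An}_t(b')$.
4. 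By minimality of $w$, $a'$ is not a descendant of $b$ (else $a'$ would be a common descendant strictly above $w$). Likewise $b'$ is not a descendant of $a$. In particular $a'\neq b'$.
5. The common ancestor $c$ of $a$ and $b$ is also a common ancestor of $a'$ and $b'$. So $a'$ and $b'$ are correlated, and by faithfulness $a'\not\perp b'$.
6. Apply the first statement to the two parents $a',b'$ of $w$: they must be adjacent, say $a'\to_t b'$. Then $b'$ is a descendant of $a$ as well as of $b$, and it is a strict ancestor of $w$. This contradicts the minimality of $w$. The case $b'\to_t a'$ is symmetric.

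The step needing most care is choosing $w$ minimal so that the adjacency forced in step 6 yields a contradiction. Checking that ``correlated'' in the graph matches $\not\perp$ in the data is routine, via the Markov property and faithfulness of $M_t$.
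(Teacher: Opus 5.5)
Your proof is correct and follows the paper's strategy. Part 1 comes from simplicity plus the Markov property, and part 2 is a contradiction argument: take an extremal collider configuration $a\Rightarrow_t w\Leftarrow_t b$, apply part 1 to the two incoming parents of $w$, and contradict extremality. The only difference is the extremal measure: you take $w$ ancestrally minimal among common descendants, whereas the paper takes the shortest collider walk and first shows it has length two. Your choice handles the degenerate cases ($a'=a$, $b'=b$, and checking $a'\neq b'$) more cleanly. One small label slip: the step from $a\not\perp b$ to graphical correlation in your step 1 uses the Markov property, not faithfulness, as your parenthetical itself indicates.
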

\begin{proof}
	Begin with the first part. If $a,b$ are not adjacent, then $a \to_t y \leftarrow_t b$ is an unshielded collider which forms a V-structure in the data. Then simplicity of $M_t$ requires that $a \perp b$. 
	
	To show the second part, suppose $a \not\perp b$. Then we need to show that either $a \Rightarrow_t b$, or $a \Leftarrow_t b$.
	Suppose not---then $a,b$ are non-adjacent and there is no directed path between them, but $a \not\perp b$ implies (by the Markov property) that there exists a collider-free path between them.
	
	Consider the shortest walk \emph{with a collider} between $a$ and $b$: $a \Rightarrow_t c \Leftarrow_t b$. (We know at least one such walk exists because $a,b \in \bar{C}_t(y) \iff a \Rightarrow_t y \Leftarrow_t b$.) We claim this walk must have length two, i.e., $c$ must be adjacent to both $a$ and $b$. Toward a contradiction, suppose there exist variables $d$ distinct from $a,b,c$ and $e$ distinct from $a,c$ such that $a \Rightarrow_t d \to_t c \leftarrow_t e \Leftarrow_t b$. Then the collider-free path between $a$ and $b$ implies there exists a collider-free path between $d$ and $e$, meaning $d \not\perp e$. Then by part 1 of this lemma, $d$ and $e$ must be adjacent (either $d \to_t e$, or $d \leftarrow_t e$), which contradicts $a \Rightarrow_t c \Leftarrow_t b$ being the shortest walk with a collider. 
	
	We conclude that there exists $c$ such that $a \to_t c \leftarrow_t b$. Then part 1 of this lemma implies that since $a \not\perp b$, $a$ and $b$ must be adjacent---a contradiction.
\end{proof}

The following lemma is needed to draw a chord in a cycle with one sink in the projection.

\begin{lemma}[Monotone marginal adjacency]\label{lem:marg-adj}
	Suppose $M_t$ is simple, and let $O\subseteq\Omega_t$ and $a,r,b\in O$ satisfy $a\Rightarrow_t r\Rightarrow_t b$. If $a$ and $b$ are adjacent in the latent projection $L(O)$, then so are $a$ and $r$.
\end{lemma}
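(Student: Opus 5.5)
We argue by contraposition, using the separation characterization of adjacency in a latent projection: $a$ and $b$ are adjacent in $L(O)$ if and only if no $S\subseteq O\setminus\{a,b\}$ d-separates them in $M_t$. This follows from Lemma~\ref{lem:membership} and maximality of $L(O)$. So assume $a$ and $r$ are non-adjacent in $L(O)$, and fix $S\subseteq O\setminus\{a,r\}$ that d-separates $a$ and $r$ in $M_t$. The goal is to build from $S$ a set $T\subseteq O\setminus\{a,b\}$ that d-separates $a$ and $b$. The natural candidate is $T=(S\cup\{r\})\setminus\{b\}$: conditioning on $r$ cuts the directed route $a\Rightarrow_t r\Rightarrow_t b$, and $S$ should block the other routes. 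If $S\cup\{r\}$ alone fails, the fallback is to replace $S$ by its restriction to $\mathrm{An}_t(\{a,r\})\cap O$.

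Suppose, for contradiction, that some walk $\omega$ d-connects $a$ and $b$ given $T$ in the sense of Definition~\ref{def:dsep2}. Consider the first vertex of $\omega$ lying in $\mathrm{De}_t(r)$, or $r$ itself. Since $r\in T$, $\omega$ cannot pass through $r$ as a non-collider occurrence, and $b\in\mathrm{De}_t(r)$ guarantees that such a first vertex exists. The prefix of $\omega$ up to this vertex, extended by a directed path down to $r$, should give an $a$--$r$ walk that is d-connecting given $S$. This contradicts the choice of $S$. The prefix's collider occurrences lie in $T$, hence in $S$ or equal to $r$, and its non-colliders avoid $T\supseteq S\setminus\{b\}$. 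The bookkeeping needed is to check that the appended directed segment contains no element of $S$, and that any collider occurrence at $r$ on the prefix can be rerouted.

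The expected obstacle is exactly the case where the appended directed segment passes through a member of $S$. It also covers the case where $\omega$ uses $r$ only as a collider occurrence, entering $r$ from a parent $d$ and leaving to a parent $e$. Without further structure, conditioning on $r$ opens such a collider, and the argument genuinely fails in non-simple worlds. Here we invoke simplicity through Lemma~\ref{lem:tri_ancestors}. Since $d,e\in C_t(r)$, either $d$ and $e$ are adjacent, in which case we shortcut the walk and avoid the collider at $r$, or $d\perp e$.

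The plan for the case $d\perp e$ is to take $\omega$ of minimal length among d-connecting walks. Applying the second part of Lemma~\ref{lem:tri_ancestors} to ancestors of $r$ and of $b$, we aim to show that every collider of the offending type can be shortcut or lies outside $\mathrm{An}_t(b)$. Minimality of $\omega$ then yields a contradiction. If this local surgery becomes unwieldy, the alternative is to work directly with inducing paths. Take a shortest inducing path $\pi$ between $a$ and $b$ relative to $\Omega_t\setminus O$; all its colliders lie in $\mathrm{An}_t(b)$. Use Lemma~\ref{lem:tri_ancestors} to show that $\pi$ can be chosen collider-free, hence of fork shape $a\Leftarrow s\Rightarrow b$ with latent interior. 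Finally, splice it with the directed path $r\Rightarrow_t b$ to obtain an inducing path between $a$ and $r$.
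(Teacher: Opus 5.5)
Your proposal has a genuine gap. In your main route, let $v$ be the first vertex of $\omega$ in $\mathrm{De}_t(r)$. If $v\neq r$, the edge of $\omega$ entering $v$ must be $u\to_t v$, because an edge $v\to_t u$ would put $u$ in $\mathrm{De}_t(r)$. Appending the directed path $r\Rightarrow_t v$, traversed backwards, therefore makes $v$ a collider occurrence. By Definition~\ref{def:dsep2} that occurrence would have to lie in $S$. Nothing guarantees this, and under your fallback $S\subseteq\mathrm{An}_t(\{a,r\})$ it is impossible, since no member of $S$ descends from $r$.

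So the generic case fails, and it is not the case you flagged. When $v=r$ (including when $\omega$ meets $r$ as a collider), the prefix ends at $r$ and is already d-connecting given $S$. Applying Lemma~\ref{lem:tri_ancestors} to parents of $r$ therefore addresses a non-issue, and the ``minimal $\omega$'' plan is not an argument.

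What is missing is a way to produce an edge pointing \emph{into} $r$ from the part of $\omega$ before $v$. This is exactly the Claim in the paper's proof:
\begin{itemize}
\item The summit $h$ of the last fork before $v$ is a common ancestor of $r$ and $v$. With your restricted $S$, the preceding collider lies in $\mathrm{An}_t(r)$.
\item Choose a directed path from $r$ that meets the branch $h\Rightarrow_t v$ as early as possible, say at $m$.
\item Simplicity shields the collider at $m$, forcing $p\to_t q$ between its two parents.
\item Repeated use of Lemma~\ref{lem:tri_ancestors} pushes this arrow back along the $r$-path until $p\to_t r$.
\end{itemize}
Splicing the prefix up to $p$ with $p\to_t r$ then gives an $a$--$r$ walk d-connecting given $S$. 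With this Claim your contrapositive goes through; without it, it does not.

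Your fallback via inducing paths does not avoid the problem, for two reasons.

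First, $\pi$ cannot in general be chosen collider-free. Take the simple model $a\to_t r$, $a\to_t c$, $r\to_t c$, $r\to_t b$, $c\to_t b$, $k\to_t c$, $k\to_t b$, with $O=\{a,r,c,b\}$. The path $a\to_t c\leftarrow_t k\to_t b$ is inducing. However, both children of the source $a$ are observed, so every inducing $a$--$b$ path contains a collider. Lemma~\ref{lem:tri_ancestors} cannot shortcut the collider at $c$, because $a\perp k$. (Here $a\to_t r$, so the lemma itself holds; only your intermediate claim fails.) The paper instead cuts $\pi$ at the first collider occurrence $e$ outside $\mathrm{An}_t(r)$ and uses Lemma~\ref{lem:tri_ancestors} to show $r\Rightarrow_t e$.

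Second, even for a collider-free fork $a\Leftarrow_t s\Rightarrow_t b$, splicing with $r\Rightarrow_t b$ does not produce an inducing path for $(a,r)$. It makes the observed $b$ an interior collider that descends from $r$, so it is an ancestor of neither endpoint. The appended directed path may also contain observed non-colliders. In both of your routes, the edge into $r$ has to come from the Claim.
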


\begin{proof}
If $r\in\{a,b\}$ the conclusion is immediate, so let $r\notin\{a,b\}$. Our goal is to show that there exists an inducing path between $a$ and $r$.

By Definition~\ref{def:proj}, the adjacency of $a$ and $b$ in $L(O)$ is witnessed by an inducing path $\pi$ between them relative to $\Omega_t\setminus O$: every interior vertex of $\pi$ that is not a collider occurrence lies outside $O$, and every collider occurrence on $\pi$ is an ancestor of $a$ or $b$. If $r$ occurs as an interior vertex of $\pi$, it must be a collider occurrence, otherwise $a$ and $b$ could not be adjacent. Call a collider occurrence of $\pi$ \emph{bad} if it does not belong to $\mathrm{An}_t(r)$, and set
\[
e \equiv \text{the first bad collider occurrence along $\pi$ from $a$,
or } e\equiv b \text{ if there is none.}
\]

Intuitively, a collider occurrence matters only through its downstream effect on an endpoint; a bad one affects only $b$, and so becomes irrelevant once we seek a source of correlation between $a$ and $r$ rather than between $a$ and $b$.

Suppose first that $r$ occurs on $\pi$ strictly before $e$. Then the $a$--$r$ part of $\pi$ is itself an inducing path for the pair $(a,r)$: its interior non-collider-occurrences lie outside $O$, and its collider occurrences precede $e$, hence lie in $\mathrm{An}_t(r)$ and are in particular ancestors of the endpoint $r$. Definition~\ref{def:proj} then makes $a,r$ adjacent in $L(O)$, as required. For the remainder of the proof, suppose $r$ does not occur on $\pi$ before $e$.

\medskip
\noindent\emph{Claim.} Suppose $h\Rightarrow_t r$ and $r\Rightarrow_t e$, and let $P$ be a directed path from $h$ to $e$ that does not pass through $r$. Then some vertex $p\in P\setminus\{e\}$ satisfies $p\to_t r$.

\medskip
\noindent\emph{Proof of the Claim.} The mechanism: simplicity repeatedly
shields the collider occurrence at the vertex where a directed path from
$r$ first meets $P$, pushing an arrow from $P$ backwards down the
$r$-path until it lands on $r$ itself (Figure
\ref{fig:monotone-claim}).

Among directed paths $Q: r\Rightarrow_t e$, choose one whose first intersection $m$ with $P$ occurs as early as possible along $P$, and let $Q$ follow $P$ after $m$; an intersection exists because $e\in P\cap Q$. By acyclicity, $m\neq h$,  and $m\neq r$ because $r\notin P$. Let $p$ and $q$ be the ancestors of $m$ on $P$ and $Q$; they are distinct, for a common ancestor would be an earlier intersection. Thus $p\to_t m\leftarrow_t q$. The vertex $h$ is an ancestor of both $p$ and $q$, so they are correlated, and by Lemma~\ref{lem:tri_ancestors} they are adjacent. If $q\to_t p$, then paths $Q$ and $P$ meet strictly earlier than $m$, a contradiction. Hence $p\to_t q$.

Denote the vertices preceding $q$ on $Q$ as $r=q_0\to_t\cdots\to_t q_k=q$ and propagate the argument backward. If $p\to_t q_i$ with $i\geq 1$, the triple $p\to_t q_i\leftarrow_t q_{i-1}$ has correlated parents, because $h$ is an ancestor of both. Lemma~\ref{lem:tri_ancestors} makes $p,q_{i-1}$ adjacent, and by the above reasoning we get  $p\to_t q_{i-1}$. Induction terminates with $p\to_t q_0=r$. Finally, $p\neq e$ because $p$ has the successor $m$ on $P$. \hfill$\square$

\begin{figure}[t]
	\centering
	\begin{tikzpicture}[>=stealth,
			every node/.style={circle, draw, inner sep=1.5pt, minimum size=5.5mm}]
		\node (h) at (0,1.6)   {$h$};
		\node (p) at (2.6,0.8) {$p$};
		\node (m) at (4.4,0.8) {$m$};
		\node (e) at (6.2,0.8) {$e$};
		\node (r) at (0,-0.7)  {$r$};
		\node (q) at (2.6,-0.7){$q$};
		\draw[->, double] (h) -- (p);
		\draw[->]         (p) -- (m);
		\draw[->, double] (m) -- (e);
		\draw[->, double] (h) -- (r);
		\draw[->, double] (r) -- (q);
		\draw[->]         (q) -- (m);
		\draw[->, dashed] (p) -- (q);
		\draw[->, dashed] (p) to[bend left=18] (r);
	\end{tikzpicture}
	\caption{The configuration in the Claim within the proof of Lemma~\ref{lem:marg-adj}. Double arrows denote directed paths, single arrows denote links. Simplicity shields the collider occurrence at $m$, forcing $p\to_t q$ (dashed); iterating down the prefix $r\Rightarrow_t q$ yields $p\to_t r$ (dashed).}
	\label{fig:monotone-claim}
\end{figure}

Returning to the main argument, let $\gamma$ be the collider occurrence of $\pi$ immediately preceding $e$, or $\gamma \equiv a$ if there is none. The $\gamma$--$e$ segment of $\pi$ contains no interior collider occurrences, so it is fork-shaped, $\gamma\Leftarrow_t h\Rightarrow_t e$, for some summit $h$ (with possibly degenerate branches). Let $P: h\Rightarrow_t e$ denote its $e$-branch, a subpath of $\pi$. We verify the hypotheses of the Claim.

First, $h\Rightarrow_t r$. If $\gamma=a$, then $h\Rightarrow_t a\Rightarrow_t r$. If $\gamma\neq a$, then $\gamma$
precedes $e$ and is therefore not bad, so $\gamma\in\mathrm{An}_t(r)$; since $r$ does not occur before $e$, $\gamma\neq r$, hence $\gamma\Rightarrow_t r$ and $h\Rightarrow_t\gamma\Rightarrow_t r$.

Second, $r\Rightarrow_t e$. If $e=b$, this is the hypothesis $r\Rightarrow_t b$. Otherwise $e$ is a bad collider occurrence of the inducing path $\pi$, hence $e\Rightarrow_t b$. Now $h$ is a common ancestor of $r$ and $e$, so the
two are correlated, and both lie in $\mathrm{An}_t(b)$. Lemma~\ref{lem:tri_ancestors} makes them comparable, while badness excludes both $e=r$ and $e\Rightarrow_t r$; hence $r\Rightarrow_t e$.

Third, $r\notin P$ and $r\neq h$. The interior vertices of the $e$-branch are non-collider occurrences of $\pi$ and thus lie outside $O\ni r$; the endpoint $e$ differs from $r$ as just shown. As for the summit, there are three candidates: an interior non-collider occurrence of $\pi$ (outside $O$, hence not $r$); the endpoint $a$, when $\gamma=a$ and the $\gamma$-branch is empty (and $a\neq r$); or, only when $e=b$, the endpoint $b$ itself---excluded because $h=b$ together with $h\Rightarrow_t r$ and $r\Rightarrow_t b$ would contradict acyclicity.

\begin{figure}[t]
	\centering
	\begin{tikzpicture}[
		obs/.style={circle, draw, inner sep=1.5pt, minimum size=5.5mm},
		lat/.style={obs, dashed},
		spc/.style={circle, draw, dashed, inner sep=0pt, minimum size=2.8mm},
		lnk/.style={->, >=stealth},
		dpath/.style={->, >=stealth, double, double distance=1.2pt},
		splice/.style={->, >=stealth, line width=1.2pt}
		]
		\node[obs] (a)  at (0,0)      {$a$};
		\node[spc] (s1) at (1.1,0)    {};
		\node[lat] (g)  at (2.2,0)    {$\gamma$};
		\node[spc] (s2) at (3.3,0)    {};
		\node[lat] (h)  at (4.4,0.6)  {$h$};
		\node[lat] (p)  at (5.5,0)    {$p$};
		\node[spc] (s3) at (6.6,0)    {};
		\node[lat] (e)  at (7.7,0)    {$e$};
		\node[obs] (b)  at (8.8,0)    {$b$};
		\node[obs] (r)  at (4.4,-1.4) {$r$};
		\node[font=\scriptsize] at (2.2,0.55) {good};
		\node[font=\scriptsize] at (7.7,0.55) {bad};
		\draw[lnk] (s1) -- (a);
		\draw[lnk] (s1) -- (g);
		\draw[lnk] (s2) -- (g);
		\draw[lnk] (h)  -- (s2);
		\draw[lnk] (h)  -- (p);
		\draw[lnk] (p)  -- (s3);
		\draw[lnk] (s3) -- (e);
		\draw[lnk, gray] (b) -- (e);
		\node[font=\scriptsize, gray] at (8.25,-0.45) {rest of $\pi$};
		\draw[dpath] (h) -- (r);
		\draw[dpath] (g) to[bend right=30] (r);
		\draw[dpath] (r) to[bend right=25] (e);
		\draw[splice] (p) -- (r);
		\draw[dashed, gray] (6.05,0.35) -- (6.05,-0.35);
		\node[font=\scriptsize, gray] at (6.05,-0.6) {cut};
	\end{tikzpicture}
	
	\smallskip
	(a) before: the inducing $a$--$b$ path $\pi$
	
	\bigskip
	\begin{tikzpicture}[
		obs/.style={circle, draw, inner sep=1.5pt, minimum size=5.5mm},
		lat/.style={obs, dashed},
		spc/.style={circle, draw, dashed, inner sep=0pt, minimum size=2.8mm},
		lnk/.style={->, >=stealth},
		dpath/.style={->, >=stealth, double, double distance=1.2pt},
		splice/.style={->, >=stealth, line width=1.2pt}
		]
		\node[obs] (a)  at (0,0)     {$a$};
		\node[spc] (s1) at (1.1,0)   {};
		\node[lat] (g)  at (2.2,0)   {$\gamma$};
		\node[spc] (s2) at (3.3,0)   {};
		\node[lat] (h)  at (4.4,0.6) {$h$};
		\node[lat] (p)  at (5.5,0)   {$p$};
		\node[obs] (r)  at (6.6,0)   {$r$};
		\draw[lnk] (s1) -- (a);
		\draw[lnk] (s1) -- (g);
		\draw[lnk] (s2) -- (g);
		\draw[lnk] (h)  -- (s2);
		\draw[lnk] (h)  -- (p);
		\draw[splice] (p) -- (r);
		\draw[dpath] (g) to[bend right=35] (r);
		\node[gray, font=\scriptsize] at (8.1,-0.5) {discarded: everything past $p$};
	\end{tikzpicture}
	
	\smallskip
	(b) after: the spliced inducing $a$--$r$ path $\pi'$
	\caption{The splice in the proof of Lemma~\ref{lem:marg-adj}. In panel (a), $p$ is the splice point and the suffix of $\pi$ after $p$ is removed. Panel (b) replaces that suffix with the link $p \to_t r$, producing the inducing $a$--$r$ path $\pi'$. Plain arrows denote links of $\pi$, double arrows denote directed paths in $M_t$, and dashed outlines mark variables outside $O$. The portion of $\pi$ beyond $e$ is schematic.}
	\label{fig:splice}
\end{figure}

Hence, there exists $p\in P\setminus\{e\}$ with $p\to_t r$. Replace the portion of $\pi$ from $p$ onward by this single link, obtaining a path $\pi'$ from $a$ to $r$; as illustrated in Figure~\ref{fig:splice}. The status of $p$ is unchanged: the old link out of $p$ along $P$ and the new link $p\to_t r$ both carry tails at $p$. Every retained interior non-collider occurrence lies outside $O$ (for $p$ itself, if interior: it is a non-collider occurrence of $\pi$ on the $e$-branch, hence outside $O$, unless $p=h=a$, in which case $\pi'$ is the single link $a\to_t r$); every retained collider occurrence precedes $e$ and is therefore an ancestor of the endpoint $r$; and $r$ does not occur among the retained vertices, since it does not occur on $\pi$ before $e$. Thus $\pi'$ is an inducing path between $a$ and $r$ relative to $\Omega_t\setminus O$, so $a,r$ are adjacent in $L(O)$.
\end{proof}

\begin{lemma}[One-collider cycles are excluded]\label{lem:one-collider}
	Suppose $M_t$ is simple and $\mathcal{M}(O)\neq\emptyset$ for some $O\subseteq\Omega_t$. Then in every consistent model $M\in\mathcal{M}(O)$, every cycle of length four or more with exactly one collider occurrence has a chord.
\end{lemma}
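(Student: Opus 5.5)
The plan is to reduce to a statement about the true model $M_t$ through the latent projection $L(O)$, and then use simplicity (Lemma~\ref{lem:tri_ancestors}) together with Lemma~\ref{lem:marg-adj} to force the missing chord.

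\textbf{Step 1: the shape of the cycle.} Fix $M\in\mathcal M(O)$ and a chordless cycle of length at least four in $M$ with exactly one collider occurrence, at some vertex $c$. Along any cycle the number of collider occurrences equals the number of source occurrences. So the cycle consists of two directed paths $s\Rightarrow_M u\to_M c$ and $s\Rightarrow_M w\to_M c$ from a single source $s$. Here $u\neq w$ because the length is at least four, and $u,w$ are nonadjacent because the cycle is chordless. Two consequences follow.
\begin{itemize}[nosep]
\item $u\to_M c\leftarrow_M w$ is an unshielded collider of $M$. By Lemma~\ref{lem:membership} it is also an unshielded collider of $L(O)$, with arrowheads at $c$ on both links.
\item The path $u\Leftarrow_M s\Rightarrow_M w$ is collider-free, so $u\not\perp w$ by faithfulness. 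Hence $u$ and $w$ are correlated in $M_t$.
\end{itemize}
An arrowhead at $c$ in $L(O)$ means $c\notin\mathrm{An}_t(u)\cup\mathrm{An}_t(w)$.

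\textbf{Step 2: the case where both links are directed in $L(O)$.} Suppose $u\to_L c$ and $w\to_L c$, i.e.\ $u,w\in\bar C_t(c)$. Since $u$ and $w$ are correlated, the second part of Lemma~\ref{lem:tri_ancestors} makes them comparable; say $u\Rightarrow_t w\Rightarrow_t c$. Apply Lemma~\ref{lem:marg-adj} with $a=u$, $r=w$, $b=c$. Because $u,c$ are adjacent in $L(O)$, so are $u,w$. By Lemma~\ref{lem:membership}, $M$ and $L(O)$ share adjacencies, so $u,w$ are adjacent in $M$. This is a chord, a contradiction.

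\textbf{Step 3: reducing the bidirected case to Step 2 (main obstacle).} The hard part is when $u\leftrightarrow_L c$ or $w\leftrightarrow_L c$: then $u$ need not be a true ancestor of $c$, and Lemma~\ref{lem:marg-adj} does not apply directly. I would attack this at the level of the witnessing inducing path $\pi$ between $u$ and $c$ in $M_t$, reusing the splicing argument from the Claim in the proof of Lemma~\ref{lem:marg-adj}.
\begin{itemize}[nosep]
\item Every collider occurrence on $\pi$ is an ancestor of $u$ or of $c$.
\item $\pi$ has a fork segment ending in a directed path $h\Rightarrow_t c$ whose interior lies outside $O$.
\item The vertex $h$, or the relevant collider occurrence of $\pi$, is then a true ancestor of both $c$ and the corresponding vertex on the $w$ side, and these two are correlated because $u\not\perp w$.
\item Simplicity (Lemma~\ref{lem:tri_ancestors}) shields the resulting collider at the first meeting point. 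Propagating the shielding backward, as in the Claim, should produce an inducing path between $u$ and $w$.
\end{itemize}

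To keep the bookkeeping manageable, I would first invoke Lemma~\ref{lem:realizability} to replace $M$ by the loyal DAG $M_q\in\mathcal M(O)$. The cycle and its unique unshielded collider at $c$ survive this replacement, since skeleton and unshielded colliders are shared. $M_q$ agrees with every directed link of $L(O)$, so at least the non-collider links of the cycle that are directed in $L(O)$ keep their true ancestral meaning. If the splicing cannot be carried through in full generality, my fallback is to show that a bidirected link at an unshielded collider contradicts $\mathcal M(O)\neq\emptyset$. The argument would be that simplicity makes the latent confounder generate an extra V-structure, violating Markov equivalence with a DAG. I expect this step to be where most of the work lies.
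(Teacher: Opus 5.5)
Your Steps 1 and 2 are sound and coincide with the final part of the paper's argument: comparability via Lemma~\ref{lem:tri_ancestors}, then Lemma~\ref{lem:marg-adj} to produce the chord. The gap is Step 3. You never actually dispose of the case $u\leftrightarrow_L c$ (or $w\leftrightarrow_L c$). The inducing-path splicing is only sketched (``should produce'').

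Your fallback is also false as stated. Take $M_t$ with $h\to u$, $h\to c\leftarrow w$ and $O=\{u,c,w\}$. This $M_t$ is simple, and $\mathcal M(O)$ contains $u\to c\leftarrow w$. Yet $L(O)$ is $u\leftrightarrow c\leftarrow w$: a bidirected link into an unshielded collider, with no extra V-structure anywhere. So no argument that looks only at the collider at $c$ can work; you must use the rest of the cycle.

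The missing idea is a short orientation-propagation argument inside $L$, and it shows the bidirected case is simply vacuous. Write the rest of the cycle as $u=r_0-r_1-\cdots-r_k=w$, with $k\ge 2$ since the length is at least four. Because the cycle is chordless, each $r_i$ is the center of an unshielded triple: $c-u-r_1$, $r_{i-1}-r_i-r_{i+1}$, or $r_{k-1}-w-c$. Each of these triples is a non-collider in $M$, hence by Lemma~\ref{lem:membership} a non-collider in $L$. Since $L$ has only directed and bidirected edges, an arrowhead into a non-collider forces a tail on its other edge.

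Now suppose $u\leftrightarrow_L c$. The triple $c-u-r_1$ forces $u\to_L r_1$. Then $r_1\to_L r_2$, and so on until $r_{k-1}\to_L w$. Finally, the triple $r_{k-1}-w-c$ forces $w\to_L c$. Directed $L$-edges encode ancestry in $M_t$, so $u\Rightarrow_t c$, contradicting $u\leftrightarrow_L c$. The case $w\leftrightarrow_L c$ is symmetric.

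Hence both cycle links into $c$ are directed in $L$, and your Step 2 completes the proof. This step needs neither simplicity nor the loyal DAG of Lemma~\ref{lem:realizability}; simplicity enters only through Lemmas~\ref{lem:tri_ancestors} and~\ref{lem:marg-adj} in Step 2.
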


\begin{proof}
	Suppose such a cycle exists in some $M\in\mathcal{M}(O)$; denote its unique collider occurrence by $z$ and the cycle-neighbours of $z$ by $p$ and $q$, so that $p\to_M z\leftarrow_M q$, and let $R$ be the remainder of the cycle: the chordless $p$--$q$ path avoiding $z$, all of whose interior vertices are non-colliders in the cycle.
	
	Let $L \equiv L(O)$. By Lemma~\ref{lem:membership}, $L$ and $M$ have the same adjacencies and the same colliders. Because the cycle is chordless and has length at least four, $p$ and $q$ are nonadjacent, so $p\to z\leftarrow q$ is a collider of $M$ and hence of $L$. Every other cycle vertex is the center of an unshielded non-collider triple of $M$, and the same status transfers to $L$.
	
	We claim that in fact
	\begin{equation}\label{eq:tails}
		p\to_L z\leftarrow_L q ,
	\end{equation}
	that is, the $p$- and $q$-ends carry tails. Suppose the $p$-end of the $p$--$z$ link carried an arrowhead, implying that $p\not\Rightarrow_t z$. Since $L$ contains only directed and bidirected links, the non-collider triple centered at $p$ forces a tail at $p$ on its other cycle link: $p\to_L w_1$, where $w_1$ is $p$'s neighbour on $R$. The triple centered at $w_1$ then has an arrowhead into $w_1$ and, being a non-collider, forces $w_1\to_L w_2$; the argument propagates a directed $L$-path $p\to_L\cdots\to_L q$ along $R$, and the non-collider triple at $q$ finally forces $q\to_L z$. Hence, $p\Rightarrow_t z$---a contradiction. The argument at $q$ is symmetric, proving \eqref{eq:tails}.
	
	Hence we have $p,q\in\mathrm{An}_t(z)$. The path $R$ is free of collider occurrences, so $p\not\perp q$ and the two are correlated in $M_t$. Lemma~\ref{lem:tri_ancestors} makes them comparable; without loss of generality, $p\Rightarrow_t q\Rightarrow_t z$. Since $p$ and $z$ are adjacent in $M$ and hence in $L$, by Lemma~\ref{lem:marg-adj} we get $p$ and $q$ adjacent in $L$ and hence in $M$---a chord of the cycle, a contradiction.
\end{proof}

\subsubsection{Removing other sinks}\label{app:peel}

\begin{lemma}[Sink attachment]\label{lem:attach}
	Let $D$ be a DAG over $\Omega$ with a sink $z$.
	\begin{enumerate}
		\item If $Q\simeq D-z$ and $\tilde{Q}$ is obtained from $Q$ by attaching $z$ as a sink with parent set $C_D(z)$, then $\tilde{Q}\simeq D$.
		\item If $Q\simeq D$ and $z$ is a sink of both $Q$ and $D$, then $Q-z\simeq D-z$.
		\item If $D\in\mathcal{M}(\Omega)$, then $D-z\in\mathcal{M}(\Omega\setminus\{z\})\neq\emptyset$.
		\item (Confinement is preserved.) Let $G$ be a DAG over $\Omega$, $x\in\Omega$, and $T\subseteq\Omega$ with $\mathrm{De}_G(x)\subseteq T$. If $\tilde{G}$ is obtained from $G$ by attaching a new vertex $z\notin \Omega$ as a sink with parent set $P\subseteq\Omega$ such that $P\cap T=\emptyset$, then $\mathrm{De}_{\tilde{G}}(x)\subseteq T$; in particular, $z\notin\mathrm{De}_{\tilde{G}}(x)$.
	\end{enumerate}
\end{lemma}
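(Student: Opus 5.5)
Each part is a statement about d-separation and the Verma--Pearl characterization (equal adjacencies and equal unshielded colliders), so the plan is to argue either through d-separation directly or through that characterization. For parts 1 and 2, use the Verma--Pearl characterization. Equivalently, use the fact that Markov equivalence of DAGs is equivalence of their d-separation statements.

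\emph{Part 1.} Suppose $Q\simeq D-z$ and $\tilde Q$ is $Q$ with $z$ attached as a sink with parents $C_D(z)$.
\begin{itemize}[nosep]
	\item Adjacencies: among $\Omega\setminus\{z\}$ they agree because $Q$ and $D-z$ have the same skeleton. The adjacencies of $z$ are $C_D(z)$ in both $\tilde Q$ and $D$.
	\item Colliders not centered at $z$: a collider $a\to c\leftarrow b$ with $c\neq z$ cannot have $z$ as an endpoint, since $z$ is a sink in both graphs. So such colliders lie in $D-z$ (resp.\ $Q$), and they coincide by equivalence.
	\item Colliders centered at $z$: these are $a\to z\leftarrow b$ with $a,b\in C_D(z)$ non-adjacent. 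Adjacency among $\Omega\setminus\{z\}$ is the same in both graphs, so these colliders coincide too.
\end{itemize}
Hence $\tilde Q\simeq D$.

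\emph{Part 2.} Now $Q\simeq D$ and $z$ is a sink in both. Deleting $z$ removes its incident edges and exactly the colliders centered at $z$ from each graph. No other collider involves $z$, because $z$ has no children. The remaining skeletons and unshielded colliders therefore still agree, so $Q-z\simeq D-z$.

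\emph{Part 3.} Argue via d-separation. Take $a,b\in\Omega\setminus\{z\}$ and $S\subseteq\Omega\setminus\{z,a,b\}$. Any path in $D$ through the sink $z$ must have $z$ as a collider occurrence. Since $z\notin S$ and $z$ has no descendants in $S$, every such path is blocked. Every path avoiding $z$ is a path of $D-z$. Its collider occurrences have the same descendant sets within $S$ in both graphs, because a descendant path that passes through $z$ must end at $z$, and $z\notin S$. So d-separation statements over $\Omega\setminus\{z\}$ coincide in $D$ and $D-z$. Consistency of $D$ with $P|\Omega$ gives both the Markov and faithfulness conditions for $D-z$ with respect to the marginal $P|(\Omega\setminus\{z\})$. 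Hence $D-z\in\mathcal M(\Omega\setminus\{z\})$, which in particular is nonempty.

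\emph{Part 4.} Adding $z$ as a sink creates only edges into $z$, so directed paths between vertices of $\Omega$ are unchanged. It follows that $\mathrm{De}_{\tilde G}(x)\cap\Omega=\mathrm{De}_G(x)\subseteq T$. If $z$ were a descendant of $x$, the last edge on that directed path would be $p\to z$ with $p\in P$, and $p$ would itself be a descendant of $x$. Then $p\in\mathrm{De}_G(x)\subseteq T$, contradicting $P\cap T=\emptyset$.

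The main point needing care is part 3: one has to confirm that the collider-descendant clause in Definition~\ref{def:dsep} is unaffected by deleting $z$. This reduces to the observation that $z$ is never in $S$ and has no descendants.
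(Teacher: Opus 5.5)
Your proposal is correct and follows essentially the same route as the paper. Parts 1 and 2 go through the Verma--Pearl characterization (same skeleton and same unshielded colliders), part 3 uses the fact that every path through $z$ is blocked at the unconditioned, descendant-free sink, and part 4 uses the fact that directed paths cannot pass through a sink. Your explicit check in part 3 that the descendant clause for collider occurrences is unaffected by deleting $z$ is a detail the paper leaves implicit.
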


\begin{proof}
	(1) A sink has no outgoing links, so $z$ appears in colliders of $D$/$\tilde{Q}$ only as the center; colliders centered at $z$ are determined only by the shared skeleton and the shared parent set. Colliders among $\Omega\setminus\{z\}$: each attached link has its tail at its old endpoint and so creates no arrowhead there; hence these are the colliders of $Q$ and of $D-z$, respectively, which agree since $Q\simeq D-z$. 
	
	(2) Deleting a vertex changes neither adjacencies nor orientations among the remaining vertices, so the colliders of $D-z/Q-z$ are exactly the colliders of $D/Q$ not centered at $z$.
	
	(3) For any $a,b\neq z$ and $S\subseteq\Omega\setminus\{a,b,z\}$, a path between $a$ and $b$ that passes through the sink $z$ is blocked there: $z$ is an unconditioned collider and has no descendants. Hence $S$ d-separates $a$ and $b$ in $D$ if and only if it does so in $D-z$. Marginalizing $z$ out of $P|\Omega$ leaves all conditional-independence relations among the remaining variables unchanged. Consistency of $D$ therefore implies $D-z\in\mathcal M(\Omega\setminus\{z\})$, so $\mathcal M(\Omega\setminus\{z\})\neq\emptyset$.
	
	(4) Since $z\notin\Omega$ is attached with no outgoing links, no directed path in $\tilde{G}$ can pass \emph{through} $z$: it can only terminate there. Thus any directed path from $x$ either avoids $z$ entirely, staying within $G$, or ends at $z$ with its penultimate vertex among $z$'s parents $P$. Since $\mathrm{De}_G(x)\subseteq T$ and $P\cap T=\emptyset$, $x$ reaches no member of $P$ within $G$, ruling out the second case; hence $x$ does not reach $z$, and its reachable set among the vertices of $\Omega$ is unchanged from $G$. Thus $\mathrm{De}_{\tilde{G}}(x)=\mathrm{De}_G(x)\subseteq T$.
\end{proof}

\begin{lemma}[Sink deletion]\label{lem:min-sink}
	Let $\Omega$ be positive. If some $D\in\mathcal{M}(\Omega)$ has a sink $z\notin\{x,y\}$, then $\Omega\setminus\{z\}$ is positive. 
\end{lemma}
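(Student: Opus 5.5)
The plan is to verify the two parts of positivity for $\Omega\setminus\{z\}$ directly: nonemptiness of $\mathcal{M}(\Omega\setminus\{z\})$, and the requirement that every model in it has $x\Rightarrow y$. The idea is to lift any model on the smaller set back to $\Omega$ by re-attaching $z$ as a sink, and then use positivity of $\Omega$.

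\emph{Nonemptiness.} This is Lemma~\ref{lem:attach}(3) applied to $D\in\mathcal{M}(\Omega)$ with sink $z$. It gives $D-z\in\mathcal{M}(\Omega\setminus\{z\})$, so this set is nonempty.

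\emph{Every model has $x\Rightarrow y$.} Fix an arbitrary $Q\in\mathcal{M}(\Omega\setminus\{z\})$.
\begin{enumerate}[nosep]
	\item Since $Q$ and $D-z$ are both consistent with $P|(\Omega\setminus\{z\})$, we have $Q\simeq D-z$.
	\item Form $\tilde Q$ from $Q$ by attaching $z$ as a sink with parent set $C_D(z)$. Because $z$ receives only incoming links, $\tilde Q$ is still a DAG.
	\item By Lemma~\ref{lem:attach}(1), $\tilde Q\simeq D$.
\end{enumerate}

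\emph{From $\tilde Q\simeq D$ to consistency.} I need $\tilde Q\in\mathcal{M}(\Omega)$, not just equivalence to $D$ in the adjacency-and-collider sense. By \citet{verma_equivalence_1990}, the fact that $\tilde Q$ and $D$ share adjacencies and colliders (which is what the proof of Lemma~\ref{lem:attach}(1) establishes) implies they have identical d-separation relations. Since $D$ is consistent with $P|\Omega$, so is $\tilde Q$, and hence $\tilde Q\in\mathcal{M}(\Omega)$.

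\emph{Transferring the path to $Q$.} Positivity of $\Omega$ now yields a directed path $x\Rightarrow_{\tilde Q} y$. No directed path in $\tilde Q$ can pass through $z$ as an interior vertex, because $z$ has no children. It also cannot end at $z$, because the path ends at $y$ and $y\neq z$. Nor can it start at $z$, because $x\neq z$. So the whole path lies in $\tilde Q-z=Q$, giving $x\Rightarrow_Q y$. Since $Q$ was arbitrary, $\Omega\setminus\{z\}$ is positive.

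\emph{Main obstacle.} The only delicate step is the passage from ``$\tilde Q\simeq D$'' to ``$\tilde Q\in\mathcal{M}(\Omega)$'', because the paper's notation $\simeq$ is phrased via membership in a common $\mathcal{M}(\cdot)$. I would handle it as above: appeal to the shared-skeleton, shared-collider characterization underlying Lemma~\ref{lem:attach}(1), and then invoke the Verma--Pearl theorem to obtain identical d-separations.
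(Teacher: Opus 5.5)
Your proposal is correct and matches the paper's proof step for step: Lemma~\ref{lem:attach}(3) gives nonemptiness, Lemma~\ref{lem:attach}(1) lifts each $Q$ to $\tilde Q\in\mathcal{M}(\Omega)$ by re-attaching $z$ with parent set $C_D(z)$, and the directed $x$--$y$ path in $\tilde Q$ cannot use the childless $z$, so it lies in $Q$. Your explicit Verma--Pearl justification that $\tilde Q\simeq D$ implies $\tilde Q\in\mathcal{M}(\Omega)$ spells out a step the paper leaves implicit.
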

\begin{proof}
	By part 3 of Lemma~\ref{lem:attach}, $\mathcal{M}(\Omega\setminus\{z\})$ is nonempty and consists exactly of the DAGs $Q\simeq D-z$. For any such $Q$, part 1 of Lemma~\ref{lem:attach} yields $\tilde{Q}\simeq D$, so $\tilde{Q}\in\mathcal{M}(\Omega)$ and, by positivity, $x\Rightarrow_{\tilde{Q}}y$. The directed path avoids the sink $z$ and therefore lies in $Q$. Since $Q$ was arbitrary, $\Omega\setminus\{z\}$ is positive.
\end{proof}

\begin{corollary}\label{cor:min-sink}
    If $\Omega^*$ is minimal, then in every $M \in \mathcal{M}(\Omega^*)$: $y$ is the unique sink, and every link to $y$ is compelled.
\end{corollary}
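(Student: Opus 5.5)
Both claims will follow from Lemma~\ref{lem:min-sink}, applied contrapositively, together with the fact that minimality forbids dropping any variable.

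\emph{Unique sink.} Fix a minimal $\Omega^*$ and any $M\in\mathcal{M}(\Omega^*)$. Suppose $M$ has a sink $z\notin\{x,y\}$. Lemma~\ref{lem:min-sink} then makes $\Omega^*\setminus\{z\}$ positive. This positive set is strictly smaller than $\Omega^*$, contradicting minimality. Hence every sink of $M$ lies in $\{x,y\}$.

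Since $M$ is a finite DAG, it has at least one sink. Positivity gives $x\Rightarrow_M y$, so $x$ has a child and is not a sink. Therefore $y$ is a sink, and it is the unique one.

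\emph{Links into $y$ are compelled.} Let $a\to_M y$ for some $M\in\mathcal{M}(\Omega^*)$. Suppose the link is not compelled. Then some $M'\in\mathcal{M}(\Omega^*)$ contains $y\to_{M'} a$, because all consistent models share the skeleton (Lemma~\ref{lem:membership}/IC). In $M'$, $y$ has a child and so is not a sink. This contradicts the first part applied to $M'$. Hence $a\to y$ holds in every consistent model, which is exactly what it means for the link to be oriented in $K(\Omega^*)$.

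The only delicate point is that the first part must hold for \emph{every} consistent model, not just one. This is guaranteed because Lemma~\ref{lem:min-sink} needs only that \emph{some} $D\in\mathcal{M}(\Omega)$ has the extra sink, and minimality is a property of the set $\Omega^*$, not of a chosen model. No other obstacle is expected.
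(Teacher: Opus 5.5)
Your proof is correct and matches the paper's intended argument. The paper states the corollary without proof as an immediate consequence of Lemma~\ref{lem:min-sink}: minimality rules out sinks outside $\{x,y\}$, positivity rules out $x$, and since $y$ is then a sink in every consistent model, every link at $y$ points into $y$ in all of them. You also rightly read ``minimal'' as a smallest positive set, which the corollary requires because Lemma~\ref{lem:min-sink} presupposes positivity.
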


\subsubsection{Removing proxies}\label{app:min}

\begin{lemma}[One-child peel]\label{lem:peel}
	Suppose $\Omega^*$ is minimal and not direct, and fix some $D\in\mathcal{M}(\Omega^*)$. Then there exists a strict descendant $v$ of $x$ with $v\neq y$ and 
	\begin{equation}\label{eq:onechild}
		C_D^{-1}(v)=\{y\}.
	\end{equation}
	Define
	\begin{equation}\label{eq:ABC}
		A \equiv C_D(v),
		\qquad 
		B \equiv C_D(y)\setminus\{v\},
		\qquad
		A_0 \equiv A\setminus B,
		\qquad 
		B_0 \equiv B\setminus A,
	\end{equation}
	and consider DAG $H$ over $\Omega^* \setminus\{v\}$ given by
	\begin{equation}\label{eq:H}
		H \equiv D-v+\{\forall a\in A_0: a\to y\}.
	\end{equation}
	Then $H\in\mathcal{M}(\Omega^* \setminus\{v\})$.
\end{lemma}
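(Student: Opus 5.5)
The plan has two parts: first locate $v$, then identify $H$ as the latent projection of $D$ after marginalizing $v$.

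\emph{Existence of $v$.} Since $\Omega^*$ is positive, $x\Rightarrow_D y$. Since $\Omega^*$ is not direct, $x$ and $y$ are not adjacent in $D$. So every directed path $x\Rightarrow_D y$ has an interior vertex, and the set $T\equiv \mathrm{De}_D(x)\setminus\{x,y\}$ is nonempty. Fix a topological order of $D$ and let $v$ be the last element of $T$ in this order. By Corollary~\ref{cor:min-sink}, $y$ is the unique sink of $D$, so $v$ has at least one child. Every child $w$ of $v$ is a strict descendant of $x$ that comes later in the order. By the choice of $v$, no such $w$ can lie in $T$, and $w\neq x$ by acyclicity. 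Hence $w=y$, which gives \eqref{eq:onechild}.

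\emph{Consistency of $H$.} I would treat $D$ as a stand-in ``true model'' for the data $P|\Omega^*$. Because $D\in\mathcal M(\Omega^*)$, d-separation in $D$ coincides with conditional independence in $P|\Omega^*$. Marginalizing $v$ does not change conditional independencies among the remaining variables. Therefore the latent projection $\mathrm{proj}(D,\Omega^*\setminus\{v\})$ faithfully represents $P|(\Omega^*\setminus\{v\})$; this is the argument of Lemma~\ref{lem:membership} applied with $D$ in place of $M_t$. It then suffices to show that this projection equals $H$.

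\emph{Adjacencies of the projection.} I would use the inducing-path criterion of Definition~\ref{def:proj} with the single latent vertex $v$, whose neighbours are exactly $A\cup\{y\}$.
\begin{itemize}[nosep]
	\item Pairs adjacent in $D-v$ stay adjacent.
	\item A new adjacency must come from an inducing path through $v$ of length two, since $v$ is the only latent vertex.
	\item Such a path is either $a\to v\to y$ with $a\in A$, where $v$ is a non-collider outside the observed set, so $a$ and $y$ become adjacent; or $a\to v\leftarrow a'$ with $a,a'\in A$.
	\item In the second case $v$ is a collider occurrence, and $v$ is not an ancestor of $a$ or $a'$: its only descendant is $y$, and $y$ is a sink, so $y\notin\{a,a'\}$. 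This path is therefore not inducing.
\end{itemize}
Hence the skeleton of the projection is that of $D-v$ plus the links $a-y$ for $a\in A$. For $a\in A\cap B$ the link $a-y$ is already present in $D-v$, so the only genuinely new links are those with $a\in A_0$. This matches \eqref{eq:H}.

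\emph{Orientations of the projection.} Every link inherited from $D-v$ keeps the ancestral relation it had in $D$, so it is directed as in $D$. Each new link has $a\to_D v\to_D y$, so $a\Rightarrow_D y$ and the projection orients it $a\to y$. No bidirected edges arise. The projection is therefore the directed graph $H$. It is acyclic because every added link points into the sink $y$. It is a DAG that is Markov-equivalent to (indeed, equal to) a faithful representation of the marginal data, so $H\in\mathcal M(\Omega^*\setminus\{v\})$.

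\emph{Expected main obstacle.} The delicate step is making rigorous the transfer ``projection of a consistent DAG onto a subset faithfully represents the marginal'' when $D$ rather than $M_t$ plays the role of the generating graph. This should only need the graphical, m-separation form of Lemma~\ref{lem:membership}. As a fallback, I would prove directly that for $a,b\neq v$ and $S\subseteq\Omega^*\setminus\{a,b,v\}$, $S$ d-separates $a$ and $b$ in $D$ if and only if it does so in $H$. The argument would replace each segment $a\to v\to y$ of a walk by the link $a\to y$, and argue that $a\to v\leftarrow a'$ segments are always blocked, since $v$ lies outside $S$ and its only descendant $y$ matters only when $y\in S$. That last case requires checking that conditioning on $y$ opens a matching collider path $a\to y\leftarrow a'$ in $H$. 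This holds by the added links when $a,a'$ are both in $A_0$, and via $B$ when one of them lies in $A\cap B$.
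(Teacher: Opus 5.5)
Your proposal is correct and follows essentially the same route as the paper: choose $v$ maximal among the strict descendants of $x$, using that $y$ is the unique sink by Corollary~\ref{cor:min-sink}; check via inducing paths relative to $\{v\}$ that $\mathrm{proj}(D,\Omega^*\setminus\{v\})=H$; and invoke Lemma~\ref{lem:membership} with $D$ in the role of the generating DAG. One justification needs tightening: inducing paths through $v$ have length two not merely because $v$ is the only latent vertex (observed interior colliders are allowed), but because each neighbour of $v$ on the path is either a parent $a$, whose tail on $a\to v$ rules it out as an interior collider, or the child $y$, which as a sink is an ancestor of no endpoint and so cannot be an interior collider either.
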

\begin{proof}
	Positivity and non-directness imply that some directed $x$--$y$ path in $D$ has an interior vertex, which is a strict descendant of $x$ other than $y$. Let $V$ be the set of strict descendants of $x$ that are also ancestors of $y$ in $D$. Choose $v$ maximal in $V$; i.e., choose $v$ such that there is no $w\in V\setminus\{v\}$ with $v\Rightarrow_D w$. Every child of $v$ is again a strict descendant of $x$; a child other than $y$ would contradict maximality, and $v$ has at least one child because $y$ is the unique sink (Corollary \ref{cor:min-sink}). This gives \eqref{eq:onechild}.
	
	We now show that removing $v$ produces exactly the graph $H$ of \eqref{eq:H}: the directed links $a\to y$ for $a\in A_0$, and nothing else---in particular, no bidirected links.
	
	Two distinct things could in principle go wrong when removing a vertex: (1) if $v$ had two or more children, removing $v$ would force a bidirected link between them (but $v$ has only one child, so this is excluded); and (2) removing $v$ might create a new adjacency between two of $v$'s parents, say $a_1,a_2\in A$: we need to ensure this does not happen. 
	
	To verify no unexpected links in general, consider any inducing path relative to $\{v\}$. It either avoids $v$ entirely---and is then already a link of $D-v$---or passes through $v$ as an interior vertex, in which case every other interior vertex on it must be a collider occurrence that is an ancestor of one of its endpoints, which are its own parents. We show that the second case can only produce the directed path $a\to v\to y$, which becomes the projected link $a\to y$.
	
	\begin{description}[noitemsep]
		\item[Case (i):] \emph{$v$ occurs as a non-collider on the path.} Its two incident links are then $a\to v\to y$ for some $a\in A$, so $y$ lies on the path and is an endpoint. Nor can the path extend beyond $a$: the link $a\to v$ has a tail at $a$, so an interior $a$ would be a non-collider occurrence distinct from $v$, which is impossible by definition of inducing path. Thus the path is exactly $a\to v\to y$---precisely the link the construction \eqref{eq:H} adds.
		
		\item[Case (ii):] \emph{$v$ occurs as a collider on the path.} This is the scenario that would connect two parents of $v$ to each other. If two parents of $v$ get adjacent, it means there is an inducing path from one to another through $v$. Because $v$ is a collider, it has to be an ancestor of one of the endpoints (own parents), which contradicts acyclicity. 
	\end{description}
	
	Hence the projection creates no bidirected links and exactly the adjacencies $a$--$y$ for $a\in A$, oriented $a\to y$ since $a\Rightarrow_D y$; the links not already present are those indexed by $A_0$. So $\mathrm{proj}(D,\Omega^* \setminus\{v\})=H$, an ordinary DAG, and by Lemma~\ref{lem:membership}  we have $H\in\mathcal{M}(\Omega^* \setminus\{v\})$.
\end{proof}

\begin{lemma}[Descendant-cut splice]\label{lem:splice}
	Consider distinct $H,Q \in \mathcal{M}(\Omega)$ for some $\Omega$. Fix $x \in \Omega$ and denote $S \equiv \mathrm{De}_Q(x)$. Form $\bar{Q}$ by taking orientations from $Q$ on every link with at least one endpoint in $S$, and orientations from $H$ on every link with both endpoints outside $S$.
	Then $\bar{Q}\simeq H$ and $\mathrm{De}_{\bar{Q}}(x)\subseteq S$.
\end{lemma}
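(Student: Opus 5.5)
The plan is to verify the three ingredients of observational equivalence from \citet{verma_equivalence_1990}: $\bar{Q}$ is acyclic, it has the same skeleton as $H$, and it has the same (unshielded) colliders as $H$. The descendant bound should then follow from the same structural observation used for acyclicity. The skeleton is immediate. $H\simeq Q$ means $H$ and $Q$ share a skeleton, and $\bar{Q}$ only re-orients links of that common skeleton, so it has exactly the same adjacencies.

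The key observation is that \emph{no link of $\bar{Q}$ leaves $S$}. Take a link between $s\in S$ and $u\notin S$. It has an endpoint in $S$, so $\bar{Q}$ orients it as in $Q$. If it were $s\to_Q u$, then $u$ would be a descendant of $x$ in $Q$, i.e.\ $u\in S$, which is a contradiction. Hence $u\to_{\bar{Q}} s$. I will use this fact three times.

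First, acyclicity. Consider a directed cycle in $\bar{Q}$. If it lies entirely outside $S$, all its links carry $H$-orientations, so it is a cycle of $H$, which is impossible. If it touches $S$, it can never leave $S$ once inside, so it lies entirely in $S$. All its links then carry $Q$-orientations, so it is a cycle of $Q$, again impossible. Second, the descendant bound. Since $x\in S$ and every directed path starting in $S$ stays in $S$, we get $\mathrm{De}_{\bar{Q}}(x)\subseteq S$.

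Third, colliders, which is the only step needing care; even so, I expect it to be short. Take an unshielded triple $a-b-c$. If both of its links receive orientations from the same graph, then its collider status in $\bar{Q}$ equals its status in that graph. Because $Q$ and $H$ have identical collider sets, this status agrees with $H$. Now take the mixed case: one link is oriented from $Q$ and the other from $H$. The $H$-oriented link has both endpoints outside $S$, so $b\notin S$. The $Q$-oriented link must therefore touch $S$ at its other endpoint, say $a\in S$. By the key observation it is $b\to a$ in both $Q$ and $\bar{Q}$. So $b$ carries a tail on that link, and the triple is a non-collider in $\bar{Q}$. It is also a non-collider in $Q$, since $b\to_Q a$, and hence a non-collider in $H$ by Markov equivalence of $Q$ and $H$. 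Thus the colliders of $\bar{Q}$ and $H$ coincide, and together with the shared skeleton and acyclicity this gives $\bar{Q}\simeq H$.
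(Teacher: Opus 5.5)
Your proof is correct and follows essentially the same route as the paper: the observation that no link of $\bar{Q}$ leaves $S$ gives both acyclicity and $\mathrm{De}_{\bar{Q}}(x)\subseteq S$, and in the mixed collider case the crossing link puts a tail at the center outside $S$, making the triple a non-collider in $\bar{Q}$, $Q$, and $H$. The only difference is cosmetic: you sort unshielded triples by which graph supplies their orientations, whereas the paper sorts them by where the center lies, and the two splits cover the same cases.
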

\begin{proof}
	Every link between $S^c$ and $S$ is directed into $S$ in $Q$ and $\bar{Q}$ retains these orientations. The subgraphs of $\bar{Q}$ induced on $S$ and on $S^c$ are subgraphs of $Q$ and $H$ respectively, hence acyclic, and no directed cycle can cross a one-way cut, so $\bar{Q}$ is a DAG. Since no link leaves $S$, a directed path from $x$ cannot exit it, proving $\mathrm{De}_{\bar{Q}}(x)\subseteq S$.
	
	Next, every unshielded triple keeps its status: A triple centered in $S$ uses only links meeting $S$, all $Q$-oriented, so its status is as in $Q$ and hence as in $H$. A triple centered in $S^c$ with both neighbours in $S^c$ uses $H$-orientations. Finally, a triple centered at $c\in S^c$ with a neighbour in $S$ has its crossing link oriented out of $c$, a tail at the center, making it a non-collider in $\bar{Q}$; the same link carries the same tail in $Q$, so the triple is a non-collider in $Q$ and therefore in $H$ as well.
\end{proof}

\begin{lemma}[Peel trace]\label{lem:trace}
	In the setting of Lemma~\ref{lem:peel}, suppose that $\Omega^* \setminus\{v\}$ is not positive. Then
	\begin{equation}\label{eq:xinA0}
		x\in A_0
		\qquad\text{and}\qquad
		B_0 \neq\emptyset.
	\end{equation}
	Moreover, for every $c\in B_0 $, the graph $D$ contains the chordless four-cycle
	\begin{equation}\label{eq:fourcycle}
		x-v-y-c-x,
	\end{equation}
	whose unique collider is $v\to_D y\leftarrow_D c$.
\end{lemma}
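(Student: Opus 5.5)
The plan is to argue by contraposition: if the peel fails, we reconstruct a consistent model on all of $\Omega^*$ in which $x\not\Rightarrow y$, which contradicts positivity of $\Omega^*$. The only way this reconstruction can be blocked is the configuration in \eqref{eq:xinA0}--\eqref{eq:fourcycle}.

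Set $\Omega'\equiv\Omega^*\setminus\{v\}$. By Lemma~\ref{lem:peel}, $H\in\mathcal M(\Omega')$, so $\mathcal M(\Omega')\neq\emptyset$. Non-positivity then gives some $Q\in\mathcal M(\Omega')$ with $x\not\Rightarrow_Q y$. Apply Lemma~\ref{lem:splice} to $H,Q$ with $S\equiv\mathrm{De}_Q(x)$. This yields $\bar Q\simeq H$ with $\mathrm{De}_{\bar Q}(x)\subseteq S$ and $y\notin S$, and every link crossing the cut between $S$ and $S^c$ points into $S$.

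Next we try to reinsert $v$. The intended construction $D'$ is as follows:
\begin{itemize}
\item take $\bar Q$;
\item delete the links $a\to y$ for $a\in A_0$;
\item add $v$ with parent set $A$ and single child $y$.
\end{itemize}
This mirrors Lemma~\ref{lem:attach}, except that $v$ is not a sink. We check $D'\simeq D$ by comparing skeletons and unshielded colliders, as in the proofs of Lemmas~\ref{lem:attach} and \ref{lem:splice}. The skeleton is that of $D$ by \eqref{eq:H}. The collider check requires two things. First, the links $a$--$y$ for $a\in A_0$ must be oriented into $y$ in $\bar Q$. Second, $v$'s parents must lie where $v$ can sit without being a descendant of $x$.

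Whenever $D'$ is a legitimate DAG with $D'\simeq D$, it lies in $\mathcal M(\Omega^*)$. The descendant confinement argument of Lemma~\ref{lem:attach}(4), applied with $T=S$, then shows $x\not\Rightarrow_{D'}y$, which contradicts positivity. So the real content is identifying when this fails. There are two failure modes. In the first, some parent of $v$ lies in $S$, so $v$ is forced into $\mathrm{De}(x)$ and hence $y$ is too. Because crossing links point into $S$, and the links $a$--$y$ with $a\in A_0$ are adjacencies of $\bar Q$ with $y\notin S$, this can only happen when $x$ itself is a parent of $v$ not adjacent to $y$, that is, $x\in A_0$. In the second, the re-attachment changes a collider at $y$. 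Collider status at $y$ can only change for triples $a$--$y$--$c$ with $c$ a parent of $y$ nonadjacent to $v$, that is, $c\in B_0$. If $B_0=\emptyset$, every parent of $y$ is also a parent of $v$, so the status of triples at $y$ is unchanged and $D'\simeq D$. Together these give \eqref{eq:xinA0}.

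Finally, fix $c\in B_0$ and verify \eqref{eq:fourcycle}. We have $x\to_D v$ since $x\in A_0\subseteq A$, and $v\to_D y\leftarrow_D c$. The pair $x,y$ is nonadjacent since $\Omega^*$ is not direct. The pair $v,c$ is nonadjacent, since $c\notin A$ and $y$ is $v$'s only child. The remaining point is that $x$ and $c$ are adjacent in $D$. I expect this to be the main obstacle. My first attempt would be to show that if $x,c$ were nonadjacent, then $c$ could be kept on the $S^c$ side in the splice, and the triple $v\to y\leftarrow c$ could be reproduced in $D'$ without routing through $x$. That would rebuild a consistent model without $x\Rightarrow y$, so the argument would again contradict non-positivity. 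Given the adjacency, the collider count is immediate. The triple $v$--$x$--$c$ has a tail at $x$ on $x\to v$, and the triple $y$--$c$--$x$ has a tail at $c$ on $c\to y$, so both are non-colliders. The triple $x$--$v$--$y$ is a chain. Hence $v\to y\leftarrow c$ is the unique collider, and the four-cycle is chordless.
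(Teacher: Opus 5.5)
Your reconstruction idea is the right tool for proving $x\in A_0$; it is essentially part (a) of the paper's proof. But the step ``together these give \eqref{eq:xinA0}'' does not follow. Showing that $D'$ works unless a parent of $v$ lies in $S$ or some $c\in B_0$ exists gives only the disjunction ``$x\in A_0$ or $B_0\neq\emptyset$''. Worse, your construction can never exclude $B_0=\emptyset$. Once $x\in A_0\subseteq A$, $D'$ contains $x\to v\to y$. So even if $D'$ is a DAG with $D'\simeq D$, it satisfies $x\Rightarrow_{D'}y$ and contradicts nothing. Your second failure mode is also misplaced. By the argument supplied below, when $x\notin A_0$ every neighbour of $y$ lies outside $S$. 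Then $y$ is a sink of $\bar Q$ with parents $A_0\cup B$, and it receives exactly the parents $\{v\}\cup B$ in $D'$, whatever $B_0$ is.

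So $B_0\neq\emptyset$ needs a separate argument, which is part (b) of the paper's proof. If $B\subseteq A$, then $C_H(y)=A$, and $H$ is exactly $D-y$ with $y$ renamed $v$. Take any $Q'\in\mathcal M(\Omega^*\setminus\{v\})$ with $x\not\Rightarrow_{Q'}y$ in which $x$ reaches no member of $B$. Rename $y$ as $v$, then attach $y$ as a sink with parents $\{v\}\cup B$ via Lemma~\ref{lem:attach}. This gives a consistent model on $\Omega^*$ without $x\Rightarrow y$, contradicting positivity. Such a $Q'$ does exist. The $x$--$y$ link is reversible in the CPDAG. In its chordal chain component $T$, the set $(B\cap T)\cup\{x,y\}$ is a clique that can be placed first in the ordering ($B\cap T$, then $y$, then $x$). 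Compelled links into $T$ then prevent $x$ from reaching $B\setminus T$.

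Two further steps are asserted rather than proved. First, ``crossing links point into $S$'' does not show that a parent $a$ of $v$ lying in $S$ forces $x\in A_0$; a link $y\to_{\bar Q}a$ is perfectly compatible with the cut. The missing argument walks back along a directed path $x=s_0\to_Q\cdots\to_Q s_m=a$. Because $y$ is a sink of $H$, an unshielded triple $y\to_Q s_i\leftarrow_Q s_{i-1}$ would be a collider in $Q$ but not in $H$. Hence $y$ is adjacent to every $s_{i-1}$, necessarily with $y\to_Q s_{i-1}$. This ends with $x$ adjacent to $y$ in $H$, which by non-directness means $x\in A_0$.

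Second, the $x$--$c$ adjacency you flag as the main obstacle has a short direct proof, and your suggested route through $D'$ fails for the reason above. Since $x\in A_0$ and $c\in B$, $H$ contains $x\to_H y\leftarrow_H c$. Since $x,y$ are adjacent in $Q$ and $x\not\Rightarrow_Q y$, we have $y\to_Q x$. If $x$ and $c$ were nonadjacent, this triple would be an unshielded collider in $H$ but not in $Q$, contradicting $Q\simeq H$. As $H$ differs from $D-v$ only by links into $y$, the $x$--$c$ link already lies in $D$. The rest of your four-cycle check then goes through.
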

\begin{proof}
	Because $\Omega^* \setminus\{v\}$ is not positive, there exists $Q\in\mathcal{M}(\Omega^* \setminus\{v\})$ such that $x\not\Rightarrow_Q y$. Let $S$ be the set of vertices reachable from $x$ in $Q$, including $x$. Thus $y\notin S$.
	
	The proof has three parts. We first show that removing $v$ must create
	a new $x$--$y$ link. We then rule out the saturated case $B_0 =\emptyset$. Finally, every $c\in B_0 $ yields the four-cycle.

	\paragraph{(a) Removing $v$ creates the $x$--$y$ link.}
	Suppose, toward a contradiction, that $x\notin A_0$.
	
	\textbf{Step a1:} We first claim that no vertex reachable from $x$ in $Q$ is adjacent to $y$ in $H$.
	Suppose otherwise that some $u\in S$ is adjacent to $y$. Because $Q\simeq H$, the same link is present in $Q$. Since $y\notin S$, it must be oriented $y\to_Q u$. Choose a directed path $x=s_0\to_Q s_1\to_Q\cdots\to_Q s_m=u$. We work backwards along this path. Suppose that $y\to_Q s_i$. If $y$ were not adjacent to $s_{i-1}$, then $y\to_Q s_i\leftarrow_Q s_{i-1}$ would be an unshielded collider in $Q$. In $H$, however, $y$ is a sink, so the $y$--$s_i$ link points from $s_i$ into $y$. The same unshielded triple would therefore be a non-collider in $H$, contradicting $Q\simeq H$. Thus $y$ is adjacent to $s_{i-1}$. Moreover, the link must be oriented $y\to_Q s_{i-1}$, since the reverse orientation would make $y$ reachable from $x$. Continuing backwards, we conclude that $x$ and $y$ are adjacent in $H$. They are not adjacent in $D$, because $\Omega^* $ is non-direct. The only new links created by removing $v$ are the links $a\to_H y$ for $a\in A_0$. Hence the presence of the $x$--$y$ link in $H$ implies $x\in A_0$. Hence, no such $u\in S$ exists.

	\textbf{Step a2:} We now use this claim to lift $Q$ back to $\Omega^* $. Since $Q\simeq H$, no member of $S$ is adjacent to $y$ in $H$. Since $Q\simeq H$, Markov equivalence is preserved when $y$ is deleted, and therefore $Q-y\simeq H-y=D-\{v,y\}$. Starting from $Q-y$, attach $v$ as a sink with parent set $A$. Then attach $y$ as a sink with parent set $\{v\}\cup B$. By two applications of Lemma~\ref{lem:attach}, the resulting graph $\tilde{M}$ satisfies $\tilde{M}\simeq D\Rightarrow \tilde{M}\in\mathcal{M}(\Omega^* )$.
	
	Every member of $A\cup B$ is adjacent to $y$ in $H$, since $A\cup B=A_0\cup B$. Step a1 therefore implies $S\cap(A\cup B)=\emptyset$. Deleting $y$ from $Q$ cannot enlarge the set reachable from $x$. When $v$ is attached, all its parents lie outside $S$, so $v$ does not become reachable from $x$. When $y$ is subsequently attached, all its parents, namely $\{v\}\cup B$, also lie outside the set reachable from $x$. By part 4 of Lemma~\ref{lem:attach},
	$x\not\Rightarrow_{\tilde{M}}y$. This contradicts the positivity of $\Omega^* $, implying $x\in A_0$.

	\paragraph{(b) The saturated case is impossible.}
	Suppose, toward a contradiction, that $B_0 =\emptyset$. We first show that every bad model must let $x$ reach a member of $B$, and then construct a bad model in which this does not happen.
	
	Because $B_0 =B\setminus A=\emptyset$, we have $B\subseteq A$, and hence $C_H(y)=A_0\cup B=A$.
	Moreover, $y$ has no children in $H$, while $v$ has no children in $D-y$. Therefore, after relabeling $y$ as $v$, the graph $H$ is identical to $D-y$. In other words, once $v$ is removed, $y$ occupies exactly the position previously occupied by $v$. We call this the \emph{twin property}.
	
	\textbf{Step b1:} Call a graph $Q'\in\mathcal{M}(\Omega^* \setminus\{v\})$ \emph{bad} if $x\not\Rightarrow_{Q'}y$. We claim that every bad graph lets $x$ reach at least one member of $B$.
	
	Suppose instead that some bad graph $Q'$ satisfies $\operatorname{De}_{Q'}(x)\cap B=\emptyset$. Relabel $y$ as $v$ in $Q'$, and call the resulting graph $J$. By the twin property, $J\simeq D-y$. Because $Q'$ is bad, $x$ does not reach $v$ in $J$. By assumption, it also reaches no member of $B$. Thus $\operatorname{De}_J(x)\cap(\{v\}\cup B)=\emptyset$. Attach $y$ to $J$ as a sink with parent set $\{v\}\cup B$, exactly as in $D$. By Lemma~\ref{lem:attach}, the resulting graph $\tilde{M}\simeq D$. Hence $\tilde{M}\in\mathcal{M}(\Omega^* )$. But none of the parents of $y$ is reachable from $x$, and therefore $x\not\Rightarrow_{\tilde{M}}y$, contradicting the positivity of $\Omega^* $. We have proved that ``every bad graph lets $x$ reach at least one member of $B$.''
	
	\textbf{Step b2:} We next show that the equivalence class nevertheless contains a bad graph in which $x$ reaches no member of $B$.
	Since $x\in A_0$ by part (a), the graph $H$ contains $x\to_H y$. A bad graph exists, so the direction of the $x$--$y$ link must be undirected in the CPDAG $K(\Omega^* \setminus\{v\})$. Let $T$ be the reversible block containing $x$ and $y$, namely the set of vertices connected to them by chains of undirected CPDAG links. Define $B_T \equiv B\cap T$ and $R \equiv \{x,y\}\cup B_T$. The members of $R$ are pairwise adjacent: the links $x$--$y$ and $b$--$y$, for $b\in B_T$, are already present. If some $b\in B_T$ were not adjacent to $x$, then $x\to_H y\leftarrow_H b$ would be an unshielded collider, which is impossible. Similarly, if two members $b,b'\in B_T$ were not adjacent, then $b\to_H y\leftarrow_H b'$ would be an unshielded collider, impossible for the same reversible block. 
	
	We now use a standard ordering property of CPDAGs. Within a reversible block, any pairwise-adjacent set of vertices can be placed first in a causal ordering, in any prescribed order, while remaining in the same Markov equivalence class.\footnote{Equivalently, CPDAG chain components are chordal, and any clique can be placed last in a perfect elimination ordering. Orienting the links in the reverse order creates neither a directed cycle nor a new unshielded collider; see \citet{AnderssonMadiganPerlman1997}.}
	Apply this property to $R$. Choose a member $Q^*$ of the equivalence class whose ordering within $T$ begins with $B_T\succ y\succ x$, followed by the remaining vertices of $T$. Orient every undirected link within $T$ from the earlier to the later vertex in this ordering. It follows that $x$ cannot reach either $y$ or any member of $B_T$ along a directed path contained in $T$. Nor can such a path leave $T$ and later return. The directions between reversible blocks are fixed and form an acyclic ordering of those blocks. Therefore $x\not\Rightarrow_{Q^*}y$ and $\operatorname{De}_{Q^*}(x)\cap B_T=\emptyset$.
	
	Finally, take $b\in B\setminus B_T$. The link $b\to y$ runs from the reversible block containing $b$ into $T$ and is compelled. If $x$ could reach $b$, there would be a directed path from $T$ to $b$'s block, followed by the compelled link from that block back into $T$.
	This would create a directed cycle among the reversible blocks. Therefore $x$ cannot reach any member of $B\setminus B_T$ either. We have constructed a bad graph $Q^*$ satisfying $\operatorname{De}_{Q^*}(x)\cap B=\emptyset$, so we conclude that $B_0 \neq\emptyset$.

	\paragraph{(c) The four-cycle.} Fix $c\in B_0 =B\setminus A$. Since $x\in A_0$ and $c\in B$, the graph $H$ contains
	$x\to_H y\leftarrow_H c$. Return to the bad graph $Q$ fixed at the beginning of the proof. Because $Q$ and $H$ have the same skeleton, $x$ and $y$ are adjacent in $Q$, hence $y\to_Q x$.
	
	It follows that $x$ and $c$ must be adjacent. Otherwise $x\to_H y\leftarrow_H c$ would be an unshielded collider in $H$, contradicting $y\to_Q x$. Because removing $v$ creates only links incident to $y$, the $x$--$c$ adjacency is already present in $D$.
	
	Finally, $x\to_D v\to_D y\leftarrow_D c$. There is no $x$--$y$ link because $\Omega^* $ is non-direct. There is also no $v$--$c$ link: the orientation $c\to_D v$ would imply $c\in A$, contrary to $c\in B_0 $, while $v\to_D c$ would contradict the choice of $v$, whose only child is $y$. Thus $x-v-y-c-x$ is chordless with its unique collider being $v\to_D y\leftarrow_D c$.
\end{proof}

\subsubsection{Proof of Theorem~\ref{thm:direct}}\label{app:proof2}

\begin{proof}[Proof of Theorem~\ref{thm:direct}]
	If $x\perp y$, then by Lemma~\ref{lem:consistent_models} no consistent model on any $\Omega_s$ contains $x\Rightarrow_s y$; every nonempty proposal fails and is strictly dominated by silence, so $M_s=\emptyset$ for either receiver type. Assume for the rest of the proof that $x\not\perp y$.
	
	With a na{\"i}ve receiver, Proposition~\ref{prop:pers_naive} implies that among the successful proposals, the minimum-cost proposal discloses only $\Omega_s = \{x,y\}$ and claims $x\to_s y$. The sender's overall optimum is therefore either this proposal with $x,y$ adjacent, or silence.
	
	With a sophisticated receiver, nonempty unsuccessful proposals are strictly dominated by silence. Otherwise, the sophisticated receiver accepts a proposal $M_s=(\Omega_s,C_s)$ with $x\Rightarrow_s y$ if and only if $\Omega_s$ is positive. Conversely, if $\Omega_s$ is positive, the sender may propose \emph{any} member $M_s\in\mathcal{M}(\Omega_s)$: it is consistent and contains a compelled directed path $x \to_s y$, hence it is accepted. The sender's payoff from disclosing $\Omega_s$ is, therefore, $\mathbf{1}\{\Omega_s\text{ positive}\}-\kappa(|\Omega_s|)$. Thus, if any positive disclosure is available, the optimal $\Omega_s$ is minimal. 
	
	To see that the minimal $\Omega_s$ is direct, proceed by contradiction. If $x,y$ are not adjacent in some $M_s \in \mathcal{M}(\Omega_s)$, there exists a proxy variable $v \in \Omega_s$ with $x \Rightarrow_s v \to_s y$, and since $\Omega_s$ is minimal, $\Omega_s \setminus\{v\}$ is not positive. But then Lemma~\ref{lem:trace} implies that $M_s$ contains a chordless four-cycle with exactly one collider occurrence, contradicting the simplicity restriction in Lemma~\ref{lem:one-collider}. 
\end{proof}


\subsection{Proof of Theorem~\ref{thm:debunk_simple}}

Most of the work below is devoted to proving the second part of Theorem~\ref{thm:debunk_simple}. The key step is to show that under the stated conditions, the existence of one non-obvious cause implies the existence of another independent non-obvious cause. The two together can then be used in the same way as in Proposition~\ref{prop:persuade_sophist}, although we follow a different approach to avoid any interference from the variables that may already be known to the receiver.

We next show that if the true model is rich and there is no obvious cause of $y$ given $x$, then there must exist a collider upstream from $x$. To show this, we need only the following result, which is trivial but convenient to have as a lemma.

\begin{lemma} \label{lem:flip_src}
	Let $M=(\Omega,C)$ be a model consistent with $P|\Omega$, $u \in \Omega$ be such that $C(u)=\emptyset$, and $b \in \Omega$ be such that $C(b)=\{u\}$. Let $M'$ be $M$ with the edge $u\to_{M} b$ replaced by $u \leftarrow_{M'} b$. Then $M'$ is consistent with $P|\Omega$.
\end{lemma}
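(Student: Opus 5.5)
The plan is to use the characterization of observational equivalence of \citet{verma_equivalence_1990}. Two DAGs on the same vertex set are Markov equivalent, and hence consistent with the same data, if and only if they share the same skeleton and the same unshielded colliders. Since $M$ is consistent with $P|\Omega$ and the consistency conditions (Markov property and faithfulness) depend on the graph only through its d-separation relations, it suffices to show two things: that $M'$ is a DAG, and that $M'$ has the same skeleton and the same unshielded colliders as $M$. Then $M'$ has exactly the same d-separations as $M$, and consistency transfers.

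The skeleton is unchanged by construction, since only the orientation of the $u$--$b$ edge is flipped. For acyclicity, I would argue that any directed cycle in $M'$ must use the new edge $b\to u$, because $M'$ otherwise agrees with $M$. In $M'$, the only parent of $u$ is $b$: $u$ had no parents in $M$, and only the flipped edge now points into it. The only way to exit $b$ back toward $u$ is through that same edge. A cycle through $b\to u$ would therefore require a directed path $u\Rightarrow b$ in $M'$ that avoids the edge $b\to u$. Its last step into $b$ would need a parent of $b$ in $M'$. However, $C'(b)=C(b)\setminus\{u\}=\emptyset$, so $b$ has no parents in $M'$ and no such path exists.

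For colliders, I would check every unshielded triple whose center is incident to the flipped edge, since all other triples keep their orientations. Triples centered at $u$ are non-colliders in $M$, because $u$ has no parents there. In $M'$, $u$ has exactly one parent, $b$, and a collider needs two arrowheads at the center, so these triples remain non-colliders. Triples centered at $b$ are non-colliders in $M$, because $b$'s only parent is $u$. In $M'$, $b$ has no parents at all, so they remain non-colliders. Triples centered elsewhere are unaffected. Hence the sets of unshielded colliders coincide.

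The main, though mild, obstacle is making sure that the Verma--Pearl equivalence is applied in the direction the paper needs. Consistency is defined as matching d-separation with the data's conditional independencies in both directions. Equal d-separation sets therefore give consistency of $M'$ directly, using the same skeleton, the same colliders, and acyclicity. I would cite the equivalence result, as the paper does after the IC algorithm, instead of reproving it.
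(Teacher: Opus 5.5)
Your proposal is correct and takes the same route as the paper: the flip keeps the skeleton and neither creates nor destroys unshielded colliders, so the Verma--Pearl equivalence transfers consistency from $M$ to $M'$. You also check explicitly that $M'$ is acyclic ($b$ has no parents in $M'$) and that no collider appears at $u$ or $b$, which the paper only asserts.
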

\begin{proof}
	Flipping the link between a source $u$ and its child $b$ that has no other parents does not create or destroy any colliders. Therefore, $M'$ is a DAG with the same skeleton and the same colliders as $M$. Then the results of \citet{verma_equivalence_1990} imply $M'$ is observationally equivalent to $M$.
\end{proof}

\begin{lemma} \label{lem:Vstr_upstream_direct}
	Suppose $M_{t}$ is simple and rich, and $x \Rightarrow_t y$. If there is no obvious cause of $y$ given $x$, then there exists a collider $a \to_{t} b \leftarrow_{t} c$ with $a,c \in \bar C_t(x)$, $a \perp c$, and $b \in \bar C_t(x) \cup \{x\}$.
\end{lemma}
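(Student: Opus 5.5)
Richness makes every link of $M_t$ compelled by the full data, so we argue by contradiction: assume no collider of the stated form exists and exhibit a model in $\mathcal{M}(\Omega_t)$ different from $M_t$. The construction uses Lemma~\ref{lem:flip_src} (reversing a source-to-only-child link), applied repeatedly along a directed path into $x$. Since $x \Rightarrow_t y$, the variable $x$ has a directed path to $y$. First fix the role of the missing obvious cause: every ancestor $z$ of $y$ is correlated with $x$. In particular every ancestor of $x$, and every ancestor of any vertex on an $x$--$y$ path, is correlated with $x$.

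The first step is to pin down the upstream structure of $x$. Let $U = \bar C_t(x)\cup\{x\}$, the self-inclusive ancestor set of $x$. Suppose no collider $a\to_t b\leftarrow_t c$ with $a,c\in\bar C_t(x)$, $a\perp c$ and $b\in U$ exists. Take $b\in U$ with two non-adjacent parents $a,c$, which then lie in $\bar C_t(x)$. If $a\not\perp c$, simplicity forbids the unshielded collider, and Lemma~\ref{lem:tri_ancestors} forces $a,c$ to be adjacent. Otherwise the collider has $a\perp c$ and is of the excluded form. Hence every vertex of $U$ has a parent set that is a clique in $M_t$. Because the induced subgraph on $U$ is ancestrally closed, it is then a DAG in which no vertex is the center of an unshielded collider.

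The second step shows that some link into or out of $U$ is not compelled, contradicting richness. Two routes are available. (i) Directly: choose a source $u$ of $M_t$ in $U$ (one exists because $U$ is ancestral) and walk along a topological order. Where a child $b$ has $C_t(b)=\{u\}$, Lemma~\ref{lem:flip_src} flips $u\to b$ and yields a second consistent model, which contradicts richness. If every child of $u$ has further parents, those parents form cliques with $u$, and we handle the clique by the CPDAG fact already used in the proof of Lemma~\ref{lem:trace}: a clique inside a collider-free ancestral set can be reordered. (ii) More cleanly: observe that in the CPDAG, links among vertices of $U$ can only be oriented by Meek rules tracing back to some V-structure, and show that no V-structure can propagate into $U$. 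A V-structure centered outside $U$ orients links away from $U$ only, because links from $U$ to outside point outward. A V-structure centered at some $d$ downstream, with $x\Rightarrow_t d$ and parent $e\notin U$, needs $e\perp$ (another parent) by simplicity. If $e$ is an ancestor of $y$, then $e$ is correlated with $x$ because there is no obvious cause, and we must argue that this correlation together with simplicity (via Lemma~\ref{lem:tri_ancestors}) forces adjacency with the other parent, killing the V-structure, or else that the orientation never reaches the edges of $U$. Either way the top link in $U$ stays reversible, contradicting richness.

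The main obstacle is the second step, namely controlling how V-structures downstream of $x$ can, through Meek rule 1, orient the link entering $x$ or the link $x\to$(child). This is exactly where the absence of an obvious cause must be used. The idea is that the first collider on the path $x\Rightarrow y$ has a parent $e$ that is an ancestor of $y$, hence correlated with $x$. By simplicity, correlated parents are comparable (Lemma~\ref{lem:tri_ancestors}, second part), so $e\Rightarrow x$, $x\Rightarrow e$, or $e$ shares an ancestor with $x$. Each case either places an unshielded collider upstream of $x$ with independent parents, or makes the downstream collider shielded; in both cases we get either the desired conclusion or a non-V-structure. I would first try to make this case analysis precise on the topmost collider on a directed $x$--$y$ path.
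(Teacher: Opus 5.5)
\paragraph{Where the proposal breaks down.} Your first step is fine and matches the paper's reduction. Parents of members of $U$ lie in $\bar C_t(x)$, and simplicity forces every unshielded collider to have independent parents. So it suffices to find an unshielded collider centered in $U$, and if none exists every vertex of $U$ has a clique parent set. The gap is in the second step.

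\paragraph{The case $\bar C_t(x)=\emptyset$ is missing.} You never treat the case $\bar C_t(x)=\emptyset$, i.e.\ $U=\{x\}$. The conclusion requires ruling this case out, and it is the only place where the absence of an obvious cause is used. Route (i) fails here: with $u=x$, the children of $u$ lie outside $U$, so their parent sets need not be cliques. Route (ii) and your last paragraph leave the case analysis as a plan. The missing argument is short:
\begin{itemize}
\item Let $b$ be a child of $x$ in $\bar C_t(y)\cup\{y\}$ that is earliest in topological order among such children. Any other parent $d$ of $b$ lies in $\bar C_t(y)$.
\item If $d$ is not adjacent to $x$, then $x\to_t b\leftarrow_t d$ is an unshielded collider. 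Simplicity gives $d\perp x$, so $d$ is an obvious cause, a contradiction.
\item If $d$ is adjacent to $x$, then $x\to_t d$ because $x$ is a source. So $d$ is an earlier child of the same kind, again a contradiction.
\end{itemize}
Hence $C_t(b)=\{x\}$, and Lemma~\ref{lem:flip_src} contradicts richness. Note that the relevant vertex is this earliest child, not the ``topmost collider'' on an $x$--$y$ path. The argument in fact shows that $b$ is not a collider center.

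\paragraph{The case $\bar C_t(x)\neq\emptyset$.} Here your fallback to the reordering fact from the proof of Lemma~\ref{lem:trace} is, as cited, circular. That fact is stated for a reversible block of the CPDAG, so it presupposes the non-compelledness you are trying to prove. It is also unnecessary, because the same minimality trick closes this case:
\begin{itemize}
\item Take a source $u\in\bar C_t(x)$, and let $b$ be the child of $u$ in $U$ that is earliest in topological order. Such a child exists because $u\Rightarrow_t x$.
\item Every other parent $d$ of $b$ lies in $U$ and is adjacent to $u$ by the clique property. Since $u$ is a source, $u\to_t d$, so $d$ is an earlier child of $u$ in $U$, contradicting the choice of $b$.
\item So $C_t(b)=\{u\}$, and Lemma~\ref{lem:flip_src} again contradicts richness.
\end{itemize}
In particular, the ``main obstacle'' you identify, Meek-rule propagation from V-structures downstream of $x$, never arises. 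Lemma~\ref{lem:flip_src} exhibits a Markov-equivalent DAG directly, so you never need to control the orientation rules. With these two pieces your argument becomes the paper's proof.
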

\begin{proof}
	Denote $\bar{A}\equiv\bar C_t(x)\cup\{x\}$ and note that the parents of any member of $\bar{A}$ belong to $\bar C_t(x)$. In particular, if a collider is centered in $\bar{A}$, its parents automatically lie in $\bar C_t(x)$, and simplicity of $M_t$ implies that $a \perp c$. It is thus enough to prove the existence of a collider centered in $\bar A$.
	
	\emph{Step 1: $\bar C_t(x)\neq\emptyset$.} Suppose not, so $\bar C_t(x) = \emptyset$ and $x$ is ancestral in $M_{t}$. Let $B_{xy}\equiv C_t^{-1}(x)\cap\bigl( \bar C_t(y) \cup \{y\} \bigr)$ denote the set containing the first vertex of any directed path from $x$ to $y$. Since $x\Rightarrow_{t} y$, $B_{xy}$ is nonempty. Choose $b\in B_{xy}$ such that $B_{xy} \cap C_t(b) = \emptyset$. We claim that $C_t(b)=\{x\}$. Toward a contradiction, suppose there exists $d \in C_t(b) \setminus \{x\}$. Since $b\in\bar C_t(y)\cup\{y\}$ and $d\to_t b$, we have $d\in\bar C_t(y)$. 
	If $d$ and $x$ are nonadjacent, then $d\to_t b\leftarrow_{t} x$ is a collider and the three variables form a V-structure in the data. By simplicity, it has no controls, hence $d \perp x$ in the data---but then $d$ is an obvious cause of $y$ given $x$, contradicting the premise of the lemma. 
	If $d$ and $x$ are adjacent, $C_t(x)=\emptyset$ implies $x \to_t d \to_t b \Rightarrow_t y$, contradicting the definition of $b$. 
	Hence $C_t(b)=\{x\}$, and by Lemma~\ref{lem:flip_src}, reversing $x\to b$ yields a model $M'\neq M_{t}$ consistent with $P$, which contradicts richness. We conclude that $\bar C_t(x)\neq\emptyset$.
	
	\emph{Step 2: the collider exists.} Suppose, by way of contradiction, that no collider is centered in $\bar{A}$. Pick any $a\in\bar C_t(x)$ and follow iteratively the parental relations. Since parents of ancestors of $x$ are ancestors of $x$ and $M_{t}$ is finite and acyclic, the process terminates at some ancestral node $u\in\bar C_t(x)$ with $C_t(u)=\emptyset$. As above, define $B_{ux} \equiv C_t^{-1}(u)\cap\bar{A}$ and choose $b\in B_{ux}$ such that $B_{ux} \cap C_t(b) = \emptyset$. By the same logic, we obtain that $C_t(b)=\{u\}$. Then by Lemma~\ref{lem:flip_src}, the reversal of $u\to b$ yields $M'\neq M_{t}$ consistent with $P$, contradicting richness. We conclude that the collider exists.
\end{proof}

The following two lemmas show that the parents of the collider we identified in Lemma~\ref{lem:Vstr_upstream_direct} will indeed constitute two independent non-obvious causes under the assumptions of the second part of Theorem~\ref{thm:debunk_simple}.

\begin{lemma}[In unconfounded worlds, all ancestors of $x$ are non-obvious causes]\label{lem:unconf_to_noc}
	If $x \Rightarrow_{t} y$ and $\tilde{C}_{t}(x,y)=\emptyset$, then for every $w\in\bar C_t(x)$, $x$ d-separates $w$ and $y$ in $M_{t}$. Consequently, $(w \perp y \mid x)$.
\end{lemma}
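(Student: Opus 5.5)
The plan is to argue by contradiction, using the path-based form of d-separation from Definition~\ref{def:dsep} with $S=\{x\}$. Fix $w\in\bar C_t(x)$ and suppose some path $\pi$ between $w$ and $y$ is active given $\{x\}$. Then every non-collider occurrence on $\pi$ lies outside $\{x\}$, and every collider occurrence on $\pi$ is an ancestor of $x$, possibly $x$ itself. The goal is to extract from $\pi$ a confounder of $(x,y)$, contradicting $\tilde{C}_t(x,y)=\emptyset$. Once d-separation is established, the independence $(w\perp y\mid x)$ follows from the Markov property of $M_t$ for $P$.

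First consider the final segment of $\pi$ ending at $y$. Let $\gamma$ be the last collider occurrence on $\pi$ before $y$, or $\gamma=w$ if $\pi$ has no collider. The segment from $\gamma$ to $y$ contains no colliders, so it has fork shape $\gamma\Leftarrow_t h\Rightarrow_t y$ for some summit $h$, possibly equal to $\gamma$ or to $y$. Its interior vertices are non-colliders, so none of them equals $x$. In both cases $\gamma\in\mathrm{An}_t(x)$: if $\gamma=w$ this is the assumption $w\in\bar C_t(x)$, and otherwise it is activeness. Combining these, $h\Rightarrow_t\gamma\Rightarrow_t x$ (allowing equalities), and $h\Rightarrow_t y$ along a branch that avoids $x$.

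Next split into cases on $h$. If $h=y$, then $y\in\mathrm{An}_t(x)$, which together with $x\Rightarrow_t y$ gives a directed cycle; this is impossible. If $h=x$, then $x$ is an interior non-collider or the endpoint $w$. The first is excluded by activeness, and the second is excluded because $w\neq x$, since $w$ is a strict ancestor of $x$. So $h\notin\{x,y\}$, and we have $h\Rightarrow_t y$ via a path avoiding $x$, together with $h\Rightarrow_t x$.

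The main obstacle is the last step: showing that the directed path $h\Rightarrow_t x$ can be chosen to avoid $y$, so that $h$ is a confounder in the sense of the definition. Any directed path from $h$ to $x$ passing through $y$ would give $y\Rightarrow_t x$, which again contradicts acyclicity given $x\Rightarrow_t y$. So every such path avoids $y$, and $h\in\tilde C_t(x,y)$, the desired contradiction. A degenerate sub-case needs a separate check: when $h=\gamma=w$, the argument still runs because $w\Rightarrow_t x$ by assumption.
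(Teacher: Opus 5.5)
Your proof is correct and is essentially the paper's argument: take a d-connecting route from $w$ to $y$ given $\{x\}$, isolate its collider-free final segment $\gamma\Leftarrow_t h\Rightarrow_t y$, rule out $h\in\{x,y\}$, and conclude that the summit $h$ is a confounder. The difference is cosmetic. The paper uses the walk criterion of Definition~\ref{def:dsep2}, so the last collider must be $x$ itself and walks avoiding $x$ get a separate case. Your path-based version lets the last collider be any ancestor of $x$ and merges both cases, implicitly using that $h\neq\gamma$ when $\gamma$ is a collider occurrence, since the segment's first edge then points into $\gamma$.
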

\begin{proof}
	If $\tilde{C}_{t}(x,y)=\emptyset$, then every $w\in\bar C_t(x)$ admits a directed path to $x$ avoiding $y$.
	
	Suppose, toward a contradiction, that $x$ does not d-separate $y$ and some $w \in \bar C_t(x)$. By Definition~\ref{def:dsep2}, there exists a walk $W$ between $w$ and $y$ on which the only possible collider occurrence is $x$.
	
	\emph{Case (i): $x$ does not occur on $W$.} Then $W$ is collider-free and has the fork shape $w \Leftarrow_t s \Rightarrow_t y$ for some $s \neq x$. 
	\begin{itemize}[nosep]
		\item If $s=w$, then $W$ is a directed path $w \Rightarrow_t y$ avoiding $x$, implying $w \in \tilde{C}_{t}(x,y)$, a contradiction to $\tilde{C}_t(x,y) = \emptyset$.
		\item If $s=y$, then $W$ is a directed path $w \Leftarrow_t y$, which forms a cycle with $w \Rightarrow_t x\Rightarrow_t y$.
		\item If $s$ is interior, then $s \Rightarrow_t y$ via a path avoiding $x$ and $s \Rightarrow_t w \Rightarrow_t x$ via a path avoiding $y$ (otherwise there is a cycle, as above), so $s\in\tilde{C}_{t}(x,y)$, a contradiction to $\tilde{C}_t(x,y) = \emptyset$.
	\end{itemize}

\emph{Case (ii): $x$ occurs on $W$.} Let $W'$ be the suffix of walk $W$ beginning at the last occurrence of $x$. Every collider occurrence of $W$ is an occurrence of $x$, so $W'$ is collider-free. Moreover, because this last occurrence of $x$ is a collider in $W$, the first edge of $W'$ has an arrowhead at $x$. Hence $W'$ has the fork form $x\Leftarrow_t s\Rightarrow_t y$ for some $s\neq x$. If $s=y$, then $y\Rightarrow_t x\Rightarrow_t y$ is a cycle. Otherwise $s$ is interion in $W'$, which implies that there exists a path $s \Rightarrow_t y$ avoiding $x$ (by the last-occurrence choice of $W'$) and a path $s \Rightarrow_t x$ avoding $y$ (by acyclicity), meaning $s\in\tilde C_t(x,y)$, a contradiction to $\tilde{C}_t(x,y) = \emptyset$.
\end{proof}

\begin{lemma}[Non-obvious causes exist only in unconfounded worlds]\label{lem:noc_to_unconf}
	Let $x \Rightarrow_{t} y$. A non-obvious cause of $y$ given $x$ exists if and only if $\tilde{C}_{t}(x,y)=\emptyset$ and $\bar C_t(x)\neq\emptyset$.
\end{lemma}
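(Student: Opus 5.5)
Both directions of Lemma~\ref{lem:noc_to_unconf} go through d-separation, with Lemma~\ref{lem:unconf_to_noc} doing the work for sufficiency. Throughout we are in the setting $x \Rightarrow_t y$. A non-obvious cause $w$ satisfies two conditions: $w \in \bar C_t(x)$, and $x$ d-separates $w$ and $y$.

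\emph{Sufficiency} (``if''). Suppose $\tilde C_t(x,y)=\emptyset$ and $\bar C_t(x)\neq\emptyset$. Pick any $w\in\bar C_t(x)$. Lemma~\ref{lem:unconf_to_noc} says directly that $x$ d-separates $w$ and $y$. Hence $w$ is a non-obvious cause of $y$ given $x$.

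\emph{Necessity} (``only if''). Let $w$ be a non-obvious cause. Since $w\in\bar C_t(x)$, we immediately get $\bar C_t(x)\neq\emptyset$. It remains to show that $\tilde C_t(x,y)=\emptyset$. Suppose instead that some confounder $c$ exists, with a directed path $c\Rightarrow_t x$ avoiding $y$ and a directed path $c\Rightarrow_t y$ avoiding $x$. The plan is to build a walk from $w$ to $y$ that is active given $\{x\}$ in the sense of Definition~\ref{def:dsep2}, which contradicts the d-separation property of $w$. Take a directed path $w\Rightarrow_t x$ and a directed path $c\Rightarrow_t x$, and concatenate
\[
w\Rightarrow_t x \Leftarrow_t c \Rightarrow_t y .
\]
Check the occurrences on this walk one by one:
\begin{itemize}
\item The occurrence of $x$ in the middle has both incident edges pointing into it, so it is a collider occurrence, and it lies in the conditioning set $\{x\}$, as required.
\item The interior vertices of the two directed segments ending at $x$, and the summit $c$, are non-collider occurrences.
\item The segment $c\Rightarrow_t y$ avoids $x$ by assumption. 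Its interior vertices are non-colliders outside $\{x\}$.
\end{itemize}
The first segment $w\Rightarrow_t x$ ends at $x$, so its interior vertices are not $x$. Likewise the path $c\Rightarrow_t x$ has interior vertices different from $x$.

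Two degenerate situations need handling. If $c=x$, then $c$ is not a confounder, since its path to $y$ would pass through $x$. If $c=w$, the walk collapses to the directed path $w\Rightarrow_t y$ avoiding $x$, which is also active given $\{x\}$. Walks are allowed to repeat vertices, so overlaps between the pieces cause no difficulty under Definition~\ref{def:dsep2}.

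The one delicate point, and the main thing to verify, is that no interior non-collider occurrence equals $x$. This holds because every segment was chosen to avoid $x$ except at the designated collider occurrence. Once that is checked, $w$ and $y$ are d-connected given $\{x\}$, contradicting the definition of $w$. Hence $\tilde C_t(x,y)=\emptyset$, which completes necessity.
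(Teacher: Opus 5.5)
Your proof is correct and is essentially the paper's own argument: sufficiency is read off from Lemma~\ref{lem:unconf_to_noc}, and necessity uses the walk $w\Rightarrow_t x \Leftarrow_t c \Rightarrow_t y$, whose only collider occurrence is $x\in\{x\}$. Definition~\ref{def:dsep2} then gives d-connection, which contradicts $w$ being a non-obvious cause. Your degenerate-case discussion goes further than needed, and $c\neq x$ is more directly justified by $c\Rightarrow_t x$ plus acyclicity, but nothing is missing.
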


\begin{proof}
	($\Longleftarrow$) The ``if'' part is immediate from Lemma~\ref{lem:unconf_to_noc}: if $\tilde{C}_{t}(x,y)=\emptyset$, every element of $\bar C_t(x)$ is a non-obvious cause.
	
	($\Longrightarrow$) To show the ``only if'' part, let $w\in\bar C_t(x)$ be a non-obvious cause and suppose, by way of contradiction, that some $c\in\tilde{C}_{t}(x,y)$ exists. Then there exists a walk $w \Rightarrow_t x \Leftarrow_t c \Rightarrow_t y$.
	Its unique collider occurrence is $x$. Hence by Definition~\ref{def:dsep2}, $w$ and $y$ are d-connected given $x$ in $M_{t}$, meaning $w$ is not a non-obvious cause---a contradiction.
\end{proof}

Finally, the following two lemmas illustrate which data properties debunk a particular link.

\begin{lemma}[Collider certificate at $y$]\label{lem:cert_oc}
	Let $x,y,z\in\Omega$ be distinct. If the data satisfy $x \perp z$ and $(x \not\perp z \mid y)$ then $x \not\Leftarrow_M y$ for any model $M$ consistent with $P|\Omega$.
\end{lemma}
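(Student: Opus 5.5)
The plan is a contradiction argument at the level of d-separation: take an arbitrary consistent model $M$ on $\Omega$, assume it contains a directed path $y \Rightarrow_M x$, and show that this path contradicts one of the two observed independence facts. The case split is on how $z$ relates to $y$ in $M$. Everything needed comes from consistency (the Markov property together with faithfulness) and Definition~\ref{def:dsep2}.

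\emph{Step 1: use $x\perp z$.} By faithfulness, the empty set d-separates $x$ and $z$ in $M$. So no collider-free path joins them. In particular $z$ is not an ancestor of $x$, $x$ is not an ancestor of $z$, and the two share no common ancestor.

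\emph{Step 2: locate $z$ relative to $y$.} Since $y \Rightarrow_M x$, any ancestor of $y$ is an ancestor of $x$. If $z \Rightarrow_M y$, then $z \Rightarrow_M x$, contradicting Step 1. If $y \Rightarrow_M z$, then $y$ is a common ancestor of $x$ and $z$, again contradicting Step 1. A common ancestor of $y$ and $z$ would also be a common ancestor of $x$ and $z$. Hence $y$ and $z$ are uncorrelated in $M$: there is no directed path either way and no common ancestor.

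\emph{Step 3: use $(x\not\perp z\mid y)$.} By the Markov property, $\{y\}$ fails to d-separate $x$ and $z$. So, by Definition~\ref{def:dsep2}, there is a walk from $x$ to $z$ whose every collider occurrence is $y$ and whose non-collider occurrences avoid $y$.

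Take the segment of this walk from the last occurrence of $y$ to $z$. If $y$ does not occur at all, the whole walk is collider-free, which contradicts Step 1. If $y$ does occur, the segment is collider-free, since all collider occurrences are at $y$. Such a collider-free walk has the fork shape $y \Leftarrow s \Rightarrow z$, which makes $y$ and $z$ correlated in $M$, contradicting Step 2. Therefore no consistent $M$ has $x \Leftarrow_M y$.

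The main obstacle is the bookkeeping in Step 3. The last-occurrence segment must start at $y$ itself, and a degenerate summit ($s=y$ or $s=z$) must be recognised as a directed path, which Step 2 has already excluded. Once that is checked, the rest is routine.
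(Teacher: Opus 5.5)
Your proof is correct and is essentially the paper's argument. Faithfulness applied to $x \perp z$ rules out any collider-free connection between $x$ and $z$, and the Markov property applied to $(x \not\perp z \mid y)$ yields a walk whose collider occurrences are all at $y$; the collider-free $y$--$z$ tail of that walk, combined with an assumed path $y \Rightarrow_M x$, then gives a forbidden collider-free $x$--$z$ connection. The only difference is packaging: you route the contradiction through ``$y$ and $z$ are (un)correlated in $M$'' and take the segment after the last occurrence of $y$, which handles walks with several collider occurrences at $y$ more carefully than the paper's single displayed walk $x \Leftarrow_M s_1 \Rightarrow_M y \Leftarrow_M s_2 \Rightarrow_M z$.
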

\begin{proof}
	From $(x \not\perp z \mid y)$ and the Markov property, $x$ and $z$ are d-connected given $y$ in $M$. By Definition~\ref{def:dsep2}, there exists a walk between $x$ and $z$ on which the only possible collider occurrence is $y$. Since $x \perp z$, faithfulness implies $x$ and $z$ are d-separated in $M$ (by $S=\emptyset$), so every \emph{path} between the two contains a collider. Combining the two, there exists a walk $x \Leftarrow_M s_1 \Rightarrow_M y \Leftarrow_M s_2 \Rightarrow_M z$ with $s_1,s_2$ distinct from $y$.
	
	If there exists a path $x \Leftarrow_M y$, there would also exist a walk $x \Leftarrow_M y \Leftarrow s_2 \Rightarrow_M z$ between $x$ and $z$ without a collider, which contradicts $x\perp z$. Hence $x \not\Leftarrow_M y$.
\end{proof}

\begin{lemma}[Collider certificate at $x$]\label{lem:cert_noc}
	Let $v,w,x,y\in\Omega$ be distinct. If the data satisfy all of the following, then $x \not\Leftarrow_M y$ for any model $M$ consistent with $P|\Omega$:
	\[
	\emph{(a)}\;\; v \perp w,
	\qquad
	\emph{(b)}\;\; v \not\perp w \mid x,
	\qquad
	\emph{(c)}\;\; v \perp y \mid x.
	\]
\end{lemma}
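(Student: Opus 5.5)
The plan mirrors the proof of Lemma~\ref{lem:cert_oc}: translate the three data conditions into graphical statements about an arbitrary consistent model $M$ using the Markov property and faithfulness, and then show that a directed path $x \Leftarrow_M y$ would contradict one of them. Throughout, fix $M \in \mathcal{M}(\Omega)$ and suppose, toward a contradiction, that $y \Rightarrow_M x$.

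\emph{Step 1: collider structure at $x$.} Faithfulness applied to (a) says that $\emptyset$ d-separates $v$ and $w$ in $M$, so every path between them contains a collider occurrence. By the Markov property, (b) says that $v$ and $w$ are d-connected given $\{x\}$. Definition~\ref{def:dsep2} then yields a walk between $v$ and $w$ whose collider occurrences all equal $x$ and whose non-collider occurrences avoid $x$. Take the segment of this walk from $v$ to the first occurrence of $x$. It is collider-free, so it has fork shape $v \Leftarrow_M s \Rightarrow_M x$, and the occurrence of $x$ is a collider, so the last edge points into $x$. Thus there is a collider-free walk from $v$ ending with an arrowhead at $x$. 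Call its prefix $\rho$; it runs from $v$ to a parent $p$ of $x$ and avoids $x$.

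\emph{Step 2: contradiction with (c).} Faithfulness applied to (c) says that $\{x\}$ d-separates $v$ and $y$. Suppose $y \Rightarrow_M x$ via some directed path $\sigma$. I would try to concatenate $\rho$, the edge $p \to x$, and the reversed path $\sigma$ from $x$ back to $y$. This does not work directly, because $x$ would be a collider occurrence with $x \in S$ (which is allowed), but $\sigma$ leaves $x$ with an arrowhead at $x$, so that occurrence is $p \to x \leftarrow \cdot$. That is fine for d-connection, yet interior vertices of $\sigma$ are non-colliders lying outside $\{x\}$, so the walk $v \cdots p \to x \leftarrow \cdots \leftarrow y$ would be d-connecting given $x$, contradicting (c). So it does work after all: collider occurrences on the combined walk are only $x\in S$, and all non-collider occurrences avoid $x$ because $\rho$ and $\sigma$ avoid $x$ except at their endpoint (take $\sigma$ to be a path, so $x$ appears only at its end). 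Definition~\ref{def:dsep2} then gives d-connection of $v,y$ given $\{x\}$, and Markov-faithfulness gives $(v \not\perp y \mid x)$, a contradiction. Hence $x \not\Leftarrow_M y$.

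\emph{Main obstacle.} The delicate point is the bookkeeping in Step 1. We must ensure that the extracted segment $\rho$ genuinely contains no occurrence of $x$ as a non-collider, and that its final edge points into $x$. Both follow from choosing the first occurrence of $x$ and from the walk criterion forbidding $x$ as a non-collider. A second check is that $\sigma \neq$ a single vertex, which holds because $y \neq x$. Condition (a) is used only to guarantee that such a walk ends in an arrowhead at $x$; strictly, (b) alone already gives this, since any occurrence of $x$ on the walk must be a collider. I would verify whether (a) is even needed. I suspect it is redundant for this direction and matters only for the paper's inoculation remark.
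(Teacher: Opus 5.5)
Your Steps 1--2 follow the paper's proof. From (a) and (b) you get a d-connecting walk given $\{x\}$ that reaches $x$ through a collider-free segment ending in an arrowhead at $x$. Splicing the reversed directed path $y \Rightarrow_M x$ onto that segment gives a walk $v \Leftarrow_M s \Rightarrow_M x \Leftarrow_M y$ that d-connects $v$ and $y$ given $\{x\}$, which contradicts (c) by faithfulness. The paper writes out the full walk $v \Leftarrow_M s_1 \Rightarrow_M x \Leftarrow_M s_2 \Rightarrow_M w$, but it too uses only the $v$-half.

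The gap is that you never justify that $x$ occurs on the walk at all, and your ``first occurrence of $x$'' step presupposes it. Your closing paragraph then gets this point wrong. Condition (b) only tells you that \emph{any} occurrence of $x$ must be a collider; it does not produce an occurrence. Supplying one is exactly the job of (a). If the walk avoided $x$, it would have no collider occurrences. By Definition~\ref{def:dsep2} with $S=\emptyset$, it would then witness that $v$ and $w$ are d-connected given $\emptyset$, contradicting faithfulness applied to $v \perp w$. So (a) is not redundant, and the lemma is false without it. Take $M_t$ with only the links $v \to w$ and $y \to x$: then (b) and (c) hold, yet $M_t$ is itself a consistent model with $x \Leftarrow_{M_t} y$. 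Add this argument to Step 1 and drop the redundancy remark, and the proof is complete.
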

\begin{proof}
	By the same argument as in the proof of Lemma~\ref{lem:cert_oc}, observations (a) and (b) jointly imply that there exists an active walk in $M$ between $v$ and $w$ with a collider occurrence at $x$. Hence, some such walk has the form $v \Leftarrow_M s_1 \Rightarrow_M x \Leftarrow_M s_2 \Rightarrow_M w$ with $s_1,s_2$ distinct from $x$.
	
	If there exists a path $x \Leftarrow_M y$, then combining it with the above implies there exists a walk $v \Leftarrow_M s_1 \Rightarrow_M x \Leftarrow_M y$, meaning $v$ and $y$ are d-connected given $x$. By Faithfulness, this contradicts observation (c), hence no such path exists, and $x \not\Leftarrow_M y$.
\end{proof}

\begin{remark}[Monotonicity and inoculation]
	The statements in Lemmas \ref{lem:cert_oc} and \ref{lem:cert_noc} concern the joint distribution $P$ restricted to three or four variables. So once the certificate variables ($z$ or $v,w$) are disclosed, the conclusion $x \not\Leftarrow_M y$ persists under any further disclosure by any future sender. This is the inoculation property discussed informally in Section \ref{sec:applications}. 
\end{remark}

\begin{proof}[Proof of Theorem~\ref{thm:debunk_simple}]
	\textbf{Case 1.} We verify the two data facts required by Lemma~\ref{lem:cert_oc} for the triple $(x,y,z)$ and then apply it. The first fact, $x \perp z$, follows immediately from the definition of an obvious cause.
	
	To show $(x \not\perp z \mid y)$, proceed as follows. 
	Since $x,y$ are adjacent in $M_r$, $(x \not\perp y \mid S)$ for any $S \subseteq \Omega_r\setminus\{x,y\}$ and in particular, $x \not\perp y$. The Markov property implies $x$ and $y$ must be d-connected in $M_t$, so there must be a collider-free walk between $x$ and $y$ in $M_t$: $x \Leftarrow_t s \Rightarrow_t y$. Note that $s \neq y$, since the link $x \leftarrow_r y$ is assumed defective.	
	Since $z \in \bar{C}_t(y)$, there then exists a walk $x \Leftarrow_t s \Rightarrow_t y \Leftarrow_t z$ with $s \neq y$. This walk has only one collider occurrence, at $y$, so Definition~\ref{def:dsep2} implies $x$ and $z$ are d-connected given $y$. Faithfulness then implies $(x \not\perp z \mid y)$.
	
	Now let $\Omega_{s} \equiv \Omega_{r}\cup\{z\}$. By Lemma~\ref{lem:cert_oc}, $x \not\Leftarrow_M y$ for any $M$ consistent with $P|\Omega_s$. So any such model debunks the link $x \leftarrow_r y$ and the receiver's model $M_r$, while revealing just one new variable. (This also implies that $z \notin \Omega_r$.)
	
	\medskip
	\textbf{Case 2.} By Lemma~\ref{lem:noc_to_unconf}, the existence of a non-obvious cause implies $\tilde{C}_{t}(x,y)=\emptyset$. If an obvious cause of $y$ given $x$ exists, Case~1 applies. Assume therefore that no obvious cause exists. 
	Then Lemma~\ref{lem:Vstr_upstream_direct} applies, so there exists a direct V-structure $a\to_{t} b \leftarrow_{t} c$ with $a,c\in\bar C_t(x)$ and $b\in\bar C_t(x)\cup\{x\}$. We verify the three data facts required by Lemma~\ref{lem:cert_noc} for the pair $(v,w)=(a,c)$.
	\begin{description}[nosep]
		\item\emph{Fact (a):} $a \perp c$. Simplicity of $M_t$ means the V-parents $a$ and $c$ are not correlated in $M_{t}$, so consistency of $M_t$ implies that $a \perp c$ in the data. 
		
		\item\emph{Fact (b):} $(a\not\perp c\mid x)$. Consider the path $a\to_t b\leftarrow_t c$ in $M_{t}$. Its only interior vertex is the collider $b$, and either $b=x$, or $b\Rightarrow_{t} x$. In either case, the collider condition of Definition~\ref{def:dsep} fails for $S=\{x\}$, so $x$ does not d-separate $a$ and $c$ in $M_t$. Faithfulness implies $(a\not\perp c\mid x)$.
		
		\item\emph{Fact (c):} $(a\perp y\mid x)$. Since $a\in\bar C_t(x)$, $x\Rightarrow_{t} y$, and $\tilde{C}_{t}(x,y)=\emptyset$, Lemma~\ref{lem:unconf_to_noc} applies.
	\end{description}
	
	Now let $\Omega_{s}=\Omega_{r}\cup\{a,c\}$. By Lemma~\ref{lem:cert_noc}, $x \not\Leftarrow_M y$ for any $M$ consistent with $P|\Omega_s$. So any such model debunks the link $x \leftarrow_r y$ and the receiver's model $M_r$ by revealing at most two new variables. (At least one of $a$ and $c$ is new to the receiver.)
\end{proof}

\subsection{Proofs for Section \ref{sec:microfound}} \label{sec:ambig_pf}

\begin{proof}[Proof of Proposition~\ref{prop:ambig_optactions}]
	Fix some consistent $M_s$.
	Observe that the receiver's value from inaction $x=0$ is always zero: $V_R(x=0 \mid M_s) = 0$.
	
	Let $\mathcal{N}(\Omega_s) \equiv \left\{ M \in \mathcal{M}(\Omega_s) \mid x \not\Rightarrow_M y \right\}$ denote the set of models that say action $x$ does not affect outcome $y$ (and note this set may be empty). Then
	\begin{align*}
		V_R(1 \mid M_s) = \min \Bigg\{ 
		&\min_{M \in \mathcal{N}(\Omega_s)} \left\{-1 + \frac{1}{\theta} \psi(M \mid M_s) \right\},
		\\
		&\min_{M \in \mathcal{M}(\Omega_s) \setminus \mathcal{N}(\Omega_s)} \left\{ b(M)-1 + \frac{1}{\theta} \psi(M \mid M_s) \right\}
		\Bigg\}.
	\end{align*}

	\paragraph{Case 1.}
	If $\mathcal{N}(\Omega_s)$ is empty then $V_R(1 \mid M_s) > 0$, since by assumption, $b(M)>1$ for all $M \in \mathcal{M}(\Omega_s) \setminus \mathcal{N}(\Omega_s)$. Therefore, action $x=1$ is optimal for the receiver, $x^*(M_s) = 1$, and $V_S(M_s)=1-\kappa(|\Omega_s|)$.

	\paragraph{Case 2.}
	Suppose that $\mathcal{N}(\Omega_s)$ is nonempty and the receiver is skeptical ($\theta > \bar{\psi}$). Then 
	\begin{align*}
		-1 + \frac{1}{\theta} \psi(M \mid M_s) &< -1 + \frac{1}{\bar{\psi}} \psi(M \mid M_s)
		\leq -1 + \frac{1}{\bar{\psi}} \bar{\psi} = 0.
	\end{align*}
	Hence, $V_R(1 \mid M_s) < 0 = V_R(0 \mid M_s)$, so $x^*(M_s) = 0$ and $V_S(M_s)=0-\kappa(|\Omega_s|)$.

	\paragraph{Case 3.}
	Suppose that $\mathcal{N}(\Omega_s)$ is nonempty and the receiver is credulous ($\theta < \underline{\psi}$).
	Then for all $M \in \mathcal{N}(\Omega_s) \setminus \{M_s\}$:
	\begin{align*}
		-1 + \frac{1}{\theta} \psi(M \mid M_s) & > -1 + \frac{1}{\underline{\psi}} \psi(M \mid M_s)
		\geq -1 + \frac{1}{\underline{\psi}} \underline{\psi} = 0,
	\end{align*}
	and for all $M \in \mathcal{M}(\Omega_s) \setminus \left( \mathcal{N}(\Omega_s) \cup \{M_s\} \right)$:
	\begin{align*}
		b(M)-1 + \frac{1}{\theta} \psi(M \mid M_s) > \frac{1}{\theta} \psi(M \mid M_s) \geq 0
	\end{align*}
	because $b(M) > 1$ for all such $M$. It follows that the value in $V_R(1 \mid M_s)$ is strictly positive given all $M \neq M_s$. At $M_s$, the penalty is zero. The corresponding value is $-1$ if $M_s\in \mathcal{N}(\Omega_s)$, and $b(M_s)-1>0$ otherwise. Hence $x^*(M_s)=1$ if and only if $x\Rightarrow_s y$, in which case $V_S(M_s)=1-\kappa(|\Omega_s|)$. 
\end{proof}

\begin{proof}[Proof of Proposition~\ref{prop:ambig_equiv_str}.]
The sender's strategy space is the same in both settings, and silence yields a payoff of zero in both. Fix any consistent proposal $M_s = (\Omega_s, C_s)$. By Proposition~\ref{prop:ambig_optactions}, a credulous receiver chooses $x^*(M_s) = 1$ if and only if $x \Rightarrow_s y$, which is exactly the
acceptance rule of a na{\"i}ve receiver, so $x^*(M_s) = v(M_s)$. Likewise, a skeptical receiver chooses $x^*(M_s) = 1$ if and only if $x \Rightarrow_M y$ for all $M \in \mathcal{M}(\Omega_s)$, which is the
acceptance rule of a sophisticated receiver (note that this condition implies $x \Rightarrow_s y$, since $M_s \in \mathcal{M}(\Omega_s)$), so
again $x^*(M_s) = v(M_s)$. In either case,
$V_S(M_s) = x^*(M_s) - \kappa(|\Omega_s|) = v(M_s) - \kappa(|\Omega_s|) = U_S(M_s)$: the payoff functions coincide on the common strategy space,
so the sets of optimal strategies coincide.
\end{proof}

\begin{corollary}\label{cor:thm2-game}
If $M_t$ is simple, then the sender's optimal proposal in the game of Section~\ref{sec:microfound}, with either a credulous or a skeptical receiver, is either
$M_s = \varnothing$ or a model in which $x$ and $y$ are adjacent.
\end{corollary}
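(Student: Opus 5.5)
The plan is to transport Theorem~\ref{thm:direct} into the game through Proposition~\ref{prop:ambig_equiv_str}, with no new graph-theoretic work.

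First, fix the receiver type and pair it with the matching baseline type: credulous with na{\"i}ve, and skeptical with sophisticated. By Proposition~\ref{prop:ambig_equiv_str}, the sender's equilibrium strategy in the game coincides with the sender's optimal strategy in the baseline model for that type. More precisely, its proof shows that the sender's payoff functions $V_S$ and $U_S$ agree on the common strategy space of consistent proposals and silence. The two problems therefore have identical sets of maximizers.

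Second, apply Theorem~\ref{thm:direct} to the baseline problem. Under simplicity of $M_t$, for either the na{\"i}ve or the sophisticated receiver, every optimal baseline proposal is either $M_s=\varnothing$ or a model in which $x$ and $y$ are adjacent. Since the sets of optimizers coincide, every equilibrium proposal in the game has the same form, which is the claim.

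The only point needing care is bookkeeping about the strategy spaces. In the game, the sender chooses consistent $M_s$ or silence, while the baseline allows any proposal. An inconsistent proposal yields a non-positive payoff in both settings: it is rejected in the baseline, and it induces $x=0$ in the game. Such a proposal is weakly, and in fact strictly, dominated by silence because $\kappa$ is strictly increasing. So restricting the baseline to consistent proposals does not change its optimizers, and the identification of optimizers goes through.
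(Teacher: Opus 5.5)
Your proof is correct and is the paper's own argument: Proposition~\ref{prop:ambig_equiv_str} identifies the sender's optimal strategies in the game with those in the baseline model, and Theorem~\ref{thm:direct} then restricts them to silence or a model with $x$ and $y$ adjacent. One caveat on your bookkeeping, which the paper skips by treating the strategy spaces as identical: the literal sophisticated acceptance rule does not require $M_s$ itself to be consistent, so ``rejected in the baseline'' is not automatic for that type. The conclusion still holds, because any accepted proposal on a positive $\Omega_s$ can be replaced by a consistent accepted one with the same $|\Omega_s|$.
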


\begin{proof}
By Proposition~\ref{prop:ambig_equiv_str}, the sender's optimal strategies in the game with a credulous (skeptical) receiver coincide with those in the baseline model with a na{\"i}ve (sophisticated) receiver, which by Theorem~\ref{thm:direct} are either silence or a direct model.
\end{proof}

\bibliographystyle{apalike}
\bibliography{literature}

\newpage
\section{Supplementary Appendix: Additional Examples} \label{sec:examples}

\subsection{Consistent models, IC algorithm}

\begin{figure}
	\begin{center}
		\begin{minipage}[b]{0.24\textwidth}
			\centering 
			\includegraphics[scale=0.25]{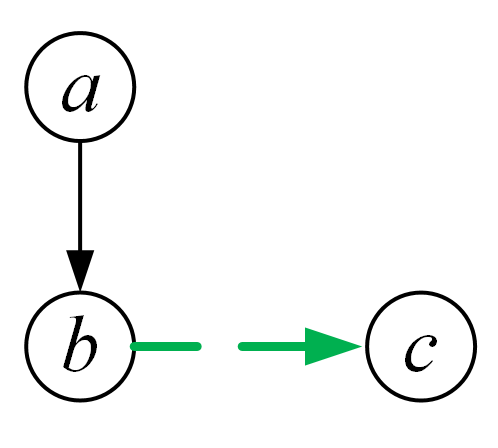}\\
			R1\\
		\end{minipage}
		\begin{minipage}[b]{0.24\textwidth}
			\centering 
			\includegraphics[scale=0.25]{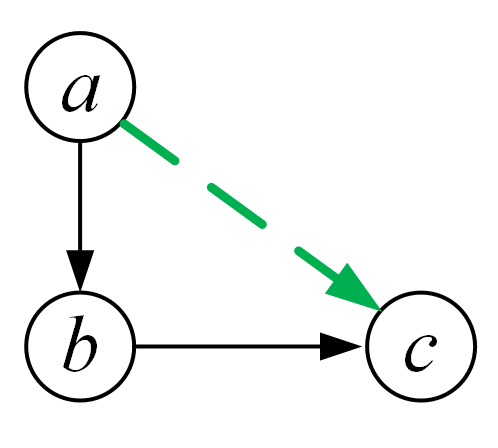}\\
			R2\\
		\end{minipage}
		\begin{minipage}[b]{0.24\textwidth}
			\centering	
			\includegraphics[scale=0.25]{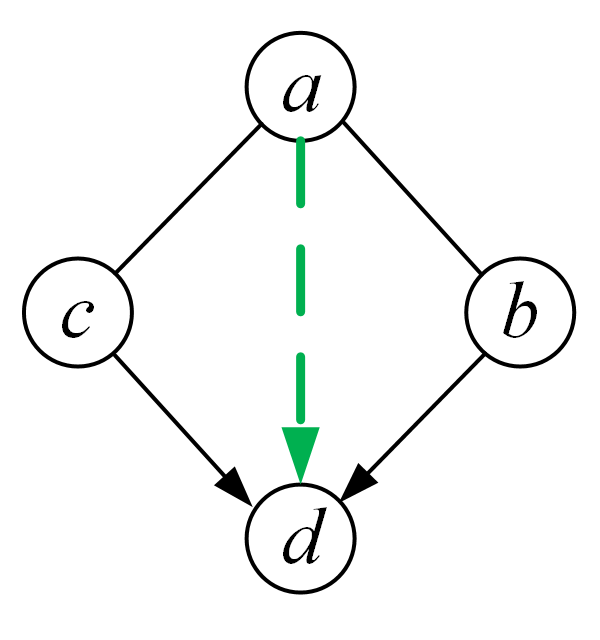}\\
			R3\\
		\end{minipage}
		\begin{minipage}[b]{0.24\textwidth}
			\centering	
			\includegraphics[scale=0.25]{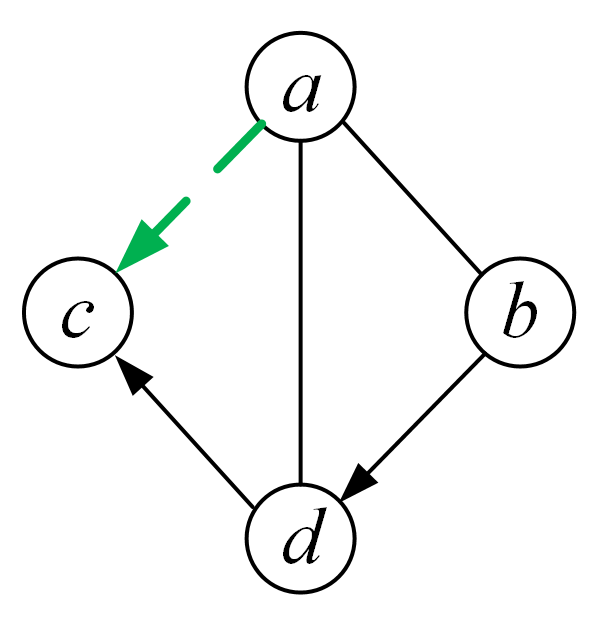}\\
			R4\\
		\end{minipage}
		
		\caption{\citet{meek_causal_1995} rules. \label{fig:Meek}}
	\end{center}
	\footnotesize
	\textsc{Notes:} Green long-dashed arrows represent the undirected links oriented by the respective rules.
\end{figure}

\citet{meek_causal_1995} shows that Step 3 of the IC algorithm is equivalent to the iterative application of four orientation rules depicted in Figure~\ref{fig:Meek}. Rule \textbf{R2} follows immediately from step 3(b) of the IC algorithm: link $a \leftarrow c$ would create a cycle, which is ruled out by acyclicity, hence it must be that $a \to c$. The other three rules follow from step 3(a). For \textbf{R1}: if the true model was such that $b \leftarrow c$, then $(a,b,c)$ form a collider which would create a V-structure in the data that should have been identified in step 2. Since it was not, $(a,b,c)$ cannot form a collider, so it must be that $b \to c$. For \textbf{R3}: if $a \leftarrow d$, then R2 orients the remaining links as $c \to a \leftarrow b$, which would be a V-structure. For \textbf{R4}: if $c \to a$, then R2 orients the remaining links as $d \to a \leftarrow b$, meaning $(c,a,b)$ would be a V-structure.

\begin{figure}
	\centering
	\begin{minipage}[b]{0.45\textwidth}
		\centering 
		\includegraphics[scale=0.275]{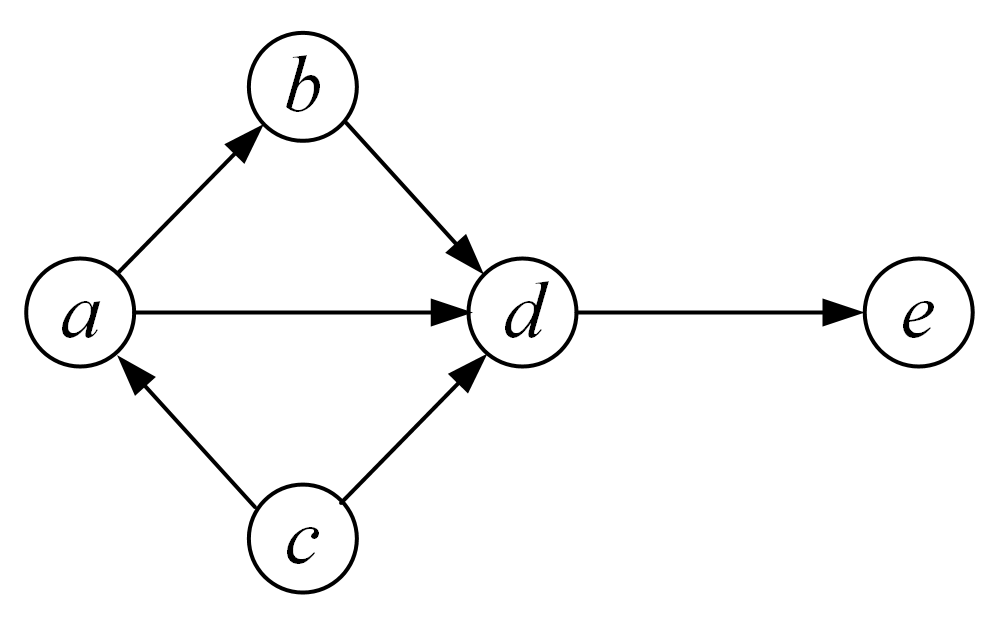}\\
		(a) true model\\
	\end{minipage}
	\begin{minipage}[b]{0.45\textwidth}
		\centering 
		\includegraphics[scale=0.275]{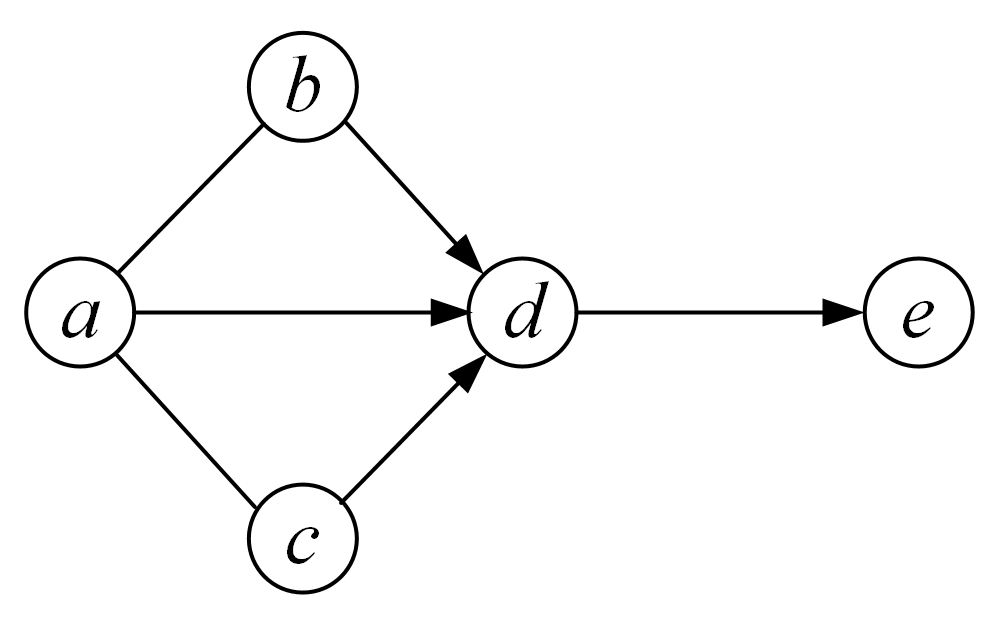}\\
		(b) IC algorithm output\\
	\end{minipage}
	
	\caption{Applying the IC algorithm. \label{fig:IC}}
\end{figure}
\begin{example}[Applying the IC algorithm] \label{exp:IC}
	Suppose the true model is as in Figure~\ref{fig:IC}(a), and we use the IC algorithm on data generated by it. In Step 1, the algorithm identifies pairs of variables that are never conditionally independent. So it connects the pairs $ab$, $ac$, $ad$, $bd$, $cd$, and $de$ with undirected links. In Step 2, it identifies the only V-structure $c\rightarrow d\leftarrow b$ with control $a$, since $(c\perp b\mid a)$ and $(c\not\perp b \mid a,d)$. In Step 3: (1) we apply R1 to $b\rightarrow d - e$ and direct $d\rightarrow e$; and (2) we apply R3 to the rectangle $\{a,b,d,c\}$ and direct the link $a\rightarrow d$. No further links can be oriented. The resulting partially directed graph is shown in Figure~\ref{fig:IC}(b). 
\end{example}

\begin{figure}
	\centering
	\begin{minipage}[b]{0.45\textwidth}
		\centering 
		\includegraphics[scale=0.275]{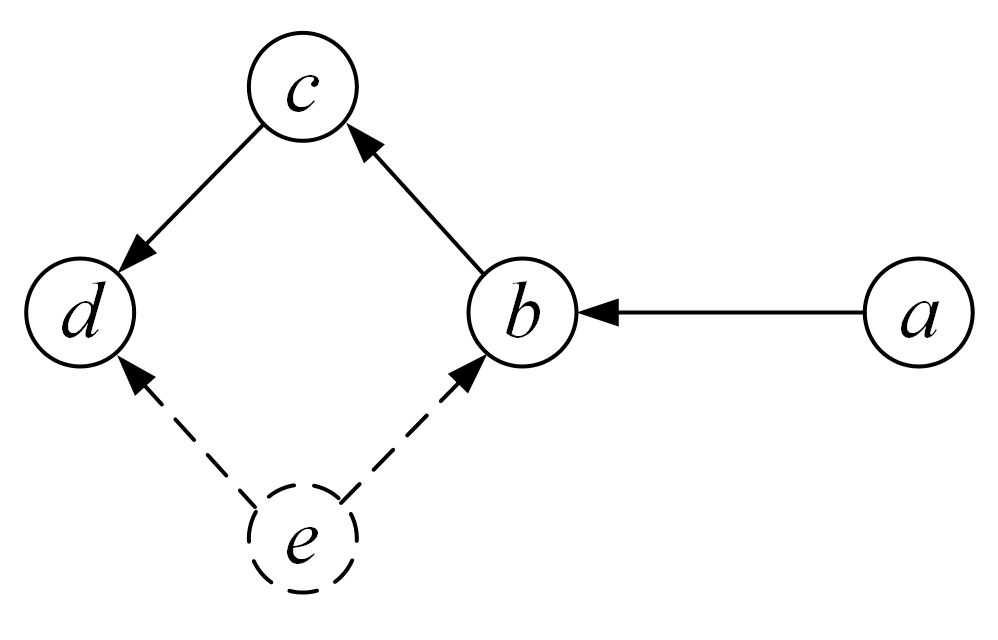}\\
		(a) true model\\
	\end{minipage}
	\begin{minipage}[b]{0.45\textwidth}
		\centering 
		\includegraphics[scale=0.275]{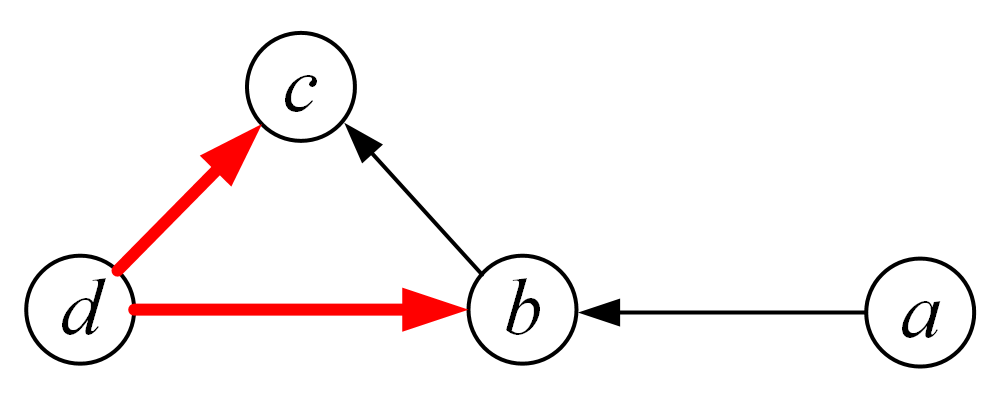}\\
		(b) IC algorithm output\\
	\end{minipage}
	
	\caption{No models are consistent with the data on $\Omega=\{a,b,c,d\}$. \label{fig:no}}
\end{figure}
\begin{example}[IC algorithm may produce an inconsistent model] \label{exp:IC_incons}
	Suppose the true model is as in Figure~\ref{fig:no}(a). Running the IC algorithm with variables $\Omega = \{a,b,c,d\}$ (so $e \notin \Omega$) produces the model in Figure~\ref{fig:no}(b). Specifically, in Step 1, the IC algorithm identifies which variables are linked. There is no $S \subseteq \{a,c\}$ such that $\left(d \perp b \mid S\right)$, hence $d$ and $b$ are adjacent. In Step 2, the algorithm identifies the V-structure $d\rightarrow b \leftarrow a$ with control $c$. In Step 3, the algorithm: (1) orients link $b\rightarrow c$ using rule R1 (which avoids creating a V-structure $a,b,c$ that would have been identified); and (2) orients $d\rightarrow c$ using R2 (which avoids a cycle). However, the resulting model is not consistent with the data, since it violates the Markov property: the model suggests that $a$ and $d$ should be independent, but in the data they are not, $(a\not\perp d)$. This is happening because there does not exist any consistent model for this set of variables $\Omega$. 
\end{example}

\subsection{Persuasion without a pre-existing model}

\begin{example}[Persuading a na{\"i}ve receiver---Vaccination and autism] \label{exp:vaccines}
	Suppose vaccination rates for some disease are positively correlated with the number of confirmed cases of that disease, due to more vaccinations in response to (or in anticipation of) an outbreak. Presenting the data on these two correlated variables to a na{\"i}ve receiver with a claim that vaccines cause the disease would be sufficient to persuade them in this narrative.
\end{example}

%
%

\begin{figure}
	\centering
	\includegraphics[scale=0.275]{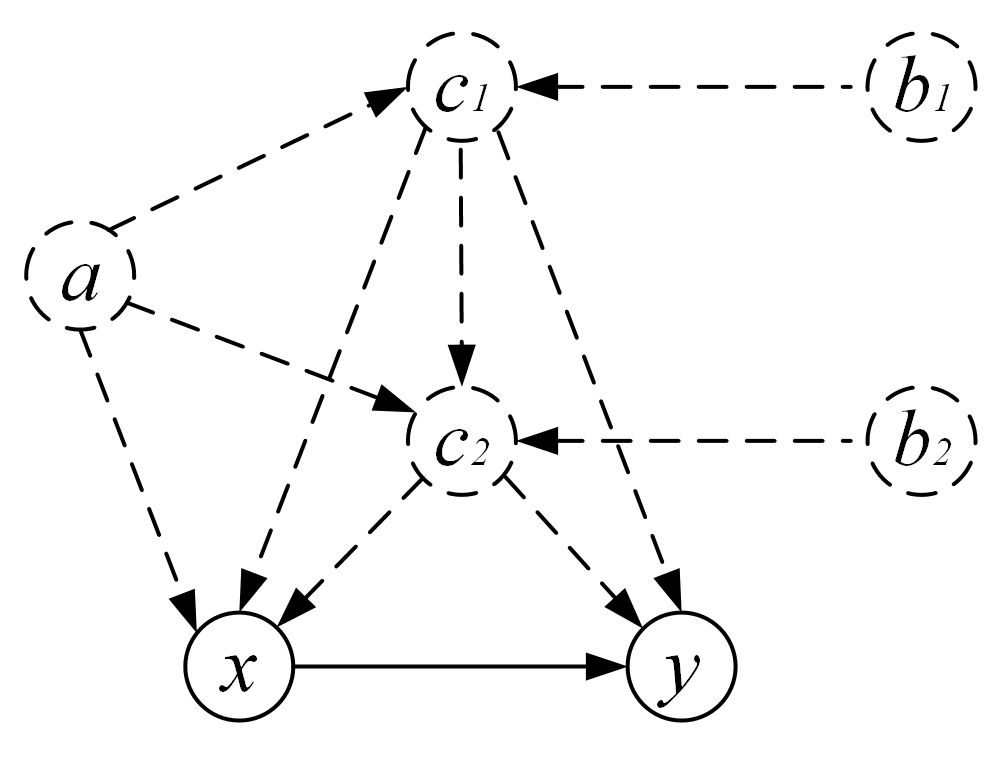}	
	\caption{Persuasion without a cause. \label{fig:nocauses}}
\end{figure}
\begin{example}[Persuasion without a cause can be easy, difficult, or impossible] \label{exp:nocauses}
	Suppose the true model is such that $\Omega_t = \{x,y,a,b_1,...,b_n,c_1,...,c_n\}$, and the causal graph is given by $\{a,b_i\} \to_t c_i \to_t \{x,y\}$ for all $i \in \{1,...,n\}$, $a \to_t x$, $x \to_t y$, and $c_i \to_t c_j$ for any $i>j$. Such a model for $n=2$ is depicted in Figure~\ref{fig:nocauses}. This model contains no obvious or non-obvious causes of $y$ given $x$ (all other variables are confounding for $x,y$). Further, the model is rich, so disclosing the true model would persuade a sophisticated receiver. To see this, run the IC algorithm on the full distribution $P$. Then all undirected links are identified in Step 1. Step 2 identifies V-structures $a \to c_i \leftarrow b_i$ for all $i$, and $c_i \to c_j \leftarrow b_j$ for all $i,j$ with $i>j$.\footnote{In all of the respective colliders in the true model, pairs $(a,b_i)$ and $(c_i,b_j)$ are not correlated, meaning that the model is also simple.} Step 3 of the IC algorithm then orients, for all $i$: 
	\begin{align*}
		b_i \to c_i &\overset{R1}{\Rightarrow} c_i \to \{x,y\},
		\\
		a \to c_i \to x &\overset{R2}{\Rightarrow} a \to x,
		\\
		a \to x &\overset{R1}{\Rightarrow} x \to y.
	\end{align*} 
	
	Persuading a sophisticated receiver that $x \to_s y$ in this case \emph{requires} revealing all variables in $\{c_1, ..., c_n\}$. This is because orienting $x \to_s y$ can only be done using R1 as above, which requires that $a$ is not adjacent to $y$ in $M_s$, meaning that $\Omega_s \supset \{c_1, ..., c_n\}$.\footnote{Orienting $x \to_s y$ further requires that $a \to_s x$, which can also only be oriented using R2 as in the true model and thus requires that $a \to_s c_i \to_s x$ for some $i$ were oriented. The latter requires that $b_i \in \Omega_s$ for some $i$.}
	
	Since $n$ is arbitrary, this means that arbitrarily many variables may be required to persuade a sophisticated receiver in this case---i.e., persuasion can be arbitrarily difficult. Conversely, if $n=1$, persuasion can be ``easy-ish'', requiring only three variables $(a,b_1,c_1)$. Finally, if variable $a$ did not exist, then $M_t$ would not be rich, and it would be impossible to persuade the receiver that $x \to_s y$.
\end{example}

\subsection{Debunking and persuasion with a pre-existing model}

We now present a few applied examples of how causal misperceptions can be debunked, according to Theorem~\ref{thm:debunk_simple}. For the examples that follow, we assume that the true model $M_t$ is as in Figure~\ref{fig:causes_copy}, the receiver's initial model is given by $x \leftarrow_r y$, and all the respective variables are as in different rows of Table \ref{tab:exp}.
A disclaimer is needed that the problems discussed in these examples are complex and multi-faceted. The authors do not claim to have any easy answers and do not claim that the ``true models'' in these examples capture reality accurately. Instead, we assume for sake of the argument that the true causal links in every model go in a certain way, and demonstrate how these links could be proven by causal narratives.

\begin{example}[Election results and opinion polls] \label{exp:polls}
	Suppose an autocrat attempts to suppress the results of the pre-election public opinion polls under the pretense that poll results $y$ affect voters' preferences $x$ and thereby bias the election outcomes. The opposition (sender) wants to debunk this narrative to persuade the voters (receivers) to mobilize and protect electoral transparency and free speech. Suppose further the autocrat's narrative is indeed false (defective) in this example, and voters' preferences are correlated with poll results simply because the latter accurately represent the former ($x \to_t y$).
	The opposition can then debunk it by revealing that weather on the day of the poll biases the poll results without affecting the voters' actual preferences.\footnote{\citet{damsbo-svendsen_when_2023} show in a meta-analysis that weather on the voting day affects voting behavior. By a similar logic, it is reasonable to argue that weather on a polling day affects the poll results.} 
	This makes weather an obvious cause $z$ in this example.
	Alternatively, the opposition could reveal two non-obvious causes $v,w$, which is any pair of independent variables that affect public opinion but have no effect on poll results, except \emph{through} changes in public opinion. Such variables can include state of the economy (growth, inflation, or unemployment) and foreign policy events (wars and conflicts or trade agreements and expressions of support).
\end{example}
\begin{table}
	\centering 
	\begin{tabular}{c | c | c | c}
		$v,w$ 	& $x$ 	& $y$	& $z$
		\\ \hline
		$\begin{array}{c} 
			\text{economy},\\ \text{foreign policy}
		\end{array}$ & voters preferences & poll results & poll-day weather
		\\ \hline
		$\begin{array}{c}
			\text{party lines}, \\ \text{demographics, culture}, \\ \text{voting system} 
		\end{array}$ & polarization & SoMe algorithms 
		& platform design
		\\ \hline
		$\begin{array}{c}
			\text{supply shocks}\\ \text{(wars, nat. disasters)}\\
			\text{demand shocks} \\ \text{(mark-up, foreign demand)}
		\end{array}$ & inflation & interest rates & $\begin{array}{c}
			\beta\text{-shock},\\ \text{demographics}
		\end{array}$
		\\ \hline 
	\end{tabular}
	\caption{Examples of obvious and non-obvious causes. \label{tab:exp}}
\end{table}
\begin{example}[Social media and polarization] \label{exp:polarization}
	Consider now a politician (receiver) who believes that the recommendation algorithms of social media platforms $y$ lead to social polarization $x$ which is undesirable and warrants regulating platforms. Suppose the truth is the converse: that polarized individuals simply prefer polarized media sources, and by engaging with such content ``teach'' the recommendation algorithm \citep{robertson_users_2023}. 
	The platform (sender) would like to debunk the receiver's view and to show the truth. One way to do so is to reveal an obvious cause $z$, such as some aspect of platform design that affects the composition of the users' feed (together with the data on an internal experiment varying this aspect and relating it to which items users engage with). 
	An alternative approach would be to reveal two non-obvious causes $v,w$ that do not affect social media recommendations except through their effect on social polarization. These could be the sentiment of politicians' statements and cross-sectional variation from demographics or culture. 
\end{example}
\begin{example}[Interest rate and inflation] \label{exp:inflation}
	Suppose a voter (receiver) believes that the interest rate $y$ set by the central bank drives inflation $x$, seeing how the two are correlated. The conservative central bank (sender) would like to explain that its monetary policy actually responds to inflation in an attempt to curb the business cycles. To do so with data, the sender can show another factor $z$ that affects interest rates but is not correlated with inflation, which would be the obvious cause in this case. Any variable affecting the savings rate, such as ageing population\footnote{``The impact of population ageing on monetary policy.'' Vox EU, Mar 05, 2019. URL: \url{https://cepr.org/voxeu/columns/impact-population-ageing-monetary-policy}} or a shock to the intertemporal discount factor $\beta$, would be the natural candidate.
	Alternatively, the central bank could disclose independent, non-obvious determinants $v,w$ of inflation, such as aggregate supply and demand shocks, including geopolitical disruptions, pandemics, natural disasters, as well as mark-up shocks and foreign demand shocks. These affect inflation without entering the policy rule directly, and therefore influence interest rates only through the central bank's endogenous response.
\end{example}

\subsection{Debunking: special cases}

\begin{figure}
	\centering
	
	\begin{minipage}[b]{0.24\textwidth}
		\centering
		
		\includegraphics[scale=0.225]{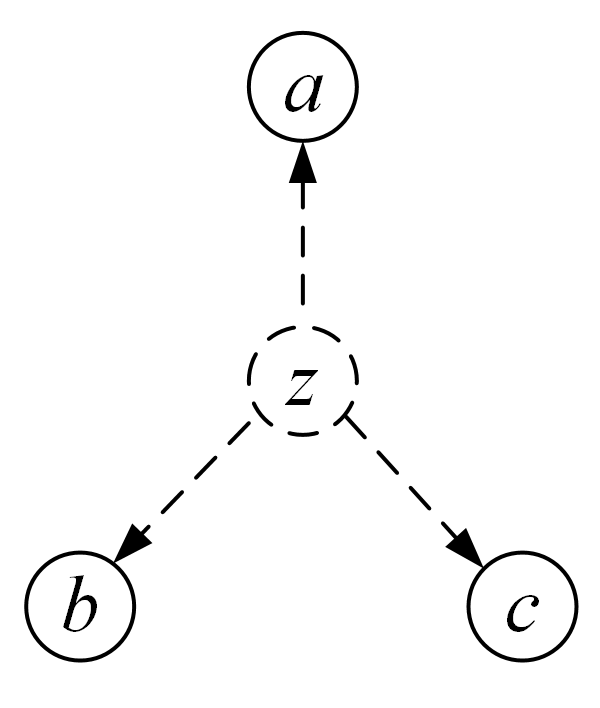}\\
		
		(a) $M_t$\\
	\end{minipage}
	\hfill
	\begin{minipage}[b]{0.24\textwidth}
		\centering
		
		\includegraphics[scale=0.225]{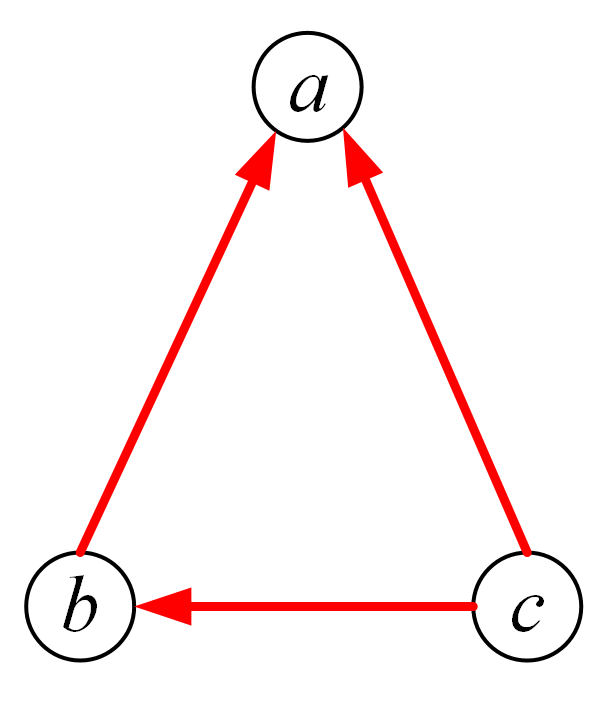}\\
		
		(b) $M_r$\\
	\end{minipage}
	\hfill
	\begin{minipage}[b]{0.24\textwidth}
		\centering
		
		\includegraphics[scale=0.225]{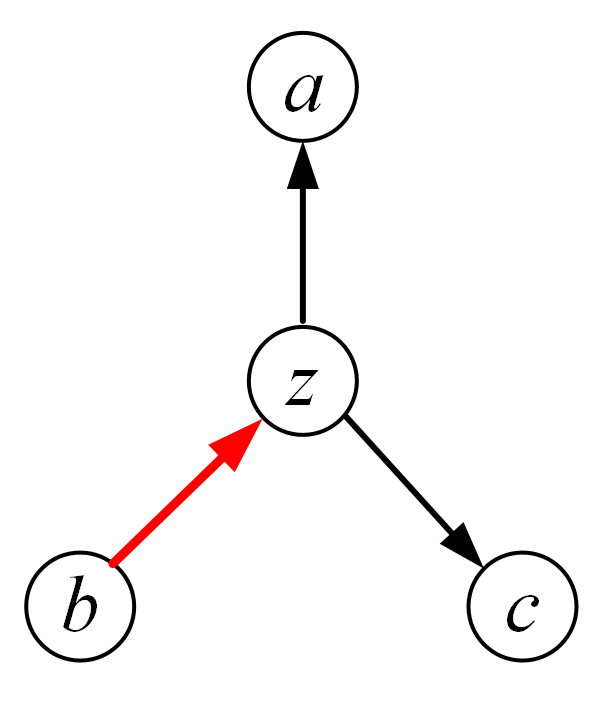}\\
		
		(c) consistent $M$\\
	\end{minipage}
	\hfill
	\begin{minipage}[b]{0.24\textwidth}
		\centering
		
		\includegraphics[scale=0.225]{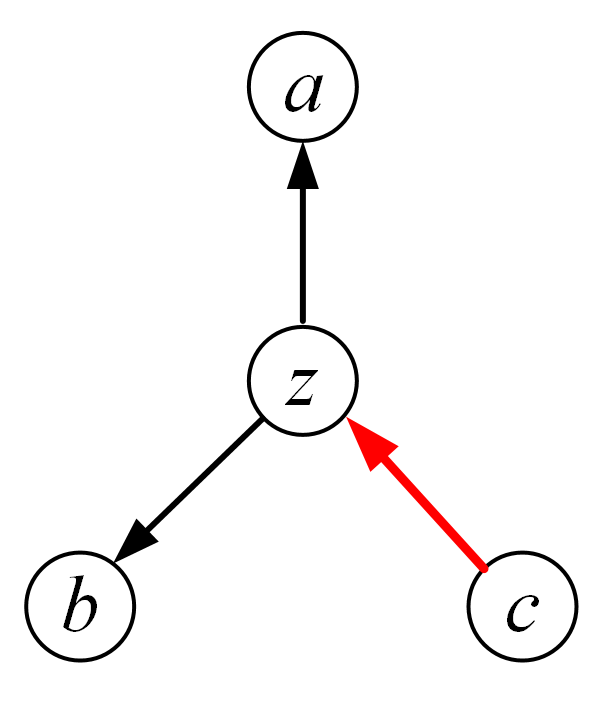}\\
		
		(d) consistent $M$\\
	\end{minipage}
	\caption{Debunking of a link is sufficient but not necessary. \label{fig:debunk_nolink}}
\end{figure}
\begin{example}[A model may be debunked without any individual link being debunked]\label{ex:debunk_nolink}
	Consider a true model with four variables, $z\to_t a,b,c$, depicted in Figure~\ref{fig:debunk_nolink}(a). Suppose the receiver is initially aware of $\Omega_r=\{a,b,c\}$. All three variables are pairwise correlated, regardless of whether we condition on the remaining one, hence they must all be adjacent in any consistent model. Suppose then that $M_r$ is given by $c \to_r b \to_r a$, $c \to_r a$, as in Figure~\ref{fig:debunk_nolink}(b).
	
	Suppose the sender reveals $z$, so that $\Omega_s=\{a,b,c,z\}$. The expanded data identify the skeleton of a star centered at $z$, with no collider at $z$. Several orientations of this star are consistent with $P|\Omega_s$. In particular, a model $b \to z \to a,c$ from Figure~\ref{fig:debunk_nolink}(c) is consistent with $P|\Omega_s$ and preserves the receiver's link $b\to_r a$ as the causal path $b\Rightarrow a$. Similarly, model $c \to z \to a,b$ from Figure~\ref{fig:debunk_nolink}(d) is consistent with $P|\Omega_s$ and preserves both $c\to_r a$ and $c\to_r b$. Thus, each link in $M_r$ is individually compatible with $P|\Omega_s$, although requiring different models.
	
	Furthermore, no single model consistent with $P|\Omega_s$ preserves all three links jointly. In particular, preserving $b\Rightarrow a$ requires $b \to z \to a$, whereas preserving $c\Rightarrow b$ requires $c \to z \to b$. The former requires $b\to z$, while the latter requires $z\to b$, hence it is impossible to attain both in a single DAG. Hence $M_r$ has no consistent extension to $\Omega_s$ and is therefore debunked, even though none of its links are debunked.
	
	The example illustrates that link-level debunking is sufficient but not necessary for model-level debunking. A link is individually debunked only if every model consistent with the expanded data rules it out. By contrast, the receiver's entire model is debunked whenever no single consistent model preserves all of its links simultaneously.
\end{example}

\begin{example}[Debunking with an inconsistent model] \label{exp:decept}
	Suppose the true model is as in Figure~\ref{fig:decept}(a), and the receiver's model is $w \to_r x \leftarrow_r y$, depicted in Figure~\ref{fig:decept}(b). This model can be debunked by revealing just one variable, $z$, which is an obvious cause of $y$ given $x$. However, then both $w,x,y$ and $x,y,z$ constitute V-structures, with the former implying $x \leftarrow y$ and the latter implying $x \to y$, as depicted in Figure~\ref{fig:decept}(c). These contradicting implications mean that there is no consistent model for $\Omega = \{w,x,y,z\}$. If the sender is required to propose a consistent model, then to debunk the receiver's model they must also disclose confounder $c$ in addition to the obvious cause $z$ and propose the true model from Figure~\ref{fig:decept}(a).
\end{example}
\begin{figure}
	\centering
	
	\begin{minipage}[b]{0.32\textwidth}
		\centering
		
		\includegraphics[scale=0.27, trim={15pt 0 10pt 0}, clip]{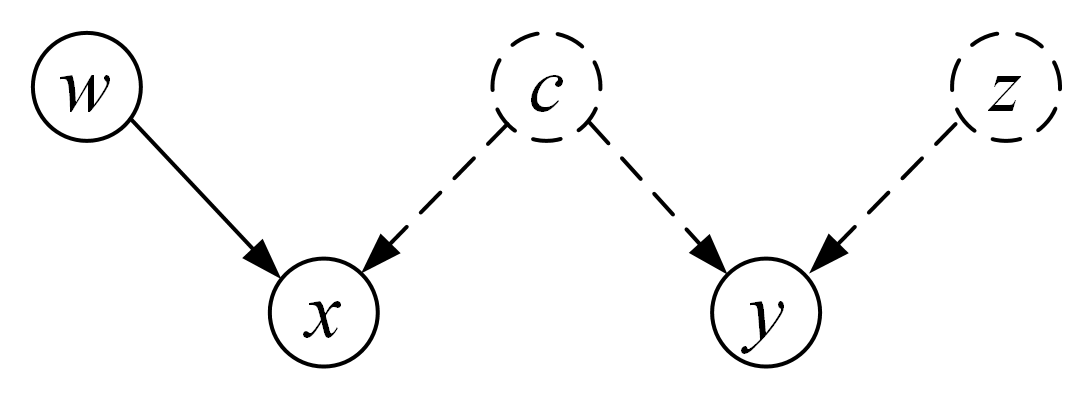}\\
		(a) $M_t$\\
	\end{minipage}
	\hfill
	\begin{minipage}[b]{0.32\textwidth}
		\centering
		
		\includegraphics[scale=0.27, trim={15pt 0 10pt 0}, clip]{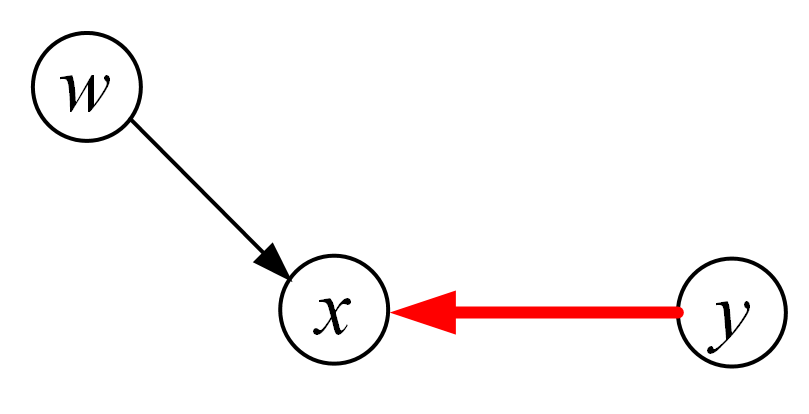}\\
		(b) $M_r$\\
	\end{minipage}
	\hfill
	\begin{minipage}[b]{0.32\textwidth}
		\centering
		
		\includegraphics[scale=0.27, trim={15pt 0 10pt 0}, clip]{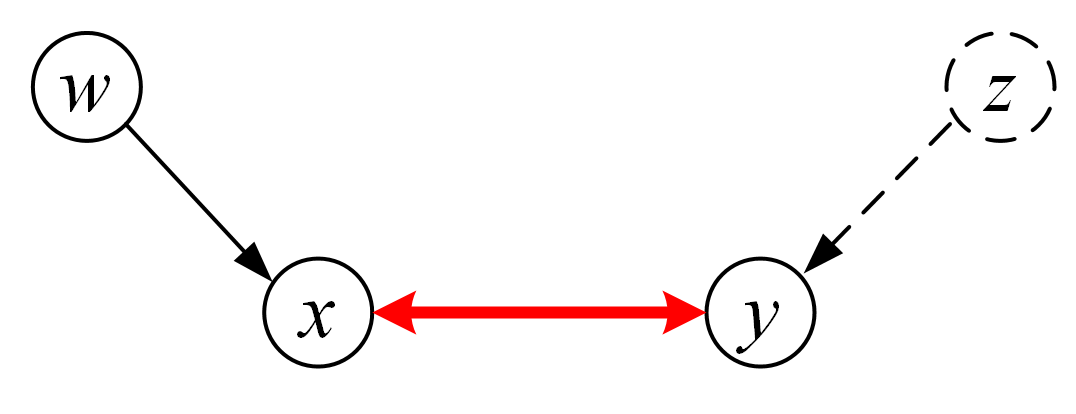}\\
		(c) $M_s$\\
	\end{minipage}
	
	\caption{Debunking may not lead to persuasion. \label{fig:decept}}
\end{figure}

\begin{example}[Nitpicking not always possible] \label{exp:nitpick_no}
	Suppose the true model is as shown in Figure~\ref{fig:nitpick_no}(a), and the na{\"i}ve receiver's subjective model is as in Figure~\ref{fig:nitpick_no}(b). The receiver correctly believes $x \leftarrow_r y$ but incorrectly believes $a\to_r y$. The sender would like to persuade the receiver that $x \to_s y$. However, in this case, the targeted link $a \leftarrow_t y$ and the link of interest $x \leftarrow_t y$ both trace back to the same V-structure $c,y,b$. Thus, to debunk $a\to_r y$, the sender has to reveal $b$ and $c$. But revealing $b$ and $c$ also makes $x \to_s y$ inconsistent with the data, so deception via nitpicking is not possible in this case.
\end{example}
\begin{figure}
	\centering
	
	\begin{minipage}[b]{0.45\textwidth}
		\centering 
		\includegraphics[scale=0.3]{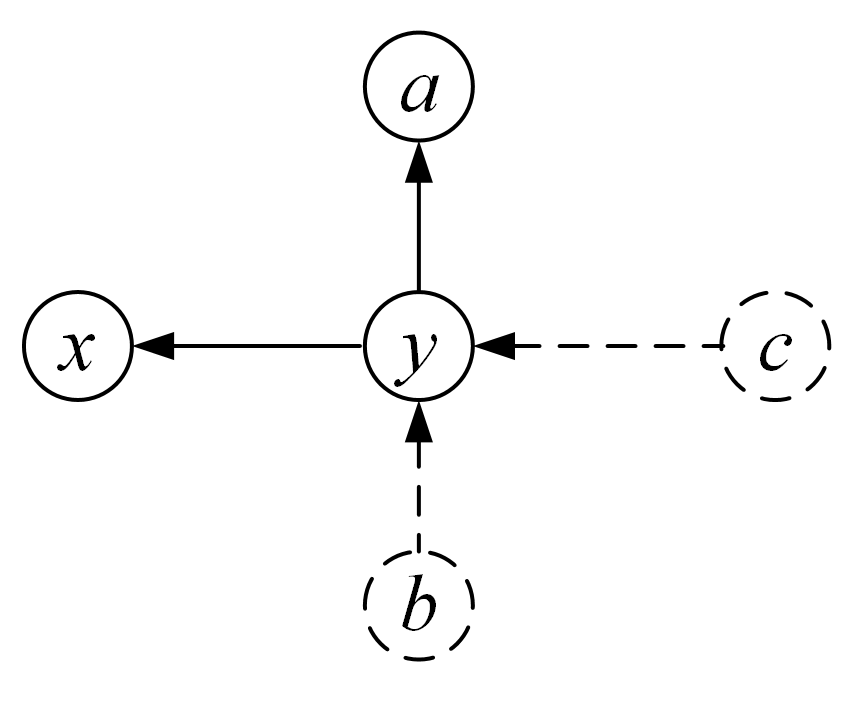}\\
		(a) $M_t$ \\
	\end{minipage}
	\begin{minipage}[b]{0.45\textwidth}
		\centering 
		\includegraphics[scale=0.3]{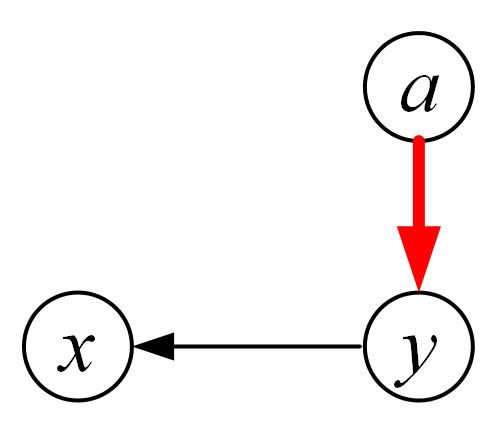}\\
		(b) $M_r$ \\
	\end{minipage}
	
	\caption{Example: persuasion by nitpicking not possible. \label{fig:nitpick_no}}
\end{figure}

\subsection{Dissuasion from a pre-existing model}

\begin{example}[Polio and sugar---dissuading of a link] \label{exp:polio}
	In the beginning of the 20th century, a major outbreak of localized paralytic polio epidemics led to over 27,000 cases and more than 6,000 deaths in the US. Due to recurring summer epidemics, by the 1940s-1950s polio was widely feared as a devastating threat \citep{mehn14}.
	During that period, Dr. Benjamin Sandler was a strong proponent of the theory that refined sugars cause polio in children. His advocacy resulted in a substantial drop in sales of ice cream, cookies and Coca Cola in Asheville, NC, where he practiced \citep{williams2013}. 
	His theory was based on the observation that polio epidemics occur in summers, when children increase their sugar intake (ice cream in particular). In reality, as it was later discovered, polio spread primarily through fecal-oral transmission facilitated by close contact and poor hygiene.\footnote{``Clinical Overview of Poliomyelitis'' CDC, May 9, 2024. URL: \url{https://www.cdc.gov/polio/hcp/clinical-overview/index.html}, retrieved April 3, 2026.} 
	The correlation observed by Sandler had warm weather as the confounding variable---it led to increased social interactions, particularly in swimming pools and rivers (which fostered polio transmission), as well as increased ice cream consumption.
\end{example}

\begin{example}[Dissuading a misled na{\"i}ve receiver] \label{exp:dissuade_2}
	To see how the sender can rule out a link between $x$ and $y$ when $x \leftarrow_r y$ and $x \Rightarrow_t y$, see Figure~\ref{fig:dissuade_2}. The true model is presented in panel (a). The na{\"i}ve receiver's model in panel (b) has defective link $x \leftarrow_r y$, whereas in the true model $x \Rightarrow_t y$. The sender can propose the model in Figure~\ref{fig:dissuade_2}(c), debunking the na{\"i}ve receiver's defective model: V-structure $b,y,a$ immediately implies $b \to_s y \leftarrow_s a$, which is incompatible with $x \Leftarrow_s y$ and $b$ being only a proxy for this channel. The sender can then instill the belief that $x \leftarrow_s b \to_s y$, contrary to the true model, since this model is consistent with $P|\Omega_s$ despite being defective.
\end{example}
\begin{example}[Dissuading a correct na{\"i}ve receiver] \label{exp:dissuade_1}
	To see how the sender can rule out a link between $x$ and $y$ when $x \leftarrow_r y$ and $x \Leftarrow_t y$, see Figure~\ref{fig:dissuade_1}. The true model is presented in panel (a) and is simple and rich. The na{\"i}ve receiver's model in panel (b) has defective link $x \to_r a$, whereas in the true model $x \leftarrow_t a$. The sender can exploit this and reveal variables $b,d$, proposing the model in panel (c), which is also defective but compatible with the data $P|\Omega_s$. It debunks the na{\"i}ve receiver's defective model and instills the belief that $x \leftarrow_s d \to_s y$, contrary to the true model.
\end{example}
\begin{figure}
	\centering
	
	\begin{minipage}[b]{0.32\textwidth}
		\centering
		
		\includegraphics[scale=0.225]{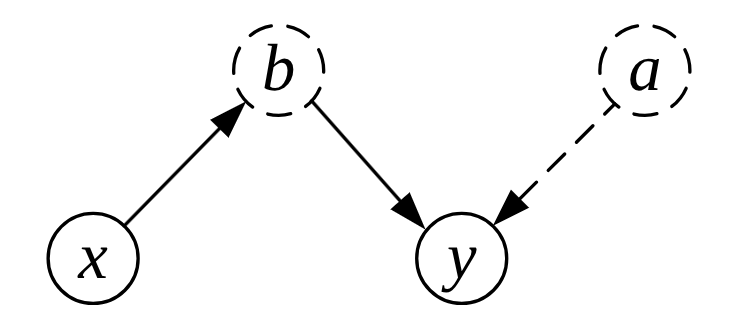}\\
		(a) $M_t$\\
	\end{minipage}
	\hfill
	\begin{minipage}[b]{0.32\textwidth}
		\centering
		
		\includegraphics[scale=0.225]{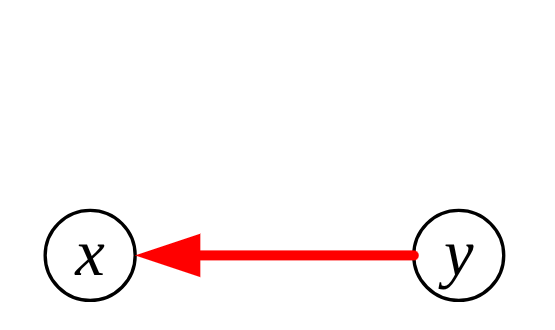}\\
		(b) $M_r$\\
	\end{minipage}
	\hfill
	\begin{minipage}[b]{0.32\textwidth}
		\centering
		
		\includegraphics[scale=0.225]{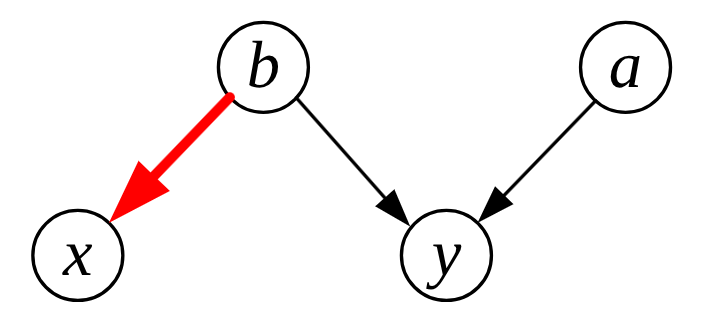}\\
		(c) $M_s$\\
	\end{minipage}
	
	\caption{Ruling out a defective link: $x \leftarrow_r y$, $x \Rightarrow_t y$. \label{fig:dissuade_2}}
\end{figure}
\begin{figure}
	\centering
	
	\begin{minipage}[b]{0.32\textwidth}
		\centering
		
		\includegraphics[scale=0.3]{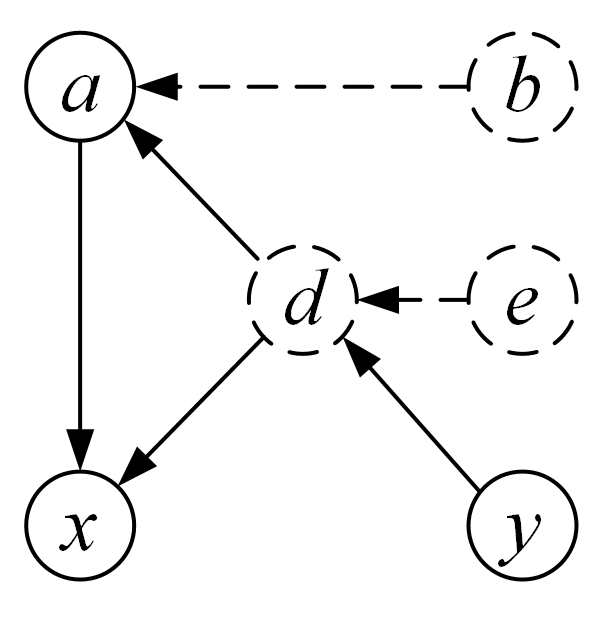}\\
		(a) $M_t$\\
	\end{minipage}
	\hfill
	\begin{minipage}[b]{0.32\textwidth}
		\centering
		
		\includegraphics[scale=0.3]{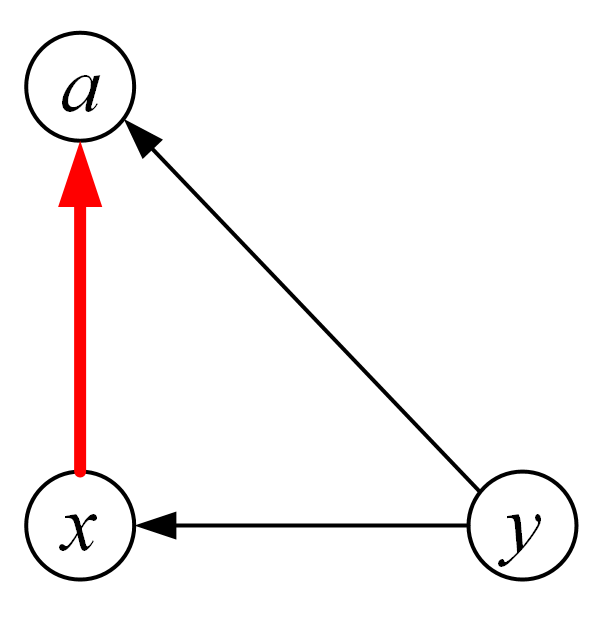}\\
		(b) $M_r$\\
	\end{minipage}
	\hfill
	\begin{minipage}[b]{0.32\textwidth}
		\centering
		
		\includegraphics[scale=0.3]{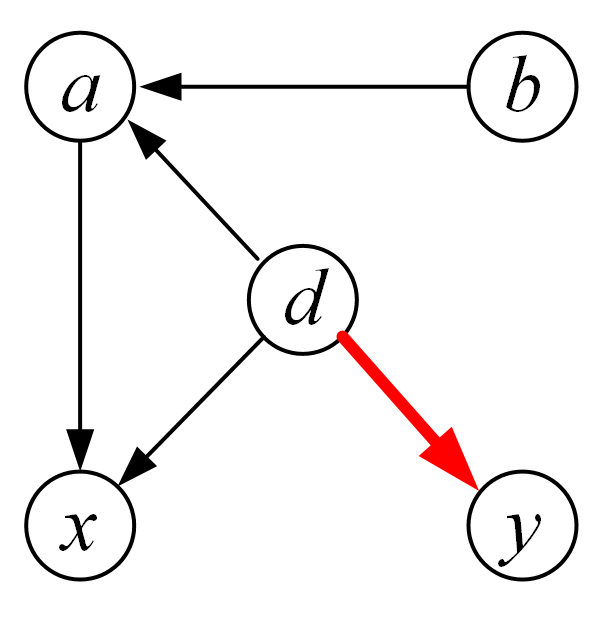}\\
		(c) $M_s$\\
	\end{minipage}
	
	\caption{Ruling out a correct link by nitpicking: $x \leftarrow_r y$, $x \Leftarrow_t y$. \label{fig:dissuade_1}}
\end{figure}

\end{document}